\pgfplotsset{compat=1.16}
\definecolor{revcolor}{rgb}{.7,0,0}
\newcommand{\rev}[1]{{#1}}
\newcommand{\revspace}[1]{}
\pgfplotsset{
    discard if/.style 2 args={
        x filter/.code={
            \edef\tempa{\thisrow{#1}}
            \edef\tempb{#2}
            \ifx\tempa\tempb
                
            \fi
        }
    },
    discard if not/.style 2 args={
        x filter/.code={
            \edef\tempa{\thisrow{#1}}
            \edef\tempb{#2}
            \ifx\tempa\tempb
            \else
                
            \fi
        }
    }
}
\titlespacing*{\subsection}{0pt}{6pt}{3pt}
\newcommand\blankfootnote[1]{%
  \let\svthefootnote\thefootnote%
  \let\thefootnote\relax\footnotetext{#1}%
  \let\thefootnote\svthefootnote%
}
\newcommand\footnoteref[1]{\protected@xdef\@thefnmark{\ref{#1}}\@footnotemark}
\newcommand{\Real}{\mathop{\mathrm{Re}}\nolimits}
\newcommand{\E}{{\mathbb{E}}}
\newcommand{\eps}{{\varepsilon}}        
\newcommand{\fr}{{\mathcal F}}
\newcommand{\sign}{\operatorname{sign}}
\newcommand{\NN}{{\mathbb{N}}}
\newcommand{\complex}{\mathbb{C}}
\newcommand{\reals}{\mathbb{R}}
\newcommand{\Bs}{{\mathcal{B}}}
\newcommand{\Ss}{{\mathcal{S}}}
\newcommand{\abs}[1]{{\bigl\lvert #1\bigr\rvert}}
\newcommand{\subgaussnorm}[1]{{\tau\left({#1}\right)}}
\newcommand{\subexpnorm}[1]{{\theta\left({#1}\right)}}
\newtheorem{theorem}{Theorem}
\newtheorem{lemma}{Lemma}
\newtheorem{cor}{Corollary}
\newtheorem{remark}{Remark}
\newtheorem{definition}{Definition}
\newcommand{\numUsers}{{K}}
\newcommand{\indexUsers}{{k}}
\newcommand{\numChannelUses}{{M}}
\newcommand{\indexChannelUses}{{m}}
\newcommand{\channelVector}{{H}}
\newcommand{\noiseVector}{{N}}
\newcommand{\normalizedMessages}{{X}}
\newcommand{\messagesMatrix}{{Q}}
\newcommand{\rxVector}{{Y}}
\newcommand{\randomBaseVector}{{R}}
\newcommand{\transpose}[1]{{{#1}^T}}
\newcommand{\fadingTransformMatrix}{{A}}
\newcommand{\noiseTransformMatrix}{{B}}
\newcommand{\vectorNorm}[1]{\| {#1} \|_2}
\newcommand{\generalIndexOne}{{i}}
\newcommand{\generalIndexTwo}{{j}}
\newcommand{\hwEffectiveMatrix}{{C}}
\newcommand{\Expectation}{{\mathbb{E}}}
\newcommandx{\absolute}[3][1=\left, 2=\right]{#1 | #3 #2 |}
\newcommand{\tail}{{\varepsilon}}
\newcommand{\operatornorm}[1]{\| {#1} \|}
\newcommand{\frobeniusnorm}[1]{\| {#1} \|_F}
\newcommand{\Probability}{{\mathbb{P}}}
\newcommand{\powerconstraint}{{P}}
\newcommand{\mgfVariable}{\lambda}
\newcommand{\hwDiagonalSum}{{\Sigma_1}}
\newcommand{\hwNondiagonalSum}{{\Sigma_2}}
\newcommand{\auxVariable}{{c}}
\newcommand{\generalUpperBound}{{\tilde{L}}}
\newcommand{\generalSumUpperBound}{{\tilde{F}}}
\newcommand{\identityMatrix}{{\mathbf{id}}}
\newcommand{\trace}{{\mathrm{tr}}}
\newcommand{\matrixOneEntry}[2]{{E_{{#1}, {#2}}}}
\newcommand{\unitaryFadingApprox}{A_U}
\newcommand{\fadingApproxError}{{\eta}}
\newcommand{\uncorrelatedApprox}{A_i}
\newcommand{\errorProbOperatorNormTerm}{L}
\newcommand{\errorProbFrobeniusNormTerm}{F}
\newcommand{\errorProbDependenceTerm}{D}
\newcommand{\transmitPower}{{a}}
\newcommand{\generalRandomVector}{{X}}
\newcommand{\generalDeterministicVector}{{x}}
\newcommand{\gaussianRandomVector}{{g}}
\newcommand{\subgaussBound}{{K}}
\newcommand{\rvDimension}{{n}}
\newcommand{\rvDimIndex}{{k}}
\newcommand{\rvDimIndexAlt}{{\ell}}
\newcommand{\generalTransformMatrix}{{A}}
\newcommand{\mvNormal}[2]{{\mathcal{N}({#1},{#2})}}
\newcommand{\innerprod}[2]{{\left\langle{#1}, {#2}\right\rangle}}
\newcommand{\euclidnorm}[1]{\left\| {#1} \right\|_2}
\newcommand{\generalTransformMatrixSingular}{{s}}
\newcommand{\gaussianRandomVectorTransformed}{{h}}
\newcommand{\subgaussdefInfVar}{{t}}
\newcommand{\subgaussdefUnitVector}{{a}}
\newcommand{\sphere}[1]{{S^{{#1}-1}}}
\newcommand{\operatornormBound}{A_\mathrm{op}}
\newcommand{\frobeniusnormBound}{A_\mathrm{F}}
\newcommand{\identityFunction}{{\mathbf{id}}}
\newcommand{\indicator}[1]{\mathbf{1}_{#1}}
\newcommand{\generalmatrixOne}{{X}}
\newcommand{\generalMatrixTwo}{{Y}}
\newcommand{\subgaussvariable}{{\xi}}
\newcommand{\generalVector}{{v}}
\newcommand{\absoluteConstantOne}{c}
\newcommand{\absoluteConstantTwo}{C}
\newcommand{\fadingBlockLength}{\beta}
\newcommand{\middletonImpulsive}{\mathbf{A}}
\newcommand{\middletonRatio}{\mathbf{\Gamma}}
\newcommand{\middletonIntermediateRV}{\mathbf{m}}
\newcommand{\noisePower}{P_N}
\newacronym{svm}{SVM}{Support Vector Machine}
\newacronym{ml}{ML}{Machine Learning}
\newacronym{ota}{OTA}{Over-the-Air}
\newacronym{fl}{FL}{Federated Learning}
\newacronym{vfl}{VFL}{Vertical Federated Learning}
\newacronym{dfa}{DFA}{Distributed Function Approximation}
\newacronym{csi}{CSI}{channel state information}
\newacronym{tdma}{TDMA}{Time Division Multiple Access}
\newacronym{awgn}{AWGN}{Additive White Gaussian Noise}
\newacronym{irs}{IRS}{Intelligent Reflective Surface}
\newacronym{mimo}{MIMO}{multiple-input multiple-output}
\begin{document}
\title{Over-The-Air Computation in Correlated Channels}

\author{
  \IEEEauthorblockN{
    Matthias Frey\IEEEauthorrefmark{1},
    Igor Bjelakovi\'c\IEEEauthorrefmark{2} %
    and %
    S\l awomir~Sta\'{n}czak\IEEEauthorrefmark{1}\IEEEauthorrefmark{2}\\ %
  } %
  \IEEEauthorblockA{
    \IEEEauthorrefmark{1}Technische Universität Berlin, Germany\\
    \IEEEauthorrefmark{2}Fraunhofer Heinrich Hertz Institute, Berlin, Germany 
  }%
  \vspace*{-2em}
}

\maketitle
\blankfootnote{
Part of this work was presented in~\cite{bjelakovic2019distributed} at the 57th Annual Allerton Conference on Communication, Control, and Computing, Sept. 24-27, 2019, Allerton Park and Retreat Center, Monticello, IL, USA.

Part of this work has been presented in~\cite{frey2021over} at the IEEE 2020 Information Theory Workshop (ITW).

This work was supported by the German Research Foundation (DFG) within their priority programs SPP 1798 ``Compressed Sensing in Information Processing'' and SPP 1914 "Cyber-Physical Networking", as well as under grant STA 864/7.

This work was partly supported by a Nokia University Donation. We gratefully acknowledge the support of NVIDIA Corporation with the donation of the DGX-1 used for this research.
}
\begin{abstract}
\gls{ota} computation is the problem of computing functions of distributed data without transmitting the entirety of the data to a central point. By avoiding such costly transmissions, \gls{ota} computation schemes can achieve a better-than-linear (depending on the function, often logarithmic or even constant) scaling of the communication cost as the number of transmitters grows. Among the most common functions computed \gls{ota} are linear functions such as weighted sums. In this work, we propose and analyze an analog \gls{ota} computation scheme for a class of functions that contains linear functions as well as some nonlinear functions such as $p$-norms of vectors. We prove error bound guarantees that are valid for fast-fading channels and all distributions of fading and noise contained in the class of sub-Gaussian distributions. This class includes Gaussian distributions, but also many other practically relevant cases such as Class A Middleton noise and fading with dominant line-of-sight components. In addition, there can be correlations in the fading and noise so that the presented results also apply to, for example, block fading channels and channels with bursty interference. We do not rely on any stochastic characterization of the distributed arguments of the \gls{ota} computed function; in particular, there is no assumption that these arguments are drawn from identical or independent probability distributions. Our analysis is nonasymptotic and therefore provides error bounds that are valid for a finite number of channel uses. OTA computation has a huge potential for reducing communication cost in applications such as Machine Learning (ML)-based distributed anomaly detection in large wireless sensor networks. We illustrate this potential through extensive numerical simulations.
\end{abstract}

\section{Introduction}
\label{sec:intro}

Due to the rapid increase in the number of wireless devices and the continuous emergence of new wireless applications, the available radio frequency spectrum is becoming increasingly scarce. Since the demand for wireless connectivity is expected to continue at an even faster pace in the future, we will face the spectrum crunch if we do not develop adequate solutions, especially in the sub-6 GHz frequency range. Exploiting new frequency bands beyond the 6 GHz limit alone cannot solve the spectrum crunch problem given the  massive deployment of IoT devices that is planned in coming years. In addition, energy efficiency is increasingly becoming a key issue. Indeed, when gathering data from massively distributed IoT devices, the scaling laws for capacity and energy are relevant and need to be improved, otherwise system performance may be severely degraded~\cite{gupta2000capacity}. Significant improvements in scaling laws can be achieved by abandoning the philosophy of strict separation of communication and application-specific computation. OTA computation schemes, such as the one considered in this paper, compute a function of data distributed in a wireless network at a central receiver without reconstructing the data in its entirety. The problem of efficient function computation is also becoming increasingly important in the context of research on future 6G networks~\cite{shi2020communication}. One example of promising OTA applications is the training of machine learning (ML) models in wireless (sensor) networks, since it only requires the computation of some features of the sensor data -- in contrast, the data of the individual sensors do not need to be decoded. In fact, if the goal for a receiver is to compute a function 
$f:\reals^\numUsers\to\reals$ of some $\numUsers$ variables, rather than fully reconstructing all individual variables, then a strategy based on the use of orthogonal channels is in general highly suboptimal. In particular, depending on the function being considered, the OTA computation can achieve \emph{a more favorable scaling law for the capacity} than traditional separation-based approaches~\cite{nazer2007computation}.

\subsection{Prior Work}
\label{sec:prior_work}

Analog uncoded approximation of functions first appeared in~\cite{gastpar2003source}. This work assumes known source distributions and additive white Gaussian noise channels without fading for the achievability theorems and the class of approximated functions is constrained to the linear case. \cite{goldenbaum2013robust,goldenbaum2013harnessing,goldenbaum2014nomographic} have picked up this idea and extended to a more general class of functions. These works consider the noiseless case as well as the case with noise, but without fast fading, providing asymptotic error bounds. 

The authors of~\cite{nazer2007computation} have introduced distributed computation of functions with coding and proposed applications in network coding. Several more recent works (e.g., \cite{zhan2009mimo,ordentlich2011practical,nazer2011compute,nazer2016expanding,goldenbaum2016harnessing}) have refined, expanded and applied this idea.
These works focus on the application of network coding and therefore the case in which the same function (an addition in a finite field) is to be computed repeatedly. The coding approach used ensures computation of the discrete functions with an arbitrarily small error as long as the computation rate is not too high.

\cite{goldenbaum2014channel} presents pre- and post-processing schemes and an asymptotic analysis for the approximation of functions over a fast fading channel with noise in the case of multiple receive antennas. The work covers the case of no instantaneous~\gls{csi} at the transmitter or receiver as well as two different types of instantaneous~\gls{csi} at the transmitter. Reference~\cite{kiril} povides pre- and post-processing schemes with asymptotic guarantees for approximating linear functions over fast-fading channels with \gls{csi} at the transmitter. The authors of~\cite{liu2020over} derive theoretical bounds on the mean squared error in \gls{ota} function computation in a fast-fading scenario with \gls{csi} at the transmitter. The work also considers the case of multi-antenna transceivers and provides empirical results along with the theoretical bounds. In~\cite{dong2020blind}, the authors approach the \gls{ota} computation problem without explicit channel estimation under the assumption of slow fading. The work also considers intersymbol interference and provides an asymptotic theoretical analysis as well as numerical results.

The direct application of \gls{ota} computation techniques to distributed gradient descent has received a lot of attention recently since this can be used to solve the empirical risk minimization problem for \gls{ml} models such as neural networks in the case of distributed training data without having to collect the training data at a central point. \cite{amiri2020machine,ahn2019wireless} propose to extend the \gls{fl} paradigm~\cite{mcmahan2017communication,konecny2016federated} to make use of \gls{ota} computation over wireless channels and provide theoretical analysis along with empirical results to this end. There are also extensions of this idea to channels with fading channel information at either the transmitter or receiver~\cite{zhu2019broadband,amiri2020federated,yang2020federated,sery2020analog,sun2019energy}, often taking additional aspects into consideration such as differential privacy~\cite{seif2020wireless} and multi-antenna scenarios~\cite{amiri2019collaborative}. 

\subsection{Overview of Nomographic Functions}
\label{sec:nomographic}
Generally, we expect functions of the form
\begin{equation*}
f(s_1, \ldots, s_K)= F\left(  \sum_{k=1}^K f_k(s_k)     \right ),
\end{equation*}
called \emph{nomographic functions},
to be amenable to distributed approximation over a wireless channel in
which a superposition of signals results in a noisy sum of the
transmitted signals to arrive at the receiver, and in fact it turns
out that every multivariate real function $f$ has such a
representation~\cite{buck1976approximate}. However, even extremely
weak noise in the individual components can have an unpredictable
impact on the overall error if such representations are
used. Therefore, it is necessary to introduce additional requirements
on $f_1, \dots, f_K$ and $F$. It is known~\cite{kolmogorov1957representation} that every continuous function mapping from $[0,1]^K$ to $\reals$ can be represented as a sum of no more than $2K+1$ nomographic functions with continuous representations. Another result worth noting in this context is that nomographic functions with continuous representations are nowhere dense in the space of continuous functions~\cite{buck1982nomographic} and thus the representation as a sum of nomographic functions is really necessary.
However, even with suitable continuous representations available, it is hard to control the impact of the channel noise. We therefore identify a subclass of the nomographic functions for which the impact of noise and fading can be quantitatively controlled. This class contains many functions of practical interest such as $p$-norms and weighted sums, which are of particular importance in \gls{ml} applications.

\subsection{Wireless Channel Models}
\label{sec:intro-channel-models}
 In theoretical works, it is of particular importance that the considered channel model is sufficiently general so that the assumptions made are met in relevant practical scenarios. The commonly considered Gaussian fading channel is an approximation that is often adopted because it is relatively easy to treat and is quite close to reality in many scenarios of interest.

However, it is well known and has been confirmed in extensive measurement campaigns~\cite{middleton1993elements,blackard1993measurements,middleton1999non} that there are many natural and artificial sources of noise that do not conform to the assumption of being i.i.d. Gaussian such as automobile ignition, power line emissions, atmospheric disturbances or interfering wireless communications~\cite{middleton1993elements}.

Moreover, common arguments that the fast fading in wireless channels is Gaussian employ the Central Limit Theorem~\cite[Section 2.4.1]{yin2016propagation} and therefore assume a large number of multipath components, which is not always the case. In scenarios with limited mobility, it is also possible that the fast fading realizations are not independent between channel uses. In the case of the \gls{ota} computation scheme proposed in~\cite{bjelakovic2019distributed}, it can deteriorate performance and therefore, it is desirable to be able to quantify this deterioration.

In this work, we therefore consider a channel model that encompasses a large class of possible probability distributions of fading and noise, the class of sub-Gaussian distributions. The analysis provided is valid also for fading and noise exhibiting an arbitrary correlation structure, with practically useful bounds in many relevant cases. In Section~\ref{sec:subgauss}, we define precisely what sub-Gaussian means and in Section~\ref{sec:systemmodel-discussion}, we give examples of practically relevant cases that are covered by our system model assumptions.

\subsection{Contributions of this Paper}
\label{sec:contributions}
The main contributions of this work are summarized as follows:

\begin{enumerate}
 \item We propose a scheme for analog \gls{ota} computation without instantaneous \gls{csi} that does not assume any probability distribution on the data. Therefore, it works equally well for independently distributed data as it does for arbitrarily correlated values.
 \item We provide a nonasymptotic analysis of the error that is valid for a large class of possible distributions of fading and noise. The distributions covered include many types of non-Gaussian distributions and distributions with correlation over time. The scheme itself does not depend on the correlation structure, but the error analysis requires different tools than in the uncorrelated case.
 \item Our scheme can deal with a larger class of functions than previous works with theoretically proven bounds. In particular, we do not require linearity of the function to be approximated, which is, e.g., demonstrated by the fact that we can compute $p$-norms \gls{ota}.
 \item We propose applications of our scheme to the \gls{ota} computation of both regressors and classifiers in \gls{fl} and validate the proposed \gls{ota} computation schemes and the envisioned applications in \gls{ml} with extensive numerical simulations for the case of a binary classification problem.
\end{enumerate}

%
%
\section{System Model}

%
%
\subsection{Sub-Gaussian Random Variables}
\label{sec:subgauss}
We begin with a short overview of the relevant definitions and properties of sub-Gaussian random variables.
More on this topic can be found in Section~\ref{app:sub-exp-sub-gauss} of the Supplement and in \cite{buldygin,wainwright,vershynin}.

For a random variable $X$, we define\footnote{Note that other norms on the space of sub-Gaussian random variables that appear in the literature are equivalent to $\subgaussnorm{\cdot}$ (see, e.g.,~\cite{buldygin}). The particular definition we choose here matters, however, because we want to derive results in which no unspecified constants appear.}
\begin{multline}\label{eq:sub-gauss-norm-def-first}
\subgaussnorm{X} :=  \inf \Big\{t > 0: \forall \lambda \in \reals \\ \mathbb{E}\exp \left( \lambda (X - \mathbb{E} X) \right)   \le \exp \left( \lambda^2 t^2 / 2 \right)      \Big\}.
\end{multline}
 $X$ is called a sub-Gaussian random variable if $\subgaussnorm{X}<\infty $. The function $\subgaussnorm{\cdot}$ defines a semi-norm on the set of sub-Gaussian random variables \cite[Theorem 1.1.2]{buldygin}, i.e., it is absolutely homogeneous, satisfies the triangle inequality, and is non-negative. $ \subgaussnorm{X}=0$ does not necessarily imply $X=0$ unless we identify random variables that are equal almost everywhere.
Examples of sub-Gaussian random variables include Gaussian and bounded random variables.
%
%
\subsection{Channel and Correlation Model}
\label{sec:systemmodel}
\begin{figure}
\begin{tikzpicture}
\coordinate                      (inK)          at (0,0);
\coordinate                      (in2)          at (0,2);
\coordinate                      (in1)          at (0,3);
\node[rectangle,draw]            (encK)         at (1,0) {$E_K^M$};
\node[rectangle,draw]            (enc2)         at (1,2) {$E_2^M$};
\node[rectangle,draw]            (enc1)         at (1,3) {$E_1^M$};
\node[rectangle,draw]            (dec)          at (6.5,1.5) {$D_M$};
\coordinate                      (out)          at (8,1.5);
\node                            (vdots)        at (.9,1) {\Shortstack{. . . . . .}};

\node[circle,inner sep=0pt,draw] (multK)        at (2.5,0)    {$\times$};
\node[circle,inner sep=0pt,draw] (mult2)        at (2.5,2)    {$\times$};
\node[circle,inner sep=0pt,draw] (mult1)        at (2.5,3)  {$\times$};
\node[inner sep=0pt]             (fadingK)      at (2.5,.5)   {$H_K$};
\node[inner sep=0pt]             (fading2)      at (2.5,2.5)  {$H_2$};
\node[inner sep=0pt]             (fading1)      at (2.5,3.5)  {$H_1$};
\coordinate                      (cornerK)      at (3.4,0);
\coordinate                      (corner2)      at (3.4,2);
\coordinate                      (corner1)      at (3.4,3);
\node[circle,inner sep=0pt,draw] (plus)         at (4,1.5) {$+$};
\node[circle,inner sep=0pt,draw] (plusnoise)    at (5,1.5) {$+$};
\node[inner sep=0pt]             (noise)        at (5,2) {$N$};

\draw[->] (inK) -- (encK) node[midway,above] {$s_K$};
\draw[->] (in2) -- (enc2) node[midway,above] {$s_2$};
\draw[->] (in1) -- (enc1) node[midway,above] {$s_1$};

\draw[->] (encK) -- (encK-|multK.west) node[midway,above] {$x_K^M$};
\draw[->] (enc2) -- (enc2-|mult2.west) node[midway,above] {$x_2^M$};
\draw[->] (enc1) -- (enc1-|mult1.west) node[midway,above] {$x_1^M$};

\draw[->] (plusnoise.east) -- (dec) node[midway,above] {$Y^M$};
\draw[->] (dec) -- (out) node[midway,above] {$\tilde{f}$};

\draw[->] (fadingK) -- (multK);
\draw[->] (fading2) -- (mult2);
\draw[->] (fading1) -- (mult1);

\draw[->] (multK) -- (cornerK) -- (plus);
\draw[->] (mult2) -- (corner2) -- (plus);
\draw[->] (mult1) -- (corner1) -- (plus);

\draw[->] (plus) -- (plusnoise);
\draw[->] (noise) -- (plusnoise);
\end{tikzpicture}
\caption{System model.}
\label{fig:system}
\end{figure}
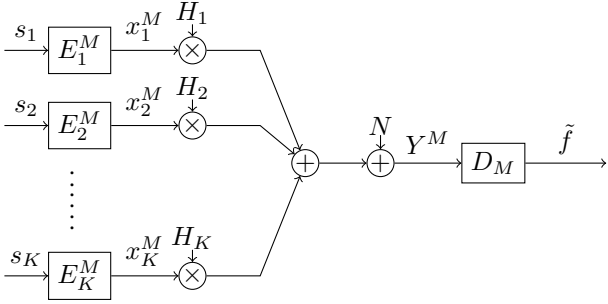

We consider the following channel model with $K$ transmitters and one receiver: For $m=1,\ldots, M$, the channel output at the $m$-th channel use is given by
\begin{equation}\label{eq:channel-model}
Y(m)=\sum_{k=1}^{K} H_k(m)x_k(m)+ N(m).
\end{equation}
Here and hereafter, the notation is defined as follows:
\begin{itemize}
\item $x_k(m)\in \complex$ is a transmit symbol. We assume a peak power constraint $|x_k(m)  |^2 \le P$ for $k=1,\ldots, K$ and $m=1,\ldots, M$.
\item $H_k(m)$, $k=1\ldots, K$, $m=1,\ldots, M$, are complex-valued random variables such that for every $m=1, \ldots ,M$ and $k=1,\ldots,K$, the real part $H_k^r (m)$ and the imaginary part $H_k^i (m)$  of $H_k(m)$ are sub-Gaussian random variables
with mean zero and variance $1$.
\item $N(m)$, $m=1,\ldots, M$, are complex-valued random variables. We assume that the real and imaginary parts $N^{r} (m),N^{i }(m)$ of $N(m)$  are sub-Gaussian random variables  with mean zero for $m=1,\ldots, M$.
\end{itemize}

In order to be able to apply a variation of the Hanson-Wright inequality as a tool, we give the formal description of our dependence model in terms of matrices and vectors with real entries.

We define 
\begin{align}
\label{eq:channelvector}
\channelVector := \transpose{(\channelVector(1), \dots, \channelVector(2\numChannelUses))}
\end{align}
where for $\indexChannelUses = 1, \dots, \numChannelUses$,
\begin{align*}
\channelVector(2\indexChannelUses-1) &:= (H_1^r(\indexChannelUses), \dots, H_K^r(\indexChannelUses))
\\
\channelVector(2\indexChannelUses) &:= (H_1^i(\indexChannelUses), \dots, H_K^i(\indexChannelUses)).
\end{align*}
So $\channelVector$ is the vector of all fading coefficients. Similarly, let 
\begin{align}
\label{eq:noisevector}
\noiseVector := \transpose{(N^r(1), N^i(1), \dots, N^r(\numChannelUses), N^i(\numChannelUses))}
\end{align}
be the vector of all the instances of additive noise. The dependence model we consider is such that there is a vector $\randomBaseVector$ of $(2\numUsers\numChannelUses + 2\numChannelUses)$ independent random variables with sub-Gaussian norm at most $1$ and matrices $\fadingTransformMatrix \in \reals^{2\numUsers\numChannelUses \times (2\numUsers\numChannelUses + 2\numChannelUses)}$ and $\noiseTransformMatrix \in \reals^{2\numChannelUses \times (2\numUsers\numChannelUses + 2\numChannelUses)}$ such that

\[
\channelVector = \fadingTransformMatrix \randomBaseVector,~~~
\noiseVector = \noiseTransformMatrix \randomBaseVector.
\]

Our correlation model captures both correlations between users and in the time domain, but the impact of these two types of correlation on the performance of the proposed scheme is different. For this reason, we need the following definition, which describes a scenario where there can be arbitrary correlation in the time domain but no harmful correlation between different users. In order to be as unrestrictive as possible, the following definition prohibits only correlations between different users at the same complex dimension of the same channel use.

\begin{definition}
\label{def:useruncorrelated}
We say that the fading is \emph{user-uncorrelated} if for every $\indexUsers_1 \neq \indexUsers_2$, $j \in \{i,r\}$ and $\indexChannelUses$, the random variables $H^j_{\indexUsers_1}(\indexChannelUses)$ and $H^j_{\indexUsers_2}(\indexChannelUses)$ are independent.
\end{definition}

In Section~\ref{sec:specialcases}, we discuss the impact of deviations from the user-uncorrelatedness assumption on the performance of the proposed method.

\begin{remark}
\label{remark:uncorrelatedApprox}
Obviously, user-uncorrelated fading can be characterized based on the form of $\fadingTransformMatrix$. If we write
\begin{align*}
\fadingTransformMatrix
=
\begin{pmatrix}
\fadingTransformMatrix^{(1)} \\
\vdots
\\
\fadingTransformMatrix^{(2\numChannelUses)}
\end{pmatrix},
\end{align*}
where for all $\indexChannelUses$, $\fadingTransformMatrix^{(\indexChannelUses)} \in \reals^{\numUsers \times (2\numChannelUses\numUsers + 2\numChannelUses)}$, then $\channelVector = \fadingTransformMatrix \randomBaseVector$ defines user-uncorrelated fading for all $\randomBaseVector$ iff each $\fadingTransformMatrix^{(\indexChannelUses)}$ has at most one nonzero entry per column. This is because $\channelVector(\indexChannelUses) = \fadingTransformMatrix^{(\indexChannelUses)}\randomBaseVector$ and therefore two or more nonzero entries in a column of $\fadingTransformMatrix^{(\indexChannelUses)}$ mean that two or more entries in $\channelVector(\indexChannelUses)$ depend on the same entry in $\randomBaseVector$.
\end{remark}

\subsection{Discussion of the System Model Assumptions}
\label{sec:systemmodel-discussion}
The most distinguishing feature of our channel model compared to assumptions made in prior work on \gls{ota} computation is that we generalize the distribution of the fading and noise to be sub-Gaussian and to feature arbitrary correlations. We argue that this guarantees robustness against departure from standard assumptions such as independent or Gaussian fading or noise. Regarding the chosen dependence model, we remark that in the case of $\randomBaseVector$ distributed i.i.d. standard Gaussian, this amounts to a standard representation of arbitrarily correlated (and thus arbitrarily interdependent) multivariate Gaussian vectors. Therefore, by replacing $\randomBaseVector$ with a vector of independent sub-Gaussian entries, we obtain a straightforward generalization of the Gaussian case, which specializes to arbitrary correlations (although in the non-Gaussian case, not to arbitrary stochastic dependence). In particular, the following important cases are specializations of our channel model:
\begin{itemize}
 \item Perfect Gaussian fast fading with i.i.d. bivariate Gaussian fading and additive white Gaussian noise.
 \item Scenarios where the number of multipath components is not sufficiently large for an appeal to the Central Limit Theorem to argue that the fading is complex Gaussian. For instance, if there is only one multipath component with a length that has small random variations, the resulting complex fading will have a distribution supported on a narrow annulus in the complex plane. Such a distribution is not Gaussian, but sub-Gaussian and therefore covered by our channel model.
 \item Scenarios where due to limited or slow movement of transmitters, receiver and environment, the independence assumption between subsequent realizations of the channel fading is not satisfied or where the diversity in the radio environment is so small that some of the users have correlated channels. One example of a widely used channel model that is a special case of the one considered in this paper is the block fading channel. In this case, we have a correlation of $1$ between fading realizations of the same block and $0$ between fading realizations in different blocks.
 \item Any of the above scenarios where in addition to thermal additive noise, we have interference from users outside the system. E.g., in the case of digital modulated signals, such interference consists of a sequence of transmitted constellation points, which is not Gaussian. Also, such interfering signals are inherently bursty in nature and therefore cannot be argued to be independent between different points in time. However, signals from realistic transmitters are always bounded in amplitude and therefore sub-Gaussian, which means that they are covered by our system model.
 \item Any of various types of artificial and natural interference that is not necessarily Gaussian, but limited in power. \cite{middleton1993elements,blackard1993measurements,middleton1999non} investigate various sources of non-Gaussian interference of this type, both correlated and uncorrelated over time, through theoretical modeling as well as extensive experiments and measurements confirming the theoretical models and the non-Gaussianity of the sources. \cite[Table 2.1]{middleton1993elements} enumerates several examples of this type of interference, including, e.g., solar radiation, automobile ignition and power line EM emissions.
\end{itemize}
The other main difference between our system model and the models used in existing works is that we do not make any assumption on a distribution of the input data at the transmitters. This means in particular that we cover arbitrary dependencies in the input data, which can be very important for applications, because the input data can depend, e.g., on local sensor readings recorded by devices or on training data collected for the same \gls{ml} problem. Therefore, in many relevant application scenarios, it is important that the transmission schemes employed are robust even to high, but unknown levels of correlations in the transmitted data.

\section{Problem Statement}

%
%
\subsection{Distributed Approximation of Functions}
Our goal is to approximate functions  $f: \Ss_1\times \ldots \times \Ss_K\to\reals$  in a distributed setting. The sets $\Ss_1,\ldots \Ss_K\subseteq \reals $ are assumed to be closed and endowed with their natural Borel $\sigma$-algebras 
$\Bs(\Ss_1),\ldots ,\Bs(\Ss_K)$, and we consider the  product $\sigma$-algebra $\Bs (\Ss_1)\otimes \ldots \otimes \Bs(\Ss_K)$ on the set $ \Ss_1\times \ldots \times \Ss_K $. Furthermore, the functions  $f: \Ss_1\times \ldots \times \Ss_K\to\reals$ under consideration are assumed to be measurable in what follows.

\begin{definition}
An admissible \gls{dfa} scheme for $f: \Ss_1\times \ldots \times \Ss_K\to\reals$ with $M$ channel uses, depicted in Fig.~\ref{fig:system}, is a pair $(E^M, D^M)$, consisting of:

\begin{enumerate}
\item A pre-processing function $E^M = (E_1^M, \dots, E_K^M)$, where each $E_k^M$ is of the form
\[E_k^M(s_k)=(E_k(m, s_k, U_k(m) ))_{m=1}^{M}\in \complex^{M} \]
where $U_k(1), \ldots, U_k(M)$ are random variables and the map
\[(s_k, t_1, \ldots, t_M)\mapsto  (E_k(m, s_k,  t_m ))_{m=1}^{M}\in \complex^{M}\]
is measurable for $s_k$ ranging over $\mathcal{S}_k$ and $t_1, \ldots, t_M$ ranging over the same sets as the random variables $U_k(1), \ldots, U_k(M)$.
The encoder $E_k^M$ is subject to the peak power constraint $ |E_k(m, s_k, U_k(m))  |^2 \le P$ for all $k=1,\ldots , K$ and $m=1,\ldots , M$.
\item A post-processing function $D^M$: The receiver is allowed to apply a measurable recovery function $D^M: \complex ^M\to \reals$ upon observing the output of the channel.
\end{enumerate}
\end{definition}

So in order to approximate $f$, the transmitters apply their pre-processing maps to
$(s_1,\ldots, s_K)\in \Ss_1\times \ldots \times \Ss_K$
resulting in $E_1^M (s_1), \ldots, E_K^M(s_K)$, which are sent to the receiver using the channel $M$ times. 
The receiver observes the output of the channel and applies the recovery map $D^M$. The whole process defines an estimate $\tilde{f}$ of $f$.

Let $\varepsilon > 0,\delta \in (0,1)$ and $f: \Ss_1\times \ldots \times \Ss_K\to\reals $ be given. We say that $f$ is $\varepsilon$-approximated after $M$ channel uses with confidence level $\delta$ if there is a \gls{dfa} scheme
$(E^M,D^M)$ such that the resulting estimate $\tilde{f}$ of $f$ satisfies
\begin{equation}\label{eq:eps-delta-approx}
\mathbb{P}( |    \tilde{f} (s^K)- f(s^K)       |\ge \eps      )\le \delta
\end{equation}
for all $s^K:= (s_1, \ldots , s_K) \in \Ss_1\times \ldots \times \Ss_K $.
Let $M(f, \varepsilon, \delta)$ denote the smallest number of channel uses such that there is an approximation scheme $(E^M, D^M)$ for $f$ satisfying (\ref{eq:eps-delta-approx}). We call $M(f, \varepsilon, \delta)$ the communication cost for approximating a function $f$
with accuracy $\varepsilon$ and confidence $\delta$.

%
%
%
\subsection{The class of functions to be approximated}
A measurable function  $f:\Ss_1\times \ldots \times \Ss_K \to \reals$ is called a \emph{generalized linear function} if there are bounded measurable functions $(f_k)_{k \in \{1, \ldots , K\}}$, with
$
f(s_1, \ldots , s_K)= \sum_{k=1}^K f_k(s_k),
$
for all $(s_1,\ldots , s_K)\in \Ss_1 \times \ldots \times \Ss_K$. The set of generalized linear functions from $\Ss_1 \times \ldots \times \Ss_K\to \reals$ is denoted by $\fr_{K, \textrm{lin}}$.
Our main object of interest will be the following class of functions.
%
%
\begin{definition}
\label{def:Fmon}
A measurable function $f: \Ss_1\times \ldots \times \Ss_K\to \reals$ is said to belong to $\fr_{\textrm{\textrm{mon}}}$ if there exist bounded and measurable functions
$(f_k)_{k \in \{1, \ldots , K\}}$, a measurable set $D\subseteq \reals$ with the property $f_1(\Ss_1)+\ldots + f_K(\Ss_K)\subseteq D$, a measurable function $F:D\to \reals$ such that
for all $(s_1, \ldots , s_K)\in \Ss_1\times \ldots \times \Ss_K$ we have
\begin{equation}\label{eq:nomographic-def}
f(s_1, \ldots, s_K)= F\left(  \sum_{k=1}^K f_k(s_k)     \right ),
\end{equation}
and there is a strictly increasing function $\Phi : [0, \infty) \to [0, \infty)$  with $\Phi(0)=0$ and
\begin{equation}\label{eq:monotone-domination}
| F(x)-F(y)|\le \Phi( | x-y | )
\end{equation}
for all $x,y \in D$. We call the function $\Phi$ an \emph{increment majorant} of $f$.
\end{definition}
Some examples of functions in $\fr_{\textrm{mon}}$ are:
\begin{enumerate}
\item Obviously, all $f\in\fr_{K, \textrm{lin}}$ belong to $ \fr_{\textrm{mon}}$.
\item For any $f\in\fr_{K, \textrm{lin}}$ and $B$-Lipschitz function $F:\reals \to \reals$ we have $F\circ f\in \fr_{\textrm{mon}}$ with
$\Phi: [0, \infty) \to [0, \infty)$, $x\mapsto Bx$.
\item If $f \in\fr_{K, \textrm{lin}}$ and $F$ is $(C, \alpha)$-Hölder continuous, i.e., for all $x,y$ in the domain of $F$,
$
\abs{F(x)-F(y)} \leq C\abs{x-y}^\alpha,
$
then $F \circ f \in \fr_{\textrm{mon}}$ with $\Phi: x \mapsto Cx^\alpha$.
\item For any $p\ge 1$ and $\Ss_1, \ldots , \Ss_K$ compact, $|| \cdot ||_p \in \fr_{\textrm{mon}}$. In this example we have $f_k(s_k)=| s_k |^p$, $k=1,\ldots , K$,
$F: [0, \infty) \to [0, \infty)$, $x \mapsto x^{\frac{1}{p}}$, and $F=\Phi$.\\
This can be seen as follows. We have to show that for all nonnegative $x,y\in \reals$ and $p\ge 1$ we have
\begin{equation}
\label{eq:norm-mon}
| x^{\frac{1}{p}}   -y^{\frac{1}{p}}   |\le |  x-y   |^{\frac{1}{p}}.
\end{equation}
We can assume w.l.o.g. that $x<y$ holds. Then since
\begin{equation*}
| x^{\frac{1}{p}}   -y^{\frac{1}{p}}   | = |y|^{\frac{1}{p}} \left( 1- \left(  \frac{x}{y}\right)^{\frac{1}{p}}     \right)      
\end{equation*}
it suffices to prove that for all $a\in [0,1]$ and $p\ge 1$ we have
$
1-a^{\frac{1}{p}}\le \left(  1-a     \right) ^{\frac{1}{p}}
$.
Now since
$
a^{\frac{1}{p}} +\left(  1-a     \right) ^{\frac{1}{p}}
\geq a + (1-a) = 1
$
for $a\in [0,1]$ and $p\ge 1$, we can conclude that (\ref{eq:norm-mon}) holds.
\end{enumerate}

Given a function $f \in \fr_{\textrm{mon}}$, we implicitly fix a representation (\ref{eq:nomographic-def}) and define the total spread of  the inner part of $ f\in  \fr_{\textrm{\textrm{mon}}}$ as
\begin{equation}\label{eq:total-inner-spread}
\bar{\Delta}(f):=\sum_{k=1}^K ( \phi_{\max,k}-\phi_{\min,k}),
\end{equation}
along with the $\max $-spread
\begin{equation}\label{eq:max-inner-spread}
\Delta (f):= \max_{1\le k \le K} ( \phi_{\max,k}-\phi_{\min,k}),
\end{equation}
where
 \begin{equation}\label{eq:phi-def-spread}
 \phi_{\min,k}:= \inf_{s \in \mathcal{S}_k} f_k (s), \quad \phi_{\max,k}:=\sup_{s \in \mathcal{S}_k} f_k( s).
 \end{equation}
 We define the relative spread with power constraint $P$ as
 \begin{equation}\label{eq:relative-spread}
 \Delta (f\| P):= P \cdot \frac{\bar{\Delta} (f)}{\Delta (f)}.
 \end{equation}

%
%
\section{Main Result}
We are now in a position to state our main theorem on approximation of functions in $ \fr_{\textrm{\textrm{mon}}}$. We use $\operatornorm{\cdot}$ and $\frobeniusnorm{\cdot}$ to denote the operator and Frobenius norm of matrices, respectively.
\begin{theorem}
\label{th:correlated}
Let $f\in  \fr_{\textrm{\textrm{mon}}}$, $M\in \NN$, and the power constraint $P \in \reals_{+} $ be given. Let $\Phi$ be an increment majorant of $f$. Assume the fading and noise are correlated as determined by matrices $\fadingTransformMatrix$ and $\noiseTransformMatrix$. Let $\uncorrelatedApprox \in \reals^{2\numChannelUses\numUsers \times (2\numChannelUses\numUsers + 2\numChannelUses)}$ be a matrix which generates user-uncorrelated fading that approximates $\fadingTransformMatrix$ in the sense that
\[
\operatornorm{(\fadingTransformMatrix+\uncorrelatedApprox)\transpose{(\fadingTransformMatrix-\uncorrelatedApprox)}}
\leq
\fadingApproxError.
\]

Then there exist pre- and post-processing operations such that
\begin{multline}
\label{eq:correlated-statement}
\Probability\left(
  \absolute{\bar{f}-f(s_1, \dots, s_\numUsers)}
  \geq
  \tail
\right)
\\
\leq
\begin{aligned}[t]
&2\exp\left(
-
\frac{\numChannelUses \Phi^{-1}(\tail)^2}{
  16\errorProbFrobeniusNormTerm
  +
  \errorProbDependenceTerm
  +
  4\Phi^{-1}(\tail)
  \errorProbOperatorNormTerm
}
\right)
\\
&+
2\exp\left(
  -
  \frac{\numChannelUses \Phi^{-1}(\tail)^2}{
    256
    \errorProbFrobeniusNormTerm
    +
    32 \Phi^{-1}(\tail)
    \errorProbOperatorNormTerm
  }
\right),
\end{aligned}
\end{multline}
where
\begin{align*}
\errorProbOperatorNormTerm
&=
\left(
  \sqrt{\bar{\Delta}(f)} \operatornorm{\fadingTransformMatrix} + \sqrt{\frac{\Delta(f)}{\powerconstraint}}\operatornorm{\noiseTransformMatrix}
\right)^2
\displaybreak[0] \\
\errorProbFrobeniusNormTerm
&=
\errorProbOperatorNormTerm
\left(
  \sqrt{\frac{\bar{\Delta}(f)}{\numChannelUses}} \frobeniusnorm{\fadingTransformMatrix} + \sqrt{\frac{\Delta(f)}{\powerconstraint\numChannelUses}} \frobeniusnorm{\noiseTransformMatrix}
\right)^2
\displaybreak[0] \\
\errorProbDependenceTerm
&=
\bigg(
  4\sqrt{2\numChannelUses}
  \bar{\Delta}(f)
  \fadingApproxError
  +
  4
  \frac{\Delta(f)}{\sqrt{\powerconstraint\numChannelUses}}
  \frobeniusnorm{\fadingTransformMatrix\transpose{\noiseTransformMatrix}}
\bigg)^2.
\end{align*}
\end{theorem}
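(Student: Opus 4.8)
The plan is to analyse an energy-based \gls{dfa} scheme and reduce the theorem to a single quadratic-form concentration statement in the base vector $\randomBaseVector$. For the pre-processing I would have transmitter $\indexUsers$ send a constant-modulus symbol whose power encodes $f_\indexUsers(s_\indexUsers)$, namely $x_\indexUsers(\indexChannelUses)=\sqrt{a_\indexUsers}\,\e^{\cunit U_\indexUsers(\indexChannelUses)}$ with $a_\indexUsers=\frac{\powerconstraint}{\Delta(f)}\bigl(f_\indexUsers(s_\indexUsers)-\phi_{\min,\indexUsers}\bigr)$; by the definitions of $\phi_{\min,\indexUsers},\phi_{\max,\indexUsers}$ and $\Delta(f)$ one has $0\le a_\indexUsers\le\powerconstraint$, so the peak-power constraint holds. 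The receiver forms the empirical energy $T=\frac1{\numChannelUses}\sum_{\indexChannelUses=1}^{\numChannelUses}\absolute{Y(\indexChannelUses)}^2$, corrects for the expected noise energy, rescales by $\frac{\Delta(f)}{2\powerconstraint}$ (note $\E\absolute{H_\indexUsers(\indexChannelUses)}^2=2$) and shifts by $\sum_\indexUsers\phi_{\min,\indexUsers}$ to obtain an estimate $\hat g$ of the inner sum $g:=\sum_\indexUsers f_\indexUsers(s_\indexUsers)$, outputting $\bar f:=F(\hat g)$. The first and cheapest step is to eliminate $F$: since $\Phi$ is a strictly increasing increment majorant with $\Phi(0)=0$, the defining inequality (\ref{eq:monotone-domination}) gives $\absolute{\bar f-f}=\absolute{F(\hat g)-F(g)}\le\Phi(\absolute{\hat g-g})$, whence $\{\absolute{\bar f-f}\ge\tail\}\subseteq\{\absolute{\hat g-g}\ge\Phi^{-1}(\tail)\}$. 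This is exactly why $\Phi^{-1}(\tail)$ appears throughout (\ref{eq:correlated-statement}), and it reduces the whole theorem to bounding $\Probability(\absolute{\hat g-g}\ge\Phi^{-1}(\tail))$.

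Next I would exploit the dependence model. Conditioned on the encoder randomness $\{U_\indexUsers(\indexChannelUses)\}$, and because $\channelVector=\fadingTransformMatrix\randomBaseVector$ and $\noiseVector=\noiseTransformMatrix\randomBaseVector$ are linear images of the single vector $\randomBaseVector$ of independent, mean-zero, unit-sub-Gaussian-norm coordinates, the quantity $\absolute{Y(\indexChannelUses)}^2$ is quadratic in $(\channelVector,\noiseVector)$ and hence the recentred estimate $\hat g-\E\hat g$ is a quadratic form $\randomBaseVector^{T}\mathcal M\randomBaseVector-\E[\randomBaseVector^{T}\mathcal M\randomBaseVector]$ for an explicit symmetric $\mathcal M$ built from $\fadingTransformMatrix$, $\noiseTransformMatrix$ and the amplitudes $\{a_\indexUsers\}$. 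This is the point at which the variation of the Hanson–Wright inequality for sub-Gaussian vectors enters, yielding a Bernstein-type tail $2\exp\!\bigl(-c\min(t^2/\frobeniusnorm{\mathcal M}^2,\,t/\operatornorm{\mathcal M})\bigr)$. I would then bound $\operatornorm{\mathcal M}$ and $\frobeniusnorm{\mathcal M}$ by grouping the fading and noise contributions and weighting them by the signal scale $\sqrt{\bar\Delta(f)}$ and the noise scale $\sqrt{\Delta(f)/\powerconstraint}$; tracking the $1/\numChannelUses$ from the averaging in $T$ reproduces precisely $\errorProbOperatorNormTerm$ and $\errorProbFrobeniusNormTerm$ and converts the exponent into $\tfrac{\numChannelUses\Phi^{-1}(\tail)^2}{\cdots}$.

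The remaining difficulty is that $\E\hat g\neq g$ once the fading is correlated across users or correlated with the noise, so the Hanson–Wright bound is centred at the wrong place. This is where $\uncorrelatedApprox$ and $\fadingApproxError$ are used. The plan is to replace $\fadingTransformMatrix$ by the user-uncorrelated surrogate $\uncorrelatedApprox$, for which the cross-user terms vanish in expectation (cf. Remark~\ref{remark:uncorrelatedApprox}), and to control the resulting deterministic shift $\E[\randomBaseVector^{T}(\mathcal M_{\fadingTransformMatrix}-\mathcal M_{\uncorrelatedApprox})\randomBaseVector]$ through the hypothesis $\operatornorm{(\fadingTransformMatrix+\uncorrelatedApprox)\transpose{(\fadingTransformMatrix-\uncorrelatedApprox)}}\le\fadingApproxError$ together with the amplitude scale $\bar\Delta(f)$; the fading–noise cross-covariance is handled separately and is governed by $\frobeniusnorm{\fadingTransformMatrix\transpose{\noiseTransformMatrix}}$. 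These two bias contributions assemble into $\errorProbDependenceTerm$, which is absorbed additively into the threshold of the concentration bound, producing the first exponential of (\ref{eq:correlated-statement}) with the extra $\errorProbDependenceTerm$ in its denominator. The second exponential I would obtain from the standard Hanson–Wright decomposition of $\randomBaseVector^{T}\mathcal M\randomBaseVector$ into a diagonal (sum of independent sub-exponentials) part and an off-diagonal decoupled part: the diagonal branch is free of the cross-user bias but pays larger absolute constants ($256$, $32$), and a union bound over the two tail events gives the stated sum of two exponentials.

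The hard part will be the exact bookkeeping for $\mathcal M$: estimating $\operatornorm{\mathcal M}$ and $\frobeniusnorm{\mathcal M}$ so that the data-dependent amplitudes, the structural matrices $\fadingTransformMatrix,\noiseTransformMatrix$ and the $1/\numChannelUses$ normalisation combine into exactly $\errorProbOperatorNormTerm$ and $\errorProbFrobeniusNormTerm$ with no stray constants — the constant-free sub-Gaussian seminorm of (\ref{eq:sub-gauss-norm-def-first}) is chosen precisely so that the Hanson–Wright constants can be carried explicitly. Equally delicate is forcing the correlation bias to enter additively as $\errorProbDependenceTerm$ in the exponent rather than multiplicatively, which is what makes the quantity $\operatornorm{(\fadingTransformMatrix+\uncorrelatedApprox)\transpose{(\fadingTransformMatrix-\uncorrelatedApprox)}}$ — and hence $\fadingApproxError$ — appear naturally when comparing $\mathcal M_{\fadingTransformMatrix}$ with $\mathcal M_{\uncorrelatedApprox}$.
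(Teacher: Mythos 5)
Your scheme and your first reduction coincide with the paper's: power encoding with i.i.d.\ random signs, energy detection with subtraction of the expected noise energy, and the increment-majorant step that reduces the theorem to a tail bound on $\absolute{\vectorNorm{\rxVector}^2 - \Expectation\vectorNorm{\rxVector}^2}$, exactly as in (\ref{eq:correlated-fY}). The genuine gap is in how you propose to generate the dependence term $\errorProbDependenceTerm$. Note first that the estimator is \emph{unconditionally} unbiased whatever the correlations: every cross term in (\ref{eq:N-s-k}) is centered because of the independent signs $U_k(m)$, so at the unconditional level there is no ``wrongly centred'' Hanson--Wright bound. The mean shift you describe exists only conditionally on the encoder signs, and it is then not a deterministic quantity but a \emph{random variable} in those signs, namely the fluctuation of $\trace\hwEffectiveMatrix$. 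The paper (Lemma~\ref{lemma:hwMGF}) bounds the sub-Gaussian norm of this trace: the cross-user part is bounded uniformly via $\operatornorm{(\fadingTransformMatrix+\uncorrelatedApprox)\transpose{(\fadingTransformMatrix-\uncorrelatedApprox)}}\le\fadingApproxError$ (after adding and subtracting $\uncorrelatedApprox$ and using that $\trace(\transpose{(\uncorrelatedApprox)}\transpose{\messagesMatrix}\messagesMatrix\uncorrelatedApprox)$ is almost surely constant for user-uncorrelated $\uncorrelatedApprox$), while the fading--noise part is bounded by the bounded-differences inequality over the signs, and it is precisely this concentration step that produces $\frobeniusnorm{\fadingTransformMatrix\transpose{\noiseTransformMatrix}}$. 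Your ``deterministic shift absorbed additively into the threshold'' can do neither job: a worst-case bound on $\trace(\transpose{\noiseTransformMatrix}\messagesMatrix\fadingTransformMatrix)$ is a signed sum of $2\numChannelUses\numUsers$ entries of $\fadingTransformMatrix\transpose{\noiseTransformMatrix}$, which can exceed $\frobeniusnorm{\fadingTransformMatrix\transpose{\noiseTransformMatrix}}$ by a factor of order $\sqrt{\numChannelUses\numUsers}$, and subtracting a shift $b$ from the threshold yields an exponent with numerator $(\Phi^{-1}(\tail)-b)^2$ rather than $\errorProbDependenceTerm$ added in a denominator. To obtain (\ref{eq:correlated-statement}) one must multiply the MGF of the trace fluctuation into the MGF of the diagonal fluctuation (the paper's Lemma~\ref{lemma:splitting}), so that the two variance proxies add; even the repaired union-bound version of your plan (splitting the threshold between the conditional concentration and a sub-Gaussian trace tail) gives three exponentials of a different form, not the stated bound.

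Relatedly, you attach the pieces to the wrong branches. Conditionally on the signs, the mean of a quadratic form in independent centered variables involves only the diagonal of $\hwEffectiveMatrix$; the off-diagonal sum $\sum_{i\ne j}\randomBaseVector_i\randomBaseVector_j\hwEffectiveMatrix_{ij}$ is centered even given the signs. Hence the dependence term $\errorProbDependenceTerm$ belongs to the \emph{diagonal} branch, which the paper handles with the splitting/Bernstein argument and which yields the first exponential with the small constants $16$ and $4$; the large constants $256$ and $32$ in the second exponential are the price of decoupling and Gaussian comparison in the off-diagonal Hanson--Wright bound (Lemma~\ref{lemma:hw-offdiagonal}). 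Your statement that ``the diagonal branch is free of the cross-user bias but pays larger absolute constants ($256$, $32$)'' is therefore backwards on both counts, and carrying out your plan as written would leave the dependence term attached to a branch where it does not arise while failing to recover the stated constants.
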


In the following, we sketch how the pre- and post-processing schemes of Theorem~\ref{th:correlated} work. For a full formal definition, we refer the reader to the proof in Section~\ref{sec:thproof}. The pre- and post-processing schemes are similar (although not identical) to the ones that have appeared in~\cite{goldenbaum2014channel, kiril}. They are based on the same idea of combining random phase shifts at the transmitter and averaging at the receiver to mitigate the impact of the unknown fading and noise.

We consider fast fading (which can have any sub-Gaussian distribution) and assume that \gls{csi} is available neither at the transmitter nor at the receiver. One example is i.i.d. complex standard normal fading, which has a uniformly distributed phase. In this case, since there is no \gls{csi}, the phase of the received signal cannot carry any useful information. On the other hand, the phase difference between the signals from different transmitters plays a crucial role, since it determines whether the signals constructively or destructively overlap at the receiver. We mitigate the impact of a destructive superposition by applying a random phase shift at the transmitters and averaging the transmission over multiple channel uses. This has the added benefit of averaging out additive noise.

In summary, the pre-processing is performed by applying the following steps at each transmitter $k$ (see Section~\ref{sec:pre-proc}):
\begin{itemize}
 \item Apply the inner function $f_k$ from the nomographic representation (\ref{eq:nomographic-def}).
 \item Shift and rescale to satisfy the power constraint.
 \item Apply a random phase shift $U_k(m)$ that is independent for each channel use $m$.
\end{itemize}
One option for the random phase shift $U_k(m)$ is to draw it uniformly from the complex unit circle, but as we argue in the proofs, it is actually sufficient to draw it uniformly from $\{-1,1\}$.

As described above, the phase of the received signal carries no useful information due to the absence of \gls{csi}. Moreover, to compensate for the phase differences between the transmitters and reduce the influence of additive noise, some form of averaging is required. We therefore perform the following steps (see Section~\ref{sec:post-proc}):
\begin{itemize}
 \item Compute the total energy of the received signal.
 \item Subtract the energy of the additive noise over the receive time slot (this is statistical information and does not require knowledge of the instantaneous noise realizations).
 \item Invert the rescaling and shift that have been applied during pre-processing.
 \item Apply the outer function $F$ from the nomographic representation (\ref{eq:nomographic-def}).
\end{itemize}

We remark that these pre- and post-processing steps, while specific to the fast fading scenario, are unmodified compared to the steps used in~\cite{bjelakovic2019distributed}, where we considered only the case of uncorrelated fading and noise. On the other hand, the error bounds of Theorem~\ref{th:correlated} and the statistical tools used to prove it are different. Moreover, the error bounds depend on the correlation structure of the fading and noise, while the pre- and post-processors do not need this information.

\begin{remark}
If no suitable user-uncorrelated approximation for $\fadingTransformMatrix$ is available, we can always choose $\uncorrelatedApprox := 0$, which results in
$
 \fadingApproxError = \operatornorm{\fadingTransformMatrix}^2.
$
\end{remark}

\begin{remark}
In order to gain more freedom for optimizing the bound for a given correlation structure, it is possible to replace $\uncorrelatedApprox$ in Theorem~\ref{th:correlated} with $\uncorrelatedApprox \unitaryFadingApprox$, where $\unitaryFadingApprox \in \reals^{(2\numChannelUses\numUsers + 2\numChannelUses) \times (2\numChannelUses\numUsers + 2\numChannelUses)}$ is a unitary matrix. This requires only minor adaptations in the proof of the theorem.
\end{remark}

\begin{cor}
\label{cor:communication-cost}
For the approximation communication cost, we have
\begin{align}\label{eq:commcost}
M(f,\varepsilon,\delta)
\leq
\frac{\log 4 - \log \delta}{\Phi^{-1}(\varepsilon)^2}
\Gamma,
\end{align}
where
\begin{multline*}
\Gamma
:=
\max\big(
  16\errorProbFrobeniusNormTerm
  +
  \errorProbDependenceTerm
  +
  4\Phi^{-1}(\tail)
  \errorProbOperatorNormTerm
  ,
  256
  \errorProbFrobeniusNormTerm
  +
  32 \Phi^{-1}(\tail)
  \errorProbOperatorNormTerm
\big).
\end{multline*}
\end{cor}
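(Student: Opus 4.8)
The plan is to obtain the corollary directly from Theorem~\ref{th:correlated} by inverting the tail bound~(\ref{eq:correlated-statement}) in the number of channel uses $\numChannelUses$. Recall that $M(f,\varepsilon,\delta)$ is, by definition, the smallest number of channel uses for which there exists a scheme satisfying the $\varepsilon$-$\delta$ criterion~(\ref{eq:eps-delta-approx}), and that Theorem~\ref{th:correlated} already exhibits a concrete pair of pre- and post-processing operations achieving the right-hand side of~(\ref{eq:correlated-statement}) for every $\numChannelUses$. Hence it suffices to identify those $\numChannelUses$ for which this right-hand side is at most $\delta$, and to read off an upper bound on the least such $\numChannelUses$.

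First I would force each of the two summands on the right-hand side of~(\ref{eq:correlated-statement}) to be at most $\delta/2$, so that their sum is at most $\delta$ as required by~(\ref{eq:eps-delta-approx}). A summand of the form $2\exp(-E)$ is at most $\delta/2$ precisely when $E \geq \log(4/\delta) = \log 4 - \log\delta$. Writing the two exponents as $\numChannelUses\Phi^{-1}(\varepsilon)^2$ divided by the denominators $16\errorProbFrobeniusNormTerm + \errorProbDependenceTerm + 4\Phi^{-1}(\varepsilon)\errorProbOperatorNormTerm$ and $256\errorProbFrobeniusNormTerm + 32\Phi^{-1}(\varepsilon)\errorProbOperatorNormTerm$ respectively, the condition $E \geq \log 4 - \log\delta$ becomes, for each term,
\begin{equation*}
\numChannelUses
\;\geq\;
\frac{\log 4 - \log\delta}{\Phi^{-1}(\varepsilon)^2}
\cdot(\text{corresponding denominator}).
\end{equation*}
Both inequalities hold simultaneously as soon as $\numChannelUses$ exceeds the larger of the two thresholds, i.e., as soon as $\numChannelUses \geq \frac{\log 4 - \log\delta}{\Phi^{-1}(\varepsilon)^2}\Gamma$, where $\Gamma$ is exactly the maximum of the two denominators. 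For such $\numChannelUses$ the left-hand side of~(\ref{eq:correlated-statement}) is at most $\delta$, so the scheme of Theorem~\ref{th:correlated} meets~(\ref{eq:eps-delta-approx}); therefore the smallest admissible number of channel uses obeys the claimed bound~(\ref{eq:commcost}).

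The one point that needs care, and which I expect to be the main (though mild) obstacle, is the dependence of the denominators on $\numChannelUses$ itself: through the Frobenius-norm factors in $\errorProbFrobeniusNormTerm$ and through the dependence term $\errorProbDependenceTerm$, the quantity $\Gamma$ is in general not constant in $\numChannelUses$, so that the displayed threshold is a priori implicit. The resolution is that the normalizing factors $1/\sqrt{\numChannelUses}$ built into $\errorProbFrobeniusNormTerm$ and into $\errorProbDependenceTerm$ are chosen precisely to offset the natural $\sqrt{\numChannelUses}$-growth of $\frobeniusnorm{\fadingTransformMatrix}$, $\frobeniusnorm{\noiseTransformMatrix}$ and $\frobeniusnorm{\fadingTransformMatrix\transpose{\noiseTransformMatrix}}$ (e.g., each row of $\fadingTransformMatrix$ carries unit-variance fading, giving $\frobeniusnorm{\fadingTransformMatrix}^2 = 2\numChannelUses\numUsers$). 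In the regime of interest this keeps $\errorProbOperatorNormTerm$, $\errorProbFrobeniusNormTerm$ and the $\numChannelUses$-independent part of $\errorProbDependenceTerm$ of constant order, so that $\Gamma$ may be treated as independent of $\numChannelUses$ and~(\ref{eq:commcost}) is an explicit bound; note in particular that when the fading is user-uncorrelated ($\fadingApproxError=0$) and independent of the noise ($\frobeniusnorm{\fadingTransformMatrix\transpose{\noiseTransformMatrix}}=0$) one has $\errorProbDependenceTerm=0$ and $\Gamma$ is manifestly constant. Finally, since $\numChannelUses$ must be a positive integer whereas the threshold is real-valued, one implicitly rounds up; as~(\ref{eq:commcost}) is stated as an upper bound, this rounding is harmless.
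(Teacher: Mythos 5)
Your proposal is correct and follows essentially the same route as the paper: the paper likewise bounds the right-hand side of (\ref{eq:correlated-statement}) by $4\exp\left(-\numChannelUses\Phi^{-1}(\tail)^2/\Gamma\right)$ (which is equivalent to your ``each summand at most $\delta/2$'' condition) and then solves for $\numChannelUses$. Your closing remark on the implicit $\numChannelUses$-dependence of $\Gamma$ through $\errorProbFrobeniusNormTerm$ and $\errorProbDependenceTerm$ is a point the paper's one-line proof passes over silently, and it is a reasonable clarification rather than a deviation.
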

\begin{proof}
We upper bound (\ref{eq:correlated-statement}) as
\begin{align*}
 \mathbb{P} (  |    \bar{f} (s^K)- f(s^K)       |\ge \eps       )
 \le  
4\exp\left(
-\frac{\numChannelUses \Phi^{-1}(\tail)^2}{\Gamma}
\right),
\end{align*}
and solve the expression for $M$ concluding the proof.
\end{proof}
\begin{remark}
If $F$ is $C$-Lipschitz continuous, we can replace $\Phi^{-1}(\varepsilon)$ in (\ref{eq:commcost}) and the expression for $\Gamma$ with $\varepsilon/C$.
\end{remark}

\subsection{Special cases of Theorem~\ref{th:correlated}}
\label{sec:specialcases}
In this subsection we discuss the bound of Theorem~\ref{th:correlated} and illustrate it with two examples. In order to be a useful bound, (\ref{eq:correlated-statement}) should approach $0$ as $\numChannelUses \rightarrow \infty$. Clearly, it does so exponentially whenever $\errorProbOperatorNormTerm$, $\errorProbFrobeniusNormTerm$ and $\errorProbDependenceTerm$ are bounded for $\numChannelUses \rightarrow \infty$.

For $\errorProbDependenceTerm$, we observe that $\fadingTransformMatrix\transpose{\noiseTransformMatrix}$ is $0$ if the fading is independent of the additive noise (which we usually expect to be the case in practically relevant scenarios) and that $\fadingApproxError$ is $0$ in the case of user-uncorrelated fading in the sense of Definition~\ref{def:useruncorrelated}, which means that the fading of any one user is independent of the fading of the other users (arbitrary correlations in the time domain are still allowed). Since $\errorProbDependenceTerm$ grows proportionally with $\numChannelUses \fadingApproxError^2$, we can see that our bound is not useful for the case of strong correlations between users. Therefore, in the presence of user-correlated fading, the usefulness of the bound depends on the scaling behavior of $\numChannelUses \fadingApproxError^2$. When this term exhibits sublinear growth, the error bound of Theorem~\ref{th:correlated} approaches $0$ as $\numChannelUses \rightarrow 0$, and if it is additionally upper bounded, the error bound does so exponentially. In this sense, the bound is robust to small deviations from the assumption of user-uncorrelatedness. However, it is important to note that even the user-uncorrelated case covers relevant cases of correlation in the time domain such as the block fading channel. In the expression of $\errorProbFrobeniusNormTerm$, we can see that the Frobenius norm of both $\fadingTransformMatrix$ and $\noiseTransformMatrix$ should not grow faster than $\sqrt{\numChannelUses}$ and finally, in the expression of $\errorProbOperatorNormTerm$, we see that the operator norms of $\fadingTransformMatrix$ and $\noiseTransformMatrix$ should not grow with $\numChannelUses$. We illustrate that this is the case in scenarios of interest with the following two examples.

\begin{cor}
\label{cor:uncorrelated}
In the setting of Theorem~\ref{th:correlated} with uncorrelated fading and noise, i.e.,
\begin{align}
\label{eq:uncorrelated-hypothesis}
\fadingTransformMatrix
:=
\begin{pmatrix}
\sigma_F \identityMatrix_{2\numChannelUses\numUsers} & 0
\end{pmatrix}
,~~
\noiseTransformMatrix
:=
\begin{pmatrix}
0 & \sigma_N \identityMatrix_{2\numChannelUses}
\end{pmatrix},
\end{align} 
where $\identityMatrix_n$ denotes the $n \times n$ identity matrix,
we have
\begin{multline}
\label{eq:uncorrelated-statement}
\Probability\left(
  \absolute{\bar{f}-f(s_1, \dots, s_\numUsers)}
  \geq
  \tail
\right)
\\
\leq
\begin{aligned}[t]
&2\exp\left(
-
\frac{\numChannelUses \Phi^{-1}(\tail)^2}{
  16\errorProbFrobeniusNormTerm'
  +
  4\Phi^{-1}(\tail)
  \errorProbOperatorNormTerm'
}
\right)
\\
&+
2\exp\left(
  -
  \frac{\numChannelUses \Phi^{-1}(\tail)^2}{
    256
    \errorProbFrobeniusNormTerm'
    +
    32 \Phi^{-1}(\tail)
    \errorProbOperatorNormTerm'
  }
\right),
\end{aligned}
\end{multline}
where
\begin{align*}
\errorProbOperatorNormTerm'
&=
\left(
  \sqrt{\bar{\Delta}(f)} \sigma_F + \sqrt{\frac{\Delta(f)}{\powerconstraint}}\sigma_N
\right)^2
\displaybreak[0] \\
\errorProbFrobeniusNormTerm'
&=
\errorProbOperatorNormTerm'
\left(
  \sqrt{2\numUsers\bar{\Delta}(f)} \sigma_F
  +
  \sqrt{\frac{2\Delta(f)}{\powerconstraint}}\sigma_N
\right)^2.
\end{align*}
\end{cor}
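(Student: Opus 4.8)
The plan is to derive Corollary~\ref{cor:uncorrelated} as a direct specialization of Theorem~\ref{th:correlated}: I would substitute the explicit matrices from (\ref{eq:uncorrelated-hypothesis}) into the definitions of $\errorProbOperatorNormTerm$, $\errorProbFrobeniusNormTerm$, and $\errorProbDependenceTerm$, and then simplify. First I would record the relevant norms. Since $\fadingTransformMatrix$ is $\sigma_F$ times a $2\numChannelUses\numUsers$-dimensional identity block padded with zero columns, its singular values are all equal to $\sigma_F$, giving $\operatornorm{\fadingTransformMatrix}=\sigma_F$ and $\frobeniusnorm{\fadingTransformMatrix}=\sigma_F\sqrt{2\numChannelUses\numUsers}$; analogously $\operatornorm{\noiseTransformMatrix}=\sigma_N$ and $\frobeniusnorm{\noiseTransformMatrix}=\sigma_N\sqrt{2\numChannelUses}$.

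The next step is to show that the entire dependence term $\errorProbDependenceTerm$ vanishes. Each row of $\fadingTransformMatrix$ (equivalently each block $\fadingTransformMatrix^{(\indexChannelUses)}$) has exactly one nonzero entry per column, so by Remark~\ref{remark:uncorrelatedApprox} the matrix $\fadingTransformMatrix$ itself already generates user-uncorrelated fading. I would therefore choose $\uncorrelatedApprox:=\fadingTransformMatrix$, which makes $(\fadingTransformMatrix+\uncorrelatedApprox)\transpose{(\fadingTransformMatrix-\uncorrelatedApprox)}=0$ and hence permits $\fadingApproxError=0$. Moreover, the nonzero columns of $\fadingTransformMatrix$ (the first $2\numChannelUses\numUsers$) and of $\noiseTransformMatrix$ (the last $2\numChannelUses$) are disjoint, so $\fadingTransformMatrix\transpose{\noiseTransformMatrix}=0$ and thus $\frobeniusnorm{\fadingTransformMatrix\transpose{\noiseTransformMatrix}}=0$. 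Both summands in the definition of $\errorProbDependenceTerm$ then vanish, yielding $\errorProbDependenceTerm=0$.

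Finally I would insert these values into the definitions of $\errorProbOperatorNormTerm$ and $\errorProbFrobeniusNormTerm$ from Theorem~\ref{th:correlated}. Substituting $\operatornorm{\fadingTransformMatrix}=\sigma_F$ and $\operatornorm{\noiseTransformMatrix}=\sigma_N$ immediately gives $\errorProbOperatorNormTerm=\errorProbOperatorNormTerm'$. For the Frobenius term, substituting the computed norms produces $\sqrt{\bar{\Delta}(f)/\numChannelUses}\,\frobeniusnorm{\fadingTransformMatrix}=\sigma_F\sqrt{2\numUsers\bar{\Delta}(f)}$ and $\sqrt{\Delta(f)/(\powerconstraint\numChannelUses)}\,\frobeniusnorm{\noiseTransformMatrix}=\sigma_N\sqrt{2\Delta(f)/\powerconstraint}$, so that $\errorProbFrobeniusNormTerm=\errorProbFrobeniusNormTerm'$. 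With $\errorProbDependenceTerm=0$, the two exponential terms in (\ref{eq:correlated-statement}) collapse exactly to those in (\ref{eq:uncorrelated-statement}), completing the proof.

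This corollary is essentially bookkeeping, so there is no genuine analytic obstacle; the only point requiring a moment's care is recognizing that the uncorrelated structure lets me eliminate $\errorProbDependenceTerm$ entirely, via the two independent observations that $\fadingApproxError$ may be taken to be $0$ (the fading is already user-uncorrelated, so $\uncorrelatedApprox=\fadingTransformMatrix$ is admissible) and that the disjoint column supports force $\fadingTransformMatrix\transpose{\noiseTransformMatrix}=0$. Everything else is routine substitution and simplification of the norm expressions.
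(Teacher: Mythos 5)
Your proposal is correct and follows essentially the same route as the paper's own proof: the paper likewise records $\operatornorm{\fadingTransformMatrix}=\sigma_F$, $\frobeniusnorm{\fadingTransformMatrix}=\sqrt{2\numChannelUses\numUsers}\,\sigma_F$, $\operatornorm{\noiseTransformMatrix}=\sigma_N$, $\frobeniusnorm{\noiseTransformMatrix}=\sqrt{2\numChannelUses}\,\sigma_N$, $\fadingTransformMatrix\transpose{\noiseTransformMatrix}=0$, picks $\uncorrelatedApprox:=\fadingTransformMatrix$ so that $\errorProbDependenceTerm=0$, and substitutes into (\ref{eq:correlated-statement}). Your additional justification via Remark~\ref{remark:uncorrelatedApprox} that $\fadingTransformMatrix$ itself generates user-uncorrelated fading (making $\uncorrelatedApprox=\fadingTransformMatrix$ admissible) is a detail the paper leaves implicit, and it is sound.
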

\begin{proof}
Note that
$
\fadingTransformMatrix \transpose{\noiseTransformMatrix} = 0
$,
$
\operatornorm{\fadingTransformMatrix} = \sigma_F
$,
$
\operatornorm{\noiseTransformMatrix} = \sigma_N
$,
$
\frobeniusnorm{\fadingTransformMatrix} = \sqrt{2\numChannelUses\numUsers} \sigma_F
$ and
$
\frobeniusnorm{\noiseTransformMatrix} = \sqrt{2\numChannelUses} \sigma_N
$;
pick
$
\uncorrelatedApprox := \fadingTransformMatrix
$
and substitute this into (\ref{eq:correlated-statement}).
\end{proof}

\begin{cor}
\label{cor:blockfading}
In the setting of Theorem~\ref{th:correlated} where each user has a block fading channel with block length $\fadingBlockLength$, i.e.,
\newcommand\bigzero{\makebox(0,0){\text{\huge0}}}
\begin{align*}
\fadingTransformMatrix
&:=
\sigma_F
\left(
\begin{array}{cccc:c}
\identityMatrix_{2\numUsers} & & & & \\
\vdots &  &  & &\\
\identityMatrix_{2\numUsers} & & & \bigzero & \\
 & \identityMatrix_{2\numUsers} & & \\
 & \vdots & & &\\
 & \identityMatrix_{2\numUsers} & & & ~~\bigzero \\
 & & \ddots & & \\
 \bigzero & & & \identityMatrix_{2\numUsers} & \\
 & & & \vdots & \\
 & & & \identityMatrix_{2\numUsers} &
\end{array}
\right)
\hspace{-1em}
\begin{tabular}{l}
$\left.\lefteqn{\phantom{\begin{array}{c} \identityMatrix_{2\numUsers} \\ \vdots\\ \identityMatrix_{2\numUsers} \ \end{array}}}\right\}\fadingBlockLength$\\
$\left.\lefteqn{\phantom{\begin{array}{c} \identityMatrix_{2\numUsers} \\ \vdots\\ \identityMatrix_{2\numUsers} \ \end{array}}}\right\}\fadingBlockLength$\\
\lefteqn{\phantom{\begin{array}{c} \ddots \ \end{array}}} \\
$\left.\lefteqn{\phantom{\begin{array}{c} \identityMatrix_{2\numUsers} \\ \vdots\\ \identityMatrix_{2\numUsers} \ \end{array}}}\right\}\fadingBlockLength$
\end{tabular}
\\
\noiseTransformMatrix
&:=
\begin{pmatrix}
0 & \sigma_N \identityMatrix_{2\numChannelUses},
\end{pmatrix}
\end{align*}
we have a bound of the form (\ref{eq:uncorrelated-statement}), where
\begin{align*}
\errorProbOperatorNormTerm'
&=
\left(
  \sqrt{\bar{\Delta}(f) \fadingBlockLength} \sigma_F + \sqrt{\frac{\Delta(f)}{\powerconstraint}}\sigma_N
\right)^2
\\
\errorProbFrobeniusNormTerm'
&=
\errorProbOperatorNormTerm'
\left(
  \sqrt{2\numUsers\bar{\Delta}(f)} \sigma_F
  +
  \sqrt{\frac{2\Delta(f)}{\powerconstraint}}\sigma_N
\right)^2.
\end{align*}
\end{cor}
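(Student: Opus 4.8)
The plan is to reduce Corollary~\ref{cor:blockfading} to a direct substitution into the bound (\ref{eq:correlated-statement}) of Theorem~\ref{th:correlated}, exactly in the style of the proof of Corollary~\ref{cor:uncorrelated}. The only genuine content is the computation of the four matrix norms entering $\errorProbOperatorNormTerm$ and $\errorProbFrobeniusNormTerm$, together with a verification that the dependence term $\errorProbDependenceTerm$ vanishes.

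First I would dispose of the noise matrix $\noiseTransformMatrix$, which has the same $\begin{pmatrix} 0 & \sigma_N \identityMatrix_{2\numChannelUses} \end{pmatrix}$ form as in Corollary~\ref{cor:uncorrelated}, so that $\operatornorm{\noiseTransformMatrix} = \sigma_N$ and $\frobeniusnorm{\noiseTransformMatrix} = \sqrt{2\numChannelUses}\,\sigma_N$. Next I would compute the two norms of the block-fading matrix $\fadingTransformMatrix$. For the operator norm, the key observation is that $\fadingTransformMatrix$ is built from $\numChannelUses/\fadingBlockLength$ blocks, each taking its own disjoint set of $2\numUsers$ input coordinates and writing them into its own disjoint set of $2\numUsers\fadingBlockLength$ output coordinates; the blocks therefore act on mutually orthogonal input and output subspaces, so $\operatornorm{\fadingTransformMatrix}$ equals the operator norm of a single block. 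A single block stacks $\fadingBlockLength$ copies of $\sigma_F\identityMatrix_{2\numUsers}$, mapping a unit input vector to $\fadingBlockLength$ identical copies scaled by $\sigma_F$, and hence has operator norm $\sigma_F\sqrt{\fadingBlockLength}$. For the Frobenius norm I would simply count nonzero entries: there are $\numChannelUses$ copies of $\identityMatrix_{2\numUsers}$ in total (namely $\fadingBlockLength$ per block times $\numChannelUses/\fadingBlockLength$ blocks), each contributing $2\numUsers$ entries of magnitude $\sigma_F$, giving $\frobeniusnorm{\fadingTransformMatrix} = \sigma_F\sqrt{2\numChannelUses\numUsers}$.

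It then remains to argue $\errorProbDependenceTerm = 0$. The block-fading $\fadingTransformMatrix$ is user-uncorrelated: by the characterization in Remark~\ref{remark:uncorrelatedApprox}, each row-block $\fadingTransformMatrix^{(\indexChannelUses)}$ equals $\sigma_F\identityMatrix_{2\numUsers}$, which trivially has at most one nonzero entry per column, so different users are independent at each channel use in the sense of Definition~\ref{def:useruncorrelated}. I would therefore take $\uncorrelatedApprox := \fadingTransformMatrix$, which makes $\fadingTransformMatrix - \uncorrelatedApprox = 0$ and hence $\fadingApproxError = 0$. Moreover, $\fadingTransformMatrix$ and $\noiseTransformMatrix$ have disjoint column supports (the fading uses the first coordinate block, the noise the trailing $2\numChannelUses$ coordinates), so $\fadingTransformMatrix\transpose{\noiseTransformMatrix} = 0$ and in particular $\frobeniusnorm{\fadingTransformMatrix\transpose{\noiseTransformMatrix}} = 0$; both summands of $\errorProbDependenceTerm$ therefore vanish, which is what reduces the bound to the form (\ref{eq:uncorrelated-statement}).

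Substituting these values into (\ref{eq:correlated-statement}) and simplifying—using $\sqrt{\bar{\Delta}(f)}\,\sigma_F\sqrt{\fadingBlockLength} = \sqrt{\bar{\Delta}(f)\fadingBlockLength}\,\sigma_F$ in $\errorProbOperatorNormTerm$, and the $\numChannelUses$-factors cancelling against the $\sqrt{2\numChannelUses\numUsers}$ and $\sqrt{2\numChannelUses}$ inside $\errorProbFrobeniusNormTerm$—yields the stated $\errorProbOperatorNormTerm'$ and $\errorProbFrobeniusNormTerm'$. I do not expect any real obstacle; the single step requiring care is the operator-norm computation, where one must recall that replicating the fading across $\fadingBlockLength$ channel uses inflates the operator norm only by $\sqrt{\fadingBlockLength}$ rather than by $\fadingBlockLength$, which is precisely the feature distinguishing this bound from the uncorrelated one.
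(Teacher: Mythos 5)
Your proposal is correct and follows essentially the same route as the paper's own proof: compute $\operatornorm{\fadingTransformMatrix} = \sigma_F\sqrt{\fadingBlockLength}$, $\frobeniusnorm{\fadingTransformMatrix} = \sigma_F\sqrt{2\numChannelUses\numUsers}$, the two noise-matrix norms, note $\fadingTransformMatrix\transpose{\noiseTransformMatrix}=0$, pick $\uncorrelatedApprox := \fadingTransformMatrix$, and substitute into (\ref{eq:correlated-statement}). The only difference is that you spell out the justifications (orthogonal block structure for the operator norm, entry counting for the Frobenius norm, and the user-uncorrelatedness check via Remark~\ref{remark:uncorrelatedApprox}) that the paper leaves implicit.
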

\begin{proof}
Note that
$
\fadingTransformMatrix \transpose{\noiseTransformMatrix} = 0
$,
$
\operatornorm{\fadingTransformMatrix} = \sigma_F \sqrt{\fadingBlockLength}
$,
$
\operatornorm{\noiseTransformMatrix} = \sigma_N
$,
$
\frobeniusnorm{\fadingTransformMatrix} = \sqrt{2\numChannelUses\numUsers} \sigma_F
$ and
$
\frobeniusnorm{\noiseTransformMatrix} = \sqrt{2\numChannelUses} \sigma_N
$;
pick
$
\uncorrelatedApprox := \fadingTransformMatrix
$
and substitute this into (\ref{eq:correlated-statement}).
\end{proof}

\subsection{Sharpness of the Bound in Theorem~\ref{th:correlated}}
\label{sec:sharpness}
We do not expect the bound (\ref{eq:correlated-statement}) to be sharp in the sense that there are non-trivial examples in which it holds with equality. This, we believe, is in part a price that we pay for using a very general system model, but it is also due to the underlying tools from high-dimensional statistics that we employ. A further sharpening of this bound could be an interesting question for future research, but it would hinge on optimizing the bounds of some of the basic results that we use (such as Lemma~\ref{lemma:hw-offdiagonal}). In some special cases (such as uncorrelated Gaussian noise and fading) it is not hard, however, to compute exact bounds, as can for instance be seen in (\ref{eq:sharpening-fY}) below. In the sequel, we argue that in a sense that will be made precise, the bound (\ref{eq:correlated-statement}) is sharp ``up to absolute constants''. The example case for which we show that the bound holds with equality up to constants is uncorrelated Gaussian fading and noise so that the bound specializes to (\ref{eq:uncorrelated-statement}). For the purpose of this section, we focus on the behavior with varying $\numChannelUses$ and $\tail$, while we consider everything else constant system parameters.

\begin{theorem}
\label{th:sharpness}
In the case of uncorrelated Gaussian fading and noise; i.e., (\ref{eq:uncorrelated-hypothesis}) is satisfied and the entries of $\randomBaseVector$ are i.i.d. standard Gaussian, there are constants $\absoluteConstantOne$ and $\absoluteConstantTwo$ such that the estimate $\bar{f}$ obtained by the pre- and post-processing schemes described in Sections \ref{sec:pre-proc} and \ref{sec:post-proc} satisfies
\begin{multline}
\label{eq:sharpness}
\Probability\left(
  \absolute{\bar{f}-f(s_1, \dots, s_\numUsers)}
  \geq
  \tail
\right)
\\
\geq
\absoluteConstantOne
\exp\left(
  -
  \absoluteConstantTwo
  \numChannelUses \min(\Phi^{-1}(\tail), \Phi^{-1}(\tail)^2)
\right)
\end{multline}
for suitable choices of $F$ and $\Phi$ such as $F=\Phi=\identityFunction$ (the identity function).
\end{theorem}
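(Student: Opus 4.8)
The plan is to specialize the scheme of Theorem~\ref{th:correlated} to the present setting, compute the exact law of the estimator error $\bar{f}-f$, and then reduce the claim to a one-sided large-deviation lower bound for a chi-squared variable. Throughout I take $F=\Phi=\identityFunction$ (so that $\Phi^{-1}(\tail)=\tail$ and the target exponent is $\numChannelUses\min(\tail,\tail^2)$), uncorrelated Gaussian fading and noise as in (\ref{eq:uncorrelated-hypothesis}) with $\randomBaseVector$ i.i.d.\ standard Gaussian, and I am free to fix one convenient input $(s_1,\dots,s_\numUsers)$, since the theorem only asserts the existence of an unfavorable instance.

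First I would track the per-channel-use received energy. Writing $Y(\indexChannelUses)=\sum_{\indexUsers} H_\indexUsers(\indexChannelUses)U_\indexUsers(\indexChannelUses)\sqrt{a_\indexUsers}+N(\indexChannelUses)$ for the rescaled transmit magnitudes $a_\indexUsers$, the key observation is that in the Gaussian case the signal term $\sum_\indexUsers H_\indexUsers(\indexChannelUses)U_\indexUsers(\indexChannelUses)\sqrt{a_\indexUsers}$ is, conditionally on the phases $U_\indexUsers(\indexChannelUses)$, a circularly symmetric complex Gaussian whose variance $\sum_\indexUsers a_\indexUsers$ does not depend on the signs $U_\indexUsers(\indexChannelUses)$. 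Hence $Y(\indexChannelUses)$ is, unconditionally, complex Gaussian with per-component variance $\rho^2:=\sum_\indexUsers a_\indexUsers+\sigma_N^2$, and the $Y(\indexChannelUses)$ are i.i.d.\ across $\indexChannelUses$. Consequently each $|Y(\indexChannelUses)|^2$ is exactly exponential (a scaled $\chisquaredist{2}$), so that $\sum_{\indexChannelUses=1}^{\numChannelUses}|Y(\indexChannelUses)|^2=\rho^2\,\chisquaredist{2\numChannelUses}$. Carrying this through the post-processing of Section~\ref{sec:post-proc} (averaging, subtracting the noise energy, inverting the rescaling, and applying $F=\identityFunction$) shows that $\bar f-f$ equals an explicit positive constant times $\tfrac{1}{\numChannelUses}(\chisquaredist{2\numChannelUses}-2\numChannelUses)$. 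In particular $\{|\bar f-f|\ge\tail\}$ coincides with $\{|\chisquaredist{2\numChannelUses}-2\numChannelUses|\ge 2\numChannelUses\delta\}$ for $\delta=\kappa\,\tail$ with a system-dependent constant $\kappa$, whence $\numChannelUses\min(\delta,\delta^2)\asymp\numChannelUses\min(\tail,\tail^2)$.

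It then remains to establish the matching chi-squared lower bound
\begin{equation*}
\Probability\bigl(\chisquaredist{2\numChannelUses}-2\numChannelUses\ge 2\numChannelUses\delta\bigr)
\ge
\absoluteConstantOne\exp\bigl(-\absoluteConstantTwo\,\numChannelUses\min(\delta,\delta^2)\bigr).
\end{equation*}
I would prove this directly from the Gamma density $x\mapsto x^{\numChannelUses-1}e^{-x/2}/(2^{\numChannelUses}\Gamma(\numChannelUses))$, lower bounding the tail integral by the value of the density at the threshold $x_0=2\numChannelUses(1+\delta)$ times a suitable interval on which the density has not yet decayed by more than a constant factor. Applying Stirling's formula to $\Gamma(\numChannelUses)$ yields a bound of the form $\mathrm{poly}(\numChannelUses)^{-1}\exp(-\numChannelUses(\delta-\log(1+\delta)))$. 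Since $\delta-\log(1+\delta)\le\min(\delta,\delta^2)$ up to an absolute constant (it behaves like $\delta^2/2$ for small $\delta$ and like $\delta$ for large $\delta$), the desired exponent follows; the polynomial prefactor is harmless because whenever $\numChannelUses\min(\delta,\delta^2)$ stays bounded (up to a logarithmic factor) the deviation is within $O(1)$ standard deviations and the probability is bounded below by a constant, and otherwise the prefactor is absorbed by enlarging $\absoluteConstantTwo$.

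I expect the main obstacle to be the lower-bound direction in the linear (large-$\tail$) regime: unlike the concentration estimate underlying Theorem~\ref{th:correlated}, a matching lower bound cannot be read off from a Bernstein-type inequality and instead requires genuine control of the chi-squared density near the large-deviation point, together with careful bookkeeping of the Stirling prefactors so that the ``up to absolute constants'' claim survives uniformly across both the $\tail^2$ and the $\tail$ regimes. The reduction carried out in the first two steps is, by contrast, essentially exact, thanks to the fortunate fact that the received per-use energy is exponentially distributed in the Gaussian case.
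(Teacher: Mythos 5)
Your first two steps reproduce the paper's own proof: it likewise specializes to $F=\Phi=\identityFunction$, computes that in the uncorrelated Gaussian case $\vectorNorm{\rxVector}^2/(\sigma_F^2\sum_{k=1}^{\numUsers}\transmitPower_k+\sigma_N^2)$ follows a $\chisquaredist{2\numChannelUses}$ distribution, and observes that the reduction from the error event to the chi-squared deviation event (its inequality (\ref{eq:sharpening-fY})) holds with equality for $F=\Phi=\identityFunction$. The only divergence is the last step: the paper simply invokes a ready-made nonasymptotic chi-squared lower tail bound, \cite[Corollary 3]{zhang2018non}, whereas you re-derive such a bound from the Gamma density and Stirling's formula. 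That substitution is legitimate and makes the proof self-contained, which is a genuine (if modest) gain over the citation.

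The flaw sits inside that re-derivation, exactly where you anticipated trouble. The dichotomy you use to dispose of the prefactor has a hole: if $\numChannelUses\min(\delta,\delta^2)\asymp\log\numChannelUses$, the deviation is of order $\sqrt{\log\numChannelUses}$ standard deviations, not $O(1)$, so the tail probability there is of order $\numChannelUses^{-c}$ and is \emph{not} bounded below by a constant; and throughout the window where $\numChannelUses\min(\delta,\delta^2)$ lies between a constant and $\log\numChannelUses$, the target exponent is too small for a $1/\mathrm{poly}(\numChannelUses)$ prefactor to be absorbed by enlarging $\absoluteConstantTwo$, since that would require $\log\numChannelUses\lesssim\absoluteConstantTwo\,\numChannelUses\min(\delta,\delta^2)$, which fails in this window. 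The repair stays within your argument: integrate not over a constant-length interval (nor over one on which the density loses only a constant factor), but over $[x_0,x_0+L]$ with length equal to the larger of the deviation and the standard deviation. Concretely, with $x_0=2\numChannelUses(1+\delta)$, $\eta:=\max(\delta,\numChannelUses^{-1/2})$, $L:=2\numChannelUses\eta$, and $h(u):=u-\log(1+u)$, monotonicity of the density $p$ of $\chisquaredist{2\numChannelUses}$ beyond its mode $2(\numChannelUses-1)\le x_0$ together with Stirling's bound gives
\begin{equation*}
\Probability\left(\chisquaredist{2\numChannelUses}\ge x_0\right)
\ge L\, p(x_0+L)
\ge \frac{e^{-1/12}}{\sqrt{2\pi}}\cdot\frac{\sqrt{\numChannelUses}\,\eta}{1+\delta+\eta}\cdot\exp\left(-\numChannelUses\, h(\delta+\eta)\right).
\end{equation*}
Here $\sqrt{\numChannelUses}\,\eta/(1+\delta+\eta)\ge 1/3$ in both cases $\eta=\delta$ and $\eta=\numChannelUses^{-1/2}$, while $h(\delta+\eta)\le h(2\eta)\le 2\min(\eta,\eta^2)$ and $\numChannelUses\min(\eta,\eta^2)\le\numChannelUses\min(\delta,\delta^2)+1$, so the right-hand side is at least $\absoluteConstantOne\exp\left(-2\numChannelUses\min(\delta,\delta^2)\right)$ for an absolute constant $\absoluteConstantOne$, uniformly over all regimes and with no leftover prefactor to absorb. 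With this modification your proof goes through.
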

Note that the upper tail bound (\ref{eq:uncorrelated-statement}) also has the same form for suitably chosen $\absoluteConstantOne$ and $\absoluteConstantTwo$, so that we can conclude that it is sharp up to the values of these constants.
\begin{proof}
The proof is relatively straightforward, so we only sketch it. Under the assumptions made in Theorem~\ref{th:sharpness}, we readily compute from the equations in Sections \ref{sec:pre-proc} and \ref{sec:post-proc}
\[
\frac{\vectorNorm{\rxVector}^2}
     {\sigma_F^2 \sum_{k=1}^K \transmitPower_k + \sigma_N^2}
=
\vectorNorm{\randomBaseVector}^2.
\]
Since the entries of $\randomBaseVector$ are i.i.d. standard Gaussian and the vector has $2\numChannelUses$ entries, $\vectorNorm{\randomBaseVector}^2$ clearly follows a chi-square distribution with $2\numChannelUses$ degrees of freedom. We therefore have in parallel to (\ref{eq:correlated-fY})
\begin{multline}
\label{eq:sharpening-fY}
\Probability\left(
  \absolute{\bar{f}-f(s_1, \dots, s_\numUsers)}
  \geq
  \tail
\right)
\\
\leq
\Probability\left(
  \absolute{\vectorNorm{\randomBaseVector}^2 - \Expectation \vectorNorm{\randomBaseVector}^2}
  \geq
  \frac{2\powerconstraint\numChannelUses\Phi^{-1}(\tail)}{\Delta(f)({\sigma_F^2 \sum_{k=1}^K \transmitPower_k + \sigma_N^2})}
\right).
\end{multline}
The bound here is sharp in the sense that it holds with equality in case $F = \Phi = \identityFunction$. We can now use \cite[Corollary 3]{zhang2018non} to conclude that in this case, (\ref{eq:sharpness}) holds for suitable $\absoluteConstantOne$ and $\absoluteConstantTwo$.
\end{proof}

\section{Applications to Vertical Federated Learning}
\label{sec:vfl}
In this section, we give examples of how the results of this paper can be applied to \gls{ml} problems. We focus on a case that is particularly simple on the communication side (since only one weighted sum is \gls{ota} computed) but which we expect to gain significant relevance in practical systems. In this section, we first give an overview of literature about distributed and \gls{ota} \gls{ml}, then we give two detailed examples of how an \gls{ota} computation of weighted sums can be applied to determining labels in \gls{ml} problems. In the last part of the section, we present simulation results for one of these application examples.

Prior works~\cite{amiri2020machine,ahn2019wireless,mcmahan2017communication,konecny2016federated,zhu2019broadband,amiri2020federated,yang2020federated,sery2020analog,sun2019energy,seif2020wireless,amiri2019collaborative} on \gls{fl} using \gls{ota} computation have focused on distributed training in the case that the agents in the system observe the same feature space, but their training data is partitioned in the sample domain. This setting is sometimes referred to as \emph{Horizontal \gls{fl}}~\cite{yang2019federated}. Stochastic Gradient Descent, which is a very common training algorithm in Horizontal FL, has sum form and is therefore contained in $\fr_{\textrm{mon}}$. For this reason, we believe that it is a promising direction for future research to investigate whether Theorem~\ref{th:correlated} could be used to expand, e.g., the distributed \gls{ota} computation based training procedure described in~\cite{amiri2020machine} to accommodate the sub-Gaussian fast-fading channel model with no instantaneous \gls{csi} and possible correlations in fading and noise that we consider in the present work.

In this paper, on the other hand, we show applications to a different \gls{fl} setting, which is called \emph{\gls{vfl}}~\cite{yang2019federated}. In this setting, each agent observes only a projection of the features, but the same samples (in the training phase, along with the full labels). Horizontal schemes carry out the training phase in a distributed manner, but in the prediction phase, they evaluate the classifier or regressor at a central point. Vertical schemes, on the other hand, can also perform the labeling in a decentralized way. In this section, we describe a few examples of how our \gls{dfa} scheme can be leveraged to vastly increase the efficiency of the prediction phase both in terms of time and bandwidth resources (i.e., in our model, channel uses) and in terms of energy resources expended. For the training phase, we will either assume centralized offline training or use more communication-efficient decentralized methods that do not, however, leverage any form of \gls{ota} computation. Developing distributed training algorithms for \gls{vfl} which can leverage the full power of \gls{ota} computation remains open as an interesting future research direction. However, we argue that in many cases of interest the communication cost incurred in the prediction phase can dominate that incurred in training and it is therefore worthwhile to focus on the prediction. This can, e.g., be the case when the training can be conducted offline and the models do not or only infrequently have to be re-trained; or when the number of training samples is small, but the number of features observed in the system is large and prediction tasks have to be carried out very frequently.

Contrary to~\cite{yang2019federated}, where the main focus in \gls{vfl} is on providing privacy and security guarantees, in this work we focus on the communication efficiency of such schemes under the use of \gls{ota} computation. Since the \gls{ota}-computed predictors as well as the distributed training procedures we describe do not aggregate the observed features in a central point, it is reasonable to expect that these methods have good inherent privacy properties, and for some of the envisioned applications, such as, e.g., e-health, it is an important open question for future research how these privacy guarantees can be formalized and perhaps strengthened in the context of \gls{ota}-computed \gls{ml} predictors.

An application of \gls{ota} computation to the prediction phase of \gls{vfl} has, to the best of our knowledge, first appeared in~\cite{kiril}, where the authors proposed an \gls{ota}-computed classifier based on a \gls{svm} with a linear kernel to approach the classification problem of anomaly detection. Moreover, the research area of \emph{Type-Based Multiple-Access}~\cite{mergen2006type,mergen2007asymptotic} is concerned with solving problems very similar to those approached by \gls{ota} \gls{vfl} such as anomaly detection in extremely large networks. An important difference is that instead of transmitting analog values directly, this approach relies on a prior quantization step and then exploits the fact that the number of quantization levels usually does not grow with the number of transmitters in the system. Furthermore, this approach uses statistical methods and knowledge about the involved probability distributions, while the \gls{vfl} approach that we use here is based on the \gls{ml} paradigm and hence does not require a priori knowledge of the underlying distributions.

In the following, we expand upon the idea of \gls{ota} \gls{vfl} by showing in Section~\ref{sec:vfl-additive-svm} how the \gls{svm} approach can be generalized to the use of additive kernel \glspl{svm} and applied to regression and classification problems, and in Section~\ref{sec:vfl-classification} how classifiers that do not necessarily have to rely on \glspl{svm} at all can be constructed to solve binary classification problems. In Section~\ref{sec:vfl-classification-numerical}, we present simulation results of two classification schemes constructed as described in Section~\ref{sec:vfl-classification} and compare them to two baseline approaches.

In both Subsection~\ref{sec:vfl-additive-svm} and~\ref{sec:vfl-classification}, we construct an \gls{ml} regressor or predictor that has the form of a weighted sum, because such a function can be straightforwardly computed \gls{ota} using the \gls{dfa} scheme described in Section~\ref{sec:thproof}. If the loss is Lipschitz-continuous, it can play the role of the function $F$ in Definition~\ref{def:Fmon} so that Theorem~\ref{th:correlated} provides a tail bound on the additional \gls{ml} loss that the \gls{ota}-computed classifiers can incur in addition to the loss they would have in case of noiseless communication. The detailed technical statements and proofs of these facts can be found in Corollaries~\ref{cor:mlapplication} and~\ref{cor:boostapplication}. We also give some examples of applicable Lipschitz-continuous losses for regression and classification that are commonly used in practice in Subsection~\ref{sec:vfl-additive-svm}.

\subsection{SVMs with Additive Kernels for Regression and Classification}
\label{sec:vfl-additive-svm}
{
\newcommand{\mlInputAlphabet}{{\mathcal{X}}}
\newcommand{\mlLabelAlphabet}{{\mathcal{Y}}}
\newcommand{\mlInputAlphabetElement}{{x}}
\newcommand{\mlLabelAlphabetElement}{{y}}
\newcommand{\mlEstimator}{{f}}
\newcommand{\mlLoss}{{L}}
\newcommand{\mlDistribution}{{\mathcal{P}}}
\newcommand{\mlInputRV}{{X}}
\newcommand{\mlOutputRV}{{Y}}
\newcommand{\mlRisk}[2]{{\mathcal{R}_{{#1},{#2}}}}
\newcommand{\mlEstimatorOutput}{{t}}
\newcommand{\mlKernel}{{\kappa}}
\newcommand{\RKHS}{{\mathcal{H}}}
\newcommand{\mlTrainingSampleNum}{{N}}
\newcommand{\mlIndexTrainingSample}{{n}}
\newcommand{\mlEstimatorCoefficient}{{\alpha}}
\newcommand{\partfuncminvalue}[1]{{\phi_{\min,{#1}}}}
\newcommand{\partfuncmaxvalue}[1]{{\phi_{\max,{#1}}}}
\newcommand{\errorconstone}{{\varepsilon}}
\newcommand{\errorconsttwo}{{\delta}}
\newcommand{\mlLossLipschitz}{{B}}

In this subsection, we give an example of additive, and therefore \gls{ota} computable, \gls{svm} regressors. First, we briefly sketch the setting as in~\cite{steinwart}. We consider a feature alphabet $\mlInputAlphabet$, a label alphabet $\mlLabelAlphabet \subseteq \reals$ and a probability distribution $\mlDistribution$ on $\mlInputAlphabet \times \mlLabelAlphabet$ which is in general unknown. A statistical inference problem is characterized by the feature alphabet, the label alphabet and a loss function $\mlLoss: \mlInputAlphabet \times \mlLabelAlphabet \times \reals \rightarrow [0, \infty)$. The objective is, given training samples drawn i.i.d. from $\mlDistribution$, to find an estimator function $\mlEstimator: \mlInputAlphabet \rightarrow \reals$ such that the risk $\mlRisk{\mlLoss}{\mlDistribution} := \Expectation_\mlDistribution \mlLoss(\mlInputRV, \mlOutputRV, \mlEstimator(\mlInputRV))$ is as small as possible. In order for the risk to exist, we must impose suitable measurability conditions on $\mlLoss$ and $\mlEstimator$. In this paper, we deal with Lipschitz-continuous losses. We say that the loss $\mlLoss$ is $\mlLossLipschitz$-Lipschitz-continuous if $\mlLoss(\mlInputAlphabetElement, \mlLabelAlphabetElement, \cdot)$ is Lipschitz-continuous for all $\mlInputAlphabetElement \in \mlInputAlphabet$ and $\mlLabelAlphabetElement \in \mlLabelAlphabet$ with a Lipschitz constant uniformly bounded by $\mlLossLipschitz$. Lipschitz-continuity of a loss function is a property that is also often needed in other contexts. Fortunately, many loss functions of practical interest possess this property. For instance, the absolute distance loss, the logistic loss, the Huber loss and the $\varepsilon$-insensitive loss, all of which are commonly used in regression problems~\cite[Section 2.4]{steinwart}, are Lipschitz-continuous. Even in scenarios in which the naturally arising loss is not Lipschitz-continuous, for the purpose of designing the \gls{ml} model, it is often replaced with a Lipschitz-continuous alternative. For instance, in binary classification, we have $\mlLabelAlphabet = \{-1,1\}$ and the loss function is given by
\[
(\mlInputAlphabetElement,\mlLabelAlphabetElement,\mlEstimatorOutput) \mapsto
\begin{cases}
  0, &\sign(\mlLabelAlphabetElement) = \sign(\mlEstimatorOutput) \\
  1, &\text{otherwise.}
\end{cases}
\]
This loss is not even continuous, which makes it hard to deal with. So for the purpose of designing the \gls{ml} model, it is commonly replaced with the Lipschitz-continuous hinge loss or logistic loss~\cite[Section 2.3]{steinwart}.

Here, we consider the case in which the features are $\numUsers$-tuples and the \gls{svm} can be trained in a centralized fashion. The actual predictions, however, are performed in a distributed setting; i.e., there are $\numUsers$ users each of which observes only one component of the features. The objective is to make an estimate of the label available at the receiver while using as little communication resources as possible.

To this end, we consider the case of additive models which is described in~\cite[Section 3.1]{christmann2012consistency}. We have $\mlInputAlphabet = \mlInputAlphabet_1 \times \dots \times \mlInputAlphabet_\numUsers$ and a kernel $\mlKernel_\indexUsers: \mlInputAlphabet_\indexUsers \times \mlInputAlphabet_\indexUsers \rightarrow \reals$ with an associated reproducing kernel Hilbert space $\RKHS_\indexUsers$ of functions mapping from $\mlInputAlphabet_\indexUsers$ to $\reals$ for each $\indexUsers \in \{1,\dots,\numUsers\}$. Then by~\cite[Theorem 2]{christmann2012consistency}
\begin{multline}\label{eq:addkernel}
\mlKernel: \mlInputAlphabet \times \mlInputAlphabet \rightarrow \reals,~
((\mlInputAlphabetElement_1,\dots,\mlInputAlphabetElement_\numUsers), (\mlInputAlphabetElement'_1,\dots,\mlInputAlphabetElement'_\numUsers))
\mapsto \\
\mlKernel_1(\mlInputAlphabetElement_1, \mlInputAlphabetElement'_1) + \dots + \mlKernel_\numUsers(\mlInputAlphabetElement_\numUsers, \mlInputAlphabetElement'_\numUsers)
\end{multline}
is a kernel and the associated reproducing kernel Hilbert space is
\begin{align}\label{eq:addkernel-rkhs}
\RKHS := \{\mlEstimator_1 + \dots + \mlEstimator_\numUsers: \mlEstimator_1 \in \RKHS_1, \dots, \mlEstimator_\numUsers \in \RKHS_\numUsers\}.
\end{align}
So this model is appropriate whenever the function to be approximated is expected to have an additive structure. We know~\cite[Theorem 5.5]{steinwart} that an \gls{svm} estimator has the form
\begin{align}\label{eq:svmestimator}
\mlEstimator(\mlInputAlphabetElement) = \sum_{\mlIndexTrainingSample=1}^\mlTrainingSampleNum \mlEstimatorCoefficient_\mlIndexTrainingSample \mlKernel(\mlInputAlphabetElement, \mlInputAlphabetElement^\mlIndexTrainingSample),
\end{align}
where $\mlEstimatorCoefficient_1, \dots, \mlEstimatorCoefficient_\mlTrainingSampleNum \in \reals$ and $\mlInputAlphabetElement^1, \dots \mlInputAlphabetElement^\mlTrainingSampleNum \in \mlInputAlphabet$. In our additive model, this is
\begin{align}\label{eq:addestimator}
\mlEstimator(\mlInputAlphabetElement_1, \dots, \mlInputAlphabetElement_\indexUsers)
=
\sum_{\indexUsers=1}^\numUsers \mlEstimator_\indexUsers(\mlInputAlphabetElement_\indexUsers),
\end{align}
where for each $\indexUsers$,
\begin{align}\label{eq:addestimator-detail}
\mlEstimator_\indexUsers(\mlInputAlphabetElement_\indexUsers) =
\sum_{\mlIndexTrainingSample=1}^\mlTrainingSampleNum \mlEstimatorCoefficient_\mlIndexTrainingSample \mlKernel_\indexUsers(\mlInputAlphabetElement_\indexUsers, \mlInputAlphabetElement_\indexUsers^\mlIndexTrainingSample).
\end{align}

We now state a result for the distributed approximation of the estimator of such an additive model as an immediate consequence of Theorem~\ref{th:correlated}.

\begin{cor}
\label{cor:mlapplication}
Consider an additive \gls{ml} model, i.e., we have an estimator of the form (\ref{eq:addestimator}), and assume that $\mlLoss$ is a $\mlLossLipschitz$-Lipschitz-continuous loss. Suppose further that all the $\mlEstimator_\numUsers$ have bounded range such that the quantities $\bar{\Delta}(f)$ and $\Delta(f)$ as defined in (\ref{eq:total-inner-spread}) and (\ref{eq:max-inner-spread}) exist and are finite. Let $\errorconstone, \errorconsttwo > 0$ and $M \geq M(\mlEstimator,\errorconstone, \errorconsttwo)$ as defined in (\ref{eq:commcost}), where $\Phi := \identityFunction$ and thus $\Phi^{-1}(\varepsilon) = \varepsilon$. Then, given $\mlInputAlphabetElement^\numUsers = (\mlInputAlphabetElement_1, \dots, \mlInputAlphabetElement_\numUsers)$ drawn from an arbitrary distribution\footnote{\label{footnote:correlated}Arbitrary distribution means in particular that the components can be arbitrarily correlated.}\footnote{\label{footnote:stronger-result}Note that Theorem~\ref{th:correlated} actually provides for a stronger result, since it allows arbitrary deterministic values, which implies the applicability to arbitrarily distributed random variables through the law of total probability.} at the transmitters and any $\mlLabelAlphabetElement \in \mlLabelAlphabet$, through $M$ uses of the channel (\ref{eq:channel-model}), the receiver can obtain an estimate $\bar{\mlEstimator}$ of $\mlEstimator(\mlInputAlphabetElement^\numUsers)$ satisfying
\begin{align}
\label{eq:mlapplication-cor}
\Probability(\absolute{\mlLoss(\mlInputAlphabetElement^\numUsers,\mlLabelAlphabetElement,\bar{\mlEstimator}) - \mlLoss(\mlInputAlphabetElement^\numUsers,\mlLabelAlphabetElement,\mlEstimator(\mlInputAlphabetElement^\numUsers))} \geq \mlLossLipschitz \errorconstone)
\leq \errorconsttwo.
\end{align}
\end{cor}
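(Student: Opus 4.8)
The plan is to read Corollary~\ref{cor:mlapplication} as a specialization of Theorem~\ref{th:correlated}: essentially all the probabilistic content is already contained in that theorem (packaged in Corollary~\ref{cor:communication-cost}), and what remains is to check membership of the additive estimator in $\fr_{\textrm{mon}}$ and to push the resulting error bound through the Lipschitz loss.

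First I would note that the additive estimator (\ref{eq:addestimator}), $\mlEstimator=\sum_{\indexUsers=1}^\numUsers \mlEstimator_\indexUsers$, is a generalized linear function: the bounded-range assumption on each component $\mlEstimator_\indexUsers$ makes the inner maps bounded and measurable and guarantees that $\bar\Delta(\mlEstimator)$ and $\Delta(\mlEstimator)$ from (\ref{eq:total-inner-spread})--(\ref{eq:max-inner-spread}) are finite. Hence $\mlEstimator\in\fr_{K, \textrm{lin}}\subseteq\fr_{\textrm{mon}}$ with outer function $F=\identityFunction$ and increment majorant $\Phi=\identityFunction$ (Example~1 after Definition~\ref{def:Fmon}); in particular $\Phi^{-1}(\errorconstone)=\errorconstone$, exactly as used in the statement.

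Next I would invoke Corollary~\ref{cor:communication-cost}. Since $M\geq M(\mlEstimator,\errorconstone,\errorconsttwo)$, the definition of the communication cost together with (\ref{eq:eps-delta-approx}) yields pre- and post-processing operations producing an estimate $\bar{\mlEstimator}$ such that, for every fixed deterministic argument $\mlInputAlphabetElement^\numUsers$,
\[
\Probability\left(\absolute{\bar{\mlEstimator}-\mlEstimator(\mlInputAlphabetElement^\numUsers)}\geq\errorconstone\right)\leq\errorconsttwo .
\]
Because this bound holds uniformly over all admissible deterministic inputs, conditioning on the realization of $\mlInputAlphabetElement^\numUsers$ and integrating (the law of total probability) extends it verbatim to the case in which $\mlInputAlphabetElement^\numUsers$ is drawn from an arbitrary, and possibly correlated, distribution.

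Finally I would transfer the bound to the loss. The $\mlLossLipschitz$-Lipschitz-continuity of $\mlLoss$ in its third argument gives, for fixed $\mlInputAlphabetElement^\numUsers$ and $\mlLabelAlphabetElement$,
\[
\absolute{\mlLoss(\mlInputAlphabetElement^\numUsers,\mlLabelAlphabetElement,\bar{\mlEstimator})-\mlLoss(\mlInputAlphabetElement^\numUsers,\mlLabelAlphabetElement,\mlEstimator(\mlInputAlphabetElement^\numUsers))}\leq\mlLossLipschitz\,\absolute{\bar{\mlEstimator}-\mlEstimator(\mlInputAlphabetElement^\numUsers)},
\]
so the event on the left-hand side of (\ref{eq:mlapplication-cor}) is contained in $\{\absolute{\bar{\mlEstimator}-\mlEstimator(\mlInputAlphabetElement^\numUsers)}\geq\errorconstone\}$, and monotonicity of probability delivers (\ref{eq:mlapplication-cor}). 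The argument carries essentially no analytic difficulty; the only steps needing care — which I regard as the real (if modest) obstacle — are confirming that the bounded-range hypothesis is precisely what renders $\bar\Delta(\mlEstimator)$ and $\Delta(\mlEstimator)$ finite and the Theorem~\ref{th:correlated} scheme well defined under the peak power constraint, and making the deterministic-to-random reduction rigorous so that the conclusion genuinely holds for arbitrarily correlated inputs.
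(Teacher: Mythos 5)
Your proposal is correct and follows essentially the same route as the paper's own proof: the paper likewise identifies $\mlEstimator$ as an element of $\fr_{\textrm{mon}}$ with $\Phi = \identityMatrix$ via the statement's hypotheses, bounds the loss-deviation probability by the estimator-deviation probability using Lipschitz continuity, and then invokes the definition of $M(\mlEstimator,\errorconstone,\errorconsttwo)$ (with the deterministic-to-random reduction relegated to a footnote, exactly as you handle it with the law of total probability). Your write-up simply spells out steps the paper leaves implicit; there is no substantive difference.
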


\begin{proof}
The Lipschitz continuity of $\mlLoss$ yields
\begin{multline*}
\Probability(\absolute{\mlLoss(\mlInputAlphabetElement^\numUsers,\mlLabelAlphabetElement,\bar{\mlEstimator}) - \mlLoss(\mlInputAlphabetElement^\numUsers,\mlLabelAlphabetElement,\mlEstimator(\mlInputAlphabetElement^\numUsers))} \geq \mlLossLipschitz \errorconstone)
\\
\leq
\Probability(\absolute{\bar{\mlEstimator} - \mlEstimator(\mlInputAlphabetElement^\numUsers)} \geq \errorconstone),
\end{multline*}
from which (\ref{eq:mlapplication-cor}) follows by the definition of $M(\mlEstimator,\errorconstone, \errorconsttwo)$.
\end{proof}

While~\cite{christmann2012consistency} provides some examples of applications of \glspl{svm} with additive kernels to regression problems, the example for anomaly detection described in~\cite{kiril} can be recovered as a special case of the framework described in this subsection, where the employed \glspl{svm} have linear kernels.

We conclude this subsection with a brief discussion of the feasibility of the condition that $\mlEstimator_1, \dots, \mlEstimator_\numUsers$ have bounded ranges in the case of the additive \gls{svm} model discussed above. The coefficients $\mlEstimatorCoefficient_1, \dots, \mlEstimatorCoefficient_\mlTrainingSampleNum$ are a result of the training step and can therefore be considered constant, so all we need is that the ranges of $\mlKernel_1, \dots, \mlKernel_\numUsers$ are bounded. This heavily depends on $\mlInputAlphabet_1, \dots, \mlInputAlphabet_\numUsers$ and the choices of the kernels, but we remark that the boundedness criterion is satisfied in many cases of interest. The range of Gaussian kernels is always a subset of $(0,1]$, and while other frequent choices such as exponential, polynomial and linear kernels can have arbitrarily large ranges, they are nonetheless continuous which means that as long as the input alphabets are compact topological spaces (e.g., closed hyperrectangles or balls), the ranges are also compact, and therefore bounded.
}

\subsection{Model-Agnostic Approach to \gls{ota}-Computed Classifiers}
\label{sec:vfl-classification}
{
\newcommand{\mlInputAlphabet}{{\mathcal{X}}}
\newcommand{\mlLabelAlphabet}{{\mathcal{Y}}}
\newcommand{\mlInputAlphabetElement}{{x}}
\newcommand{\mlLabelAlphabetElement}{{y}}
\newcommand{\mlEstimator}{{f}}
\newcommand{\mlLoss}{{L}}
\newcommand{\mlDistribution}{{\mathcal{P}}}
\newcommand{\mlInputRV}{{X}}
\newcommand{\mlOutputRV}{{Y}}
\newcommand{\mlRisk}[2]{{\mathcal{R}_{{#1},{#2}}}}
\newcommand{\mlEstimatorOutput}{{t}}
\newcommand{\mlKernel}{{\kappa}}
\newcommand{\RKHS}{{\mathcal{H}}}
\newcommand{\mlEstimatorCoefficient}{{\alpha}}
\newcommand{\partfuncminvalue}[1]{{\phi_{\min,{#1}}}}
\newcommand{\partfuncmaxvalue}[1]{{\phi_{\max,{#1}}}}
\newcommand{\errorconstone}{{\varepsilon}}
\newcommand{\errorconsttwo}{{\delta}}
\newcommand{\mlLossLipschitz}{{B}}
\newcommand{\mlNumTrainingSamples}{{S}}
\newcommand{\mlIndexTrainingSamples}{{s}}
\newcommand{\mlBaseClassifier}[1]{{g_{#1}}}
\newcommand{\mlBoostClassifier}{{g}}
\newcommand{\mlBoostWeight}{{\alpha}}
\newcommand{\adaBoostNumIterations}{{T}}
\newcommand{\adaBoostIndexIterations}{{t}}
\newcommand{\adaBoostUserIndex}{{h}}
\newcommand{\adaBoostDist}[1]{{D_{#1}}}
\newcommand{\adaBoostBaseError}[1]{{\epsilon_{#1}}}
\newcommand{\adaBoostNormalization}[1]{{Z_{#1}}}

In this subsection, we focus on classification problems. The general approach is model-agnostic in the sense that arbitrary and even different \gls{ml} models can be used in the distributed agents, but we have decentralized classifiers with a low computational burden in mind, as is exemplified in the numerical simulations discussed in the following subsection.

We consider a feature alphabet $\mlInputAlphabet = \mlInputAlphabet_1 \times \dots \times \mlInputAlphabet_\numUsers$ and a label alphabet $\mlLabelAlphabet = \{-1, 1\}$ as well as an unknown, but fixed probability distribution $\mlDistribution$ on $\mlInputAlphabet \times \mlLabelAlphabet$. In the training phase, each user $\indexUsers$ observes $\mlNumTrainingSamples$ training samples
\[
T_\indexUsers
=
\left(
  \left(\mlInputAlphabetElement_\indexUsers^{(1)}, \mlLabelAlphabetElement^{(1)}\right),
  \dots,
  \left(\mlInputAlphabetElement_\indexUsers^{(\mlNumTrainingSamples)}, \mlLabelAlphabetElement^{(\mlNumTrainingSamples)}\right)
\right),
\]
where for all $\indexUsers, \mlIndexTrainingSamples$, we have $\mlInputAlphabetElement_\indexUsers^{(\mlIndexTrainingSamples)} \in \mlInputAlphabet_\indexUsers$, $\mlLabelAlphabetElement^{(\mlIndexTrainingSamples)} \in \mlLabelAlphabet$ and $(\mlInputAlphabetElement_1^{(\mlIndexTrainingSamples)}, \dots, \mlInputAlphabetElement_\numUsers^{(\mlIndexTrainingSamples)}, \mlLabelAlphabetElement^{(\mlIndexTrainingSamples)})$ is drawn according to $\mlDistribution$.

Each user $\indexUsers$ can train its own model based on $T_\indexUsers$ which is distributed according to the marginal of $\mlDistribution$ with respect to $\mlInputAlphabet_\indexUsers \times \mlLabelAlphabet$. We propose to use a slight variation of the well-known boosting technique and define a classifier
\begin{align}
\label{eq:boostclassifier}
\mlBoostClassifier := \sum_{\indexUsers=1}^{\numUsers}\mlBoostWeight_\indexUsers \mlBaseClassifier{\indexUsers},
\end{align}
where $\mlBaseClassifier{\indexUsers}$ is the base classifier locally trained at user $\indexUsers$ and $\mlBoostWeight_\indexUsers$ is a nonnegative weight. As an immediate corollary to Theorem~\ref{th:correlated} parallel to Corollary~\ref{cor:mlapplication}, $\mlBoostClassifier$ can be approximated at a central node in a distributed manner.
\begin{cor}
\label{cor:boostapplication}
Assume that $\mlLoss$ is a $\mlLossLipschitz$-Lipschitz-continuous loss. Let $\errorconstone, \errorconsttwo > 0$ and $M \geq M(\mlBoostClassifier,\errorconstone, \errorconsttwo)$ as defined in (\ref{eq:commcost}), where $\Phi^{-1}(\varepsilon) = \varepsilon$, noting that
\begin{align*}
\bar{\Delta}(\mlBoostClassifier) = 2 \sum_{\indexUsers=1}^\numUsers \mlBoostWeight_\indexUsers
,~~
\Delta(\mlBoostClassifier) = 2 \max_{\indexUsers=1}^\numUsers \mlBoostWeight_\indexUsers.
\end{align*}
Then, given any $\mlInputAlphabetElement^\numUsers = (\mlInputAlphabetElement_1, \dots, \mlInputAlphabetElement_\numUsers)$ drawn from an arbitrary\footnoteref{footnote:correlated}\footnoteref{footnote:stronger-result} distribution at the transmitters and any $\mlLabelAlphabetElement \in \mlLabelAlphabet$, through $M$ uses of the channel (\ref{eq:channel-model}), the receiver can obtain an estimate $\bar{\mlBoostClassifier}$ of $\mlBoostClassifier(\mlInputAlphabetElement^\numUsers)$ satisfying
\begin{align}
\label{eq:boostapplication-cor}
\Probability(\absolute{\mlLoss(\mlInputAlphabetElement^\numUsers,\mlLabelAlphabetElement,\bar{\mlBoostClassifier}) - \mlLoss(\mlInputAlphabetElement^\numUsers,\mlLabelAlphabetElement,\mlBoostClassifier(\mlInputAlphabetElement^\numUsers))} \geq \mlLossLipschitz \errorconstone)
\leq \errorconsttwo.
\end{align}
\end{cor}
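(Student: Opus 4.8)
The plan is to mirror the proof of Corollary~\ref{cor:mlapplication} almost verbatim, since the boosting classifier $\mlBoostClassifier = \sum_{\indexUsers=1}^{\numUsers} \mlBoostWeight_\indexUsers \mlBaseClassifier{\indexUsers}$ plays exactly the role of the additive SVM estimator there. First I would verify that $\mlBoostClassifier$ lies in the function class to which Theorem~\ref{th:correlated} applies: defining $f_\indexUsers(\mlInputAlphabetElement_\indexUsers) := \mlBoostWeight_\indexUsers \mlBaseClassifier{\indexUsers}(\mlInputAlphabetElement_\indexUsers)$, each $f_\indexUsers$ depends only on the single argument $\mlInputAlphabetElement_\indexUsers$, is measurable, and is bounded because every base classifier $\mlBaseClassifier{\indexUsers}$ takes values in $\mlLabelAlphabet = \{-1,1\}$. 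Hence $\mlBoostClassifier \in \fr_{\numUsers,\textrm{lin}} \subseteq \fr_{\textrm{mon}}$ with outer function $F = \identityFunction$ and increment majorant $\Phi = \identityFunction$, which justifies setting $\Phi^{-1}(\errorconstone) = \errorconstone$ when invoking (\ref{eq:commcost}).

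Next I would confirm the two spread identities quoted in the statement. Since $\mlBaseClassifier{\indexUsers} \in \{-1,1\}$ and $\mlBoostWeight_\indexUsers \geq 0$, the inner part $f_\indexUsers$ ranges over $\{-\mlBoostWeight_\indexUsers, \mlBoostWeight_\indexUsers\}$, so $\phi_{\max,\indexUsers} - \phi_{\min,\indexUsers} = 2\mlBoostWeight_\indexUsers$. Substituting into (\ref{eq:total-inner-spread}) and (\ref{eq:max-inner-spread}) yields $\bar{\Delta}(\mlBoostClassifier) = 2\sum_{\indexUsers=1}^{\numUsers} \mlBoostWeight_\indexUsers$ and $\Delta(\mlBoostClassifier) = 2\max_{\indexUsers=1}^{\numUsers} \mlBoostWeight_\indexUsers$, exactly as asserted.

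With class membership and spreads in hand, I would invoke Corollary~\ref{cor:communication-cost}: for $M \geq M(\mlBoostClassifier, \errorconstone, \errorconsttwo)$ there exist pre- and post-processing operations producing an estimate $\bar{\mlBoostClassifier}$ of $\mlBoostClassifier(\mlInputAlphabetElement^\numUsers)$ with $\Probability(\absolute{\bar{\mlBoostClassifier} - \mlBoostClassifier(\mlInputAlphabetElement^\numUsers)} \geq \errorconstone) \leq \errorconsttwo$, using the law of total probability to pass from the deterministic guarantee of Theorem~\ref{th:correlated} to the case of an arbitrarily (possibly correlated) distributed $\mlInputAlphabetElement^\numUsers$. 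Finally, the $\mlLossLipschitz$-Lipschitz continuity of $\mlLoss(\mlInputAlphabetElement^\numUsers, \mlLabelAlphabetElement, \cdot)$ gives $\absolute{\mlLoss(\mlInputAlphabetElement^\numUsers, \mlLabelAlphabetElement, \bar{\mlBoostClassifier}) - \mlLoss(\mlInputAlphabetElement^\numUsers, \mlLabelAlphabetElement, \mlBoostClassifier(\mlInputAlphabetElement^\numUsers))} \leq \mlLossLipschitz \absolute{\bar{\mlBoostClassifier} - \mlBoostClassifier(\mlInputAlphabetElement^\numUsers)}$, so the event on the left of (\ref{eq:boostapplication-cor}) is contained in $\{\absolute{\bar{\mlBoostClassifier} - \mlBoostClassifier(\mlInputAlphabetElement^\numUsers)} \geq \errorconstone\}$, whence (\ref{eq:boostapplication-cor}) follows.

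I expect no genuine obstacle here, as the result is a direct specialization of Theorem~\ref{th:correlated} along the lines already established for the additive SVM model; the whole argument is essentially the Lipschitz transfer step from Corollary~\ref{cor:mlapplication} combined with the two explicit spread computations. The only point deserving attention is whether the spread identities should be stated as equalities or as upper bounds: a base classifier that is constant on its input alphabet would have strictly smaller spread, but replacing the true spread by the worst-case value $2\mlBoostWeight_\indexUsers$ only enlarges the (still valid) communication-cost bound, so the stated identities may be used without loss.
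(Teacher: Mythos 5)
Your proposal is correct and follows exactly the route the paper takes: the paper's own proof is literally ``the same as for Corollary~\ref{cor:mlapplication}'', i.e., the Lipschitz transfer step plus the definition of $M(\cdot,\errorconstone,\errorconsttwo)$, which is precisely your final two paragraphs. Your additional verifications (membership of $\mlBoostClassifier$ in $\fr_{K,\textrm{lin}} \subseteq \fr_{\textrm{mon}}$, the spread identities, and the remark on constant base classifiers) are sound and merely make explicit what the paper folds into the statement of the corollary.
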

The proof is the same as for Corollary~\ref{cor:mlapplication}.

It is important to remark here that the predictor $\mlBoostClassifier$ can only be approximated at the receiver up to a residual error (which can, however, be controlled) and thus, a guarantee in terms of the $0$-$1$-loss is not sufficient to apply Corollary~\ref{cor:boostapplication} and we instead need it to be in terms of a Lipschitz-continuous loss.

This is a relatively generic framework that can in principle work with any particular boosting technique which determines weights $\mlBoostWeight_1, \dots, \mlBoostWeight_\numUsers$ and guarantees a bound on the loss of the predictor $\mlBoostClassifier$ dependent on the errors of the base classifiers $\mlBaseClassifier{1}, \dots, \mlBaseClassifier{\numUsers}$. In the following, we describe two variations of this general approach, both based on well-known ideas from \gls{ml} (cf., e.g., \cite[Chapter 6]{mohri}).

The first one, \emph{equal majority vote}, amounts to setting $\mlBoostWeight_1 = \dots = \mlBoostWeight_\numUsers = 1$ and using local classifiers $\mlBaseClassifier{\indexUsers}$ trained only on the locally available features. This method has the advantage that the whole training procedure can be carried out in a fully decentralized way without any form of coordination or exchange of information between the agents (given that the labels for the training phase are already known everywhere; but they could, e.g., be broadcast from a central point at a cost independent of the number of agents or dimensionality of the feature space).

If we have the possibility to exchange some data between the agents, we can use the following adaptation of the AdaBoost scheme~\cite[Figure 6.1]{mohri} to the distributed setting. The algorithm runs through $\adaBoostNumIterations \leq \numUsers$ iterations, choosing a user $\adaBoostUserIndex_\adaBoostIndexIterations$ at iteration $\adaBoostIndexIterations$ to provide a base classifier $\mlBaseClassifier{\adaBoostUserIndex_\adaBoostIndexIterations}$ and assigning a corresponding weight $\mlBoostWeight_{\adaBoostUserIndex_\adaBoostIndexIterations}$. It also computes probability distributions $\adaBoostDist{1}, \dots, \adaBoostDist{\adaBoostNumIterations+1}$ on the index set of the training data $\{1, \dots, \mlNumTrainingSamples\}$, initializing $\adaBoostDist{1}$ as the uniform distribution, as well as base classifier errors $\adaBoostBaseError{1}, \dots, \adaBoostBaseError{\adaBoostNumIterations}$ and normalization constants $\adaBoostNormalization{1}, \dots, \adaBoostNormalization{\adaBoostNumIterations}$. Each iteration $\adaBoostIndexIterations$ consists of the following steps:
\begin{enumerate}
 \item \label{enum:disttraining-choice} The central node chooses a user $\adaBoostUserIndex_\adaBoostIndexIterations$ and broadcasts the choice.
 \item \label{enum:disttraining-broadcast} User $\adaBoostUserIndex_\adaBoostIndexIterations$ trains a base classifier $\mlBaseClassifier{\adaBoostUserIndex_\adaBoostIndexIterations}: \mlInputAlphabet_{\adaBoostUserIndex_\adaBoostIndexIterations} \rightarrow \{-1,1\}$ on the training sample with distribution $\adaBoostDist{\adaBoostIndexIterations}$ and broadcasts the indices of the training samples incorrectly classified by $\mlBaseClassifier{\adaBoostUserIndex_\adaBoostIndexIterations}$.
 \item From this information, every node in the system computes the following:
 \begin{itemize}
  \item $\adaBoostBaseError{\adaBoostIndexIterations} := \sum_{\mlIndexTrainingSamples=1}^\mlNumTrainingSamples \adaBoostDist{\adaBoostIndexIterations}(\mlIndexTrainingSamples) \indicator{\mlBaseClassifier{\adaBoostUserIndex_\adaBoostIndexIterations}(\mlInputAlphabetElement_{\adaBoostUserIndex_\adaBoostIndexIterations}^{(\mlIndexTrainingSamples)}) \neq \mlLabelAlphabetElement^{(\mlIndexTrainingSamples)}}$
  \item $\mlBoostWeight_{\adaBoostUserIndex_\adaBoostIndexIterations} := \frac{1}{2} \log \frac{1-\adaBoostBaseError{\adaBoostIndexIterations}}{\adaBoostBaseError{\adaBoostIndexIterations}}$
  \item $\adaBoostNormalization{\adaBoostIndexIterations} := 2\sqrt{\adaBoostBaseError{\adaBoostIndexIterations}(1-\adaBoostBaseError{\adaBoostIndexIterations}))}$
  \item $\adaBoostDist{\adaBoostIndexIterations+1}(\mlIndexTrainingSamples) := \adaBoostDist{\adaBoostIndexIterations}(\mlIndexTrainingSamples) \exp(-\mlBoostWeight_{\adaBoostUserIndex_\adaBoostIndexIterations} \mlBaseClassifier{\adaBoostUserIndex_\adaBoostIndexIterations}(\mlInputAlphabetElement_{\adaBoostUserIndex_\adaBoostIndexIterations}^{(\mlIndexTrainingSamples)}) \mlLabelAlphabetElement^{(\mlIndexTrainingSamples)})/\adaBoostNormalization{\adaBoostIndexIterations}$
 \end{itemize}
\end{enumerate}
The resulting classifier is then as defined in (\ref{eq:boostclassifier}), where we assign $\mlBoostWeight_\indexUsers := 0$ whenever $\indexUsers \neq \adaBoostUserIndex_\adaBoostIndexIterations$ for all $\adaBoostIndexIterations$. \cite[Theorem 6.1]{mohri} guarantees that the empirical $0$-$1$-loss of $\mlBoostClassifier$ is at most
\begin{align}
\label{eq:adaboostbopund}
\exp\left(
  -2
  \sum_{\adaBoostIndexIterations=1}^\adaBoostNumIterations \left( \frac{1}{2} - \adaBoostBaseError{\adaBoostIndexIterations} \right)^2
\right),
\end{align}
which unfortunately is insufficient to apply Corollary~\ref{cor:boostapplication}, because the $0$-$1$-loss is not Lipschitz-continuous. However, the proof of the theorem relies only on the inequality $\indicator{\mlBoostClassifier(\mlInputAlphabetElement^\numUsers) \mlLabelAlphabetElement \leq 0} \leq \exp(-\mlBoostClassifier(\mlInputAlphabetElement^\numUsers) \mlLabelAlphabetElement)$ for the instantaneous $0$-$1$-loss. Since the inequality
$
\log(1+\exp(-\mlBoostClassifier(\mlInputAlphabetElement^\numUsers) \mlLabelAlphabetElement))
\leq
\exp(-\mlBoostClassifier(\mlInputAlphabetElement^\numUsers) \mlLabelAlphabetElement)
$
also clearly holds, we can replace the $0$-$1$-loss in the proof with the logistic loss $\mlLoss(\mlInputAlphabetElement^\numUsers,\mlLabelAlphabetElement,\hat{\mlLabelAlphabetElement}) := \log(1+\exp(-\mlLabelAlphabetElement \hat{\mlLabelAlphabetElement}))$ (or, indeed, any other loss which satisfies this inequality). This yields the same bound (\ref{eq:adaboostbopund}) on the $1$-Lipschitz-continuous logistic loss and thus we can apply Corollary~\ref{cor:boostapplication} with $\mlLossLipschitz := 1$ to derive a guarantee on the logistic loss of the distributed approximation of our AdaBoost classifier.

We conclude with some remarks on the distributed training. The choice in step \ref{enum:disttraining-choice} could, e.g., be predetermined (in which case no communication in this step is necessary) or random, but we could also greedily select the classifier with smallest error using an instance of ScalableMax~\cite{agrawal2019scalable}\cite[Section IV]{bjelakovic2019distributed}. As for the communication cost of the distributed training, step \ref{enum:disttraining-choice} exhibits a favorable scaling which is linear in $\adaBoostNumIterations$ and logarithmic in $\numUsers$, however, step \ref{enum:disttraining-broadcast} has a cost linear in the number of training samples. There is a conceptually simpler alternative to this distributed scheme in which we communicate the full training set to the central node and perform the training in a centralized manner. The advantage in communication cost of the distributed scheme over this centralized alternative is only a constant factor. On the other hand, since only one bit per training sample and user is transmitted, this constant gain could potentially be quite large, depending on the complexity of the feature spaces. Also, in the distributed training scheme, the computational load of training the base classifiers is distributed across all nodes, which may in practice also be an advantage wherever the computational capabilities of the central node are limited. However, since the distributed training currently leverages no \gls{ota} computation and leaves that for the computation of the trained classifier itself, finding a distributed scheme which can exploit \gls{ota} computation to achieve a gain in asymptotic behavior as opposed to only a constant factor could be a worthwhile question for future research.

\subsection{Numerical Results for \gls{ota}-Computed Decision Tree Classifiers}
\label{sec:vfl-classification-numerical}
In order to illustrate how the scheme analyzed in this work can be used to compute classifiers for anomaly detection problems in large sensor networks, we have conducted numerical simulations on a synthetic binary classification problem generated by the \texttt{make\_classification} function in the \emph{datasets} package of the scikit-learn toolbox~\cite{scikit-learn} for Python. It places clusters for the two classes at the edges of a hypercube in a Euclidian space of informative features, adds redundant features that are linear combinations as well as useless features that are pure noise and applies various kinds of noise and nonlinearities. The resulting features are then shuffled randomly, partitioned and assigned to the distributed agents. For the training set, the agents also learn the correct corresponding labels. We construct two different \gls{ota}-computed classifiers as described in the preceding subsection:
\begin{enumerate}
 \item For the equal majority vote classifier, each agent trains a Decision Tree model of height at most $2$ based on the locally available features only. The \gls{ota}-computed classifier is then as put forth in equation~(\ref{eq:boostclassifier}), where $\mlBoostWeight_1 = \dots = \mlBoostWeight_\numUsers = 1$.
 \item For the AdaBoost classifier, the agents train their models cooperatively as described in the preceding section, using Decision Tree classifiers as the local base classifiers. The next agent at each iteration is picked uniformly at random from among the agents which have not yet been selected. This procedure yields not only differently trained local models compared to the equal majority vote, but also weights $\mlBoostWeight_1, \dots, \mlBoostWeight_\numUsers$ which can be used for the \gls{ota}-computed classifier as in equation~(\ref{eq:boostclassifier}).
\end{enumerate}
We assume that the agents are connected to a central receiver through a fast-fading wireless multiple-access channel, where no instantaneous \gls{csi} is available. The only kind of information we assume is available is the average power of the complex Gaussian channel gain at the transmitters and the average power of the additive noise at the receiver.
The distributed classification is simulated for noise and fading drawn from i.i.d. Gaussian distributions and for a scenario exhibiting various degrees of correlation and non-Gaussian components:
\begin{itemize}
\item For the fading, we achieve this by passing the fading coefficients through a lowpass filter, which cuts off all but a given percentage of the energy (the cutoff percentage) and then re-normalizes the remaining signal.
\item For the noise, we simulate Middleton class A noise (also called impulsive noise), which is a commonly used noise model for Power Line Communications~\cite[Section 2.6.3.1]{ferreia2010power} but has also been empirically found to be a relevant phenomenon in wireless communications~\cite{middleton1993elements}. We simulate it as described in~\cite[eq. (2.49) ff.]{ferreia2010power}: In order to create one sample of noise, a random variable $\middletonIntermediateRV$ is drawn from a Poisson distribution with intensity $\middletonImpulsive$, a parameter called the \emph{impulsive index}, and then a centered Gaussian random variable with variance
\[
2\noisePower \frac{\middletonIntermediateRV/\middletonImpulsive + \middletonRatio}{1+ \middletonRatio}
\]
is drawn, where $\noisePower$ is the overall power of the noise per complex dimension and the parameter $\middletonRatio$ is called the \emph{Gaussian-to-impulsive power ratio}. Finally, a phase shift is applied drawn uniformly from the complex unit circle. This process defines non-Gaussian, but i.i.d. noise. We have therefore modified it slightly and draw one $\middletonIntermediateRV$ for every $4$ channel realizations so that we create a correlation also in the additive noise.
\end{itemize}
We simulate the computation of each of the two distributed classifiers described above in three different ways:
}
\begin{enumerate}
 \item The \gls{dfa} scheme as described in Section~\ref{sec:thproof}.
 \item \label{item:tdma_peak} A \gls{tdma} scheme with average power normalization in which only one of the agents can transmit at a time. The information is also transmitted in analog form, since each agent conveys only one bit of information and therefore digital coded schemes would not be suitable. Since the agents transmit during a much shorter time than in the \gls{dfa} scheme, we normalize their transmission power so that the average energy consumed per channel use equals that in the \gls{dfa} scheme. The only exception to this is the case when some agents have to be allocated zero channel uses, since the number of total channel uses available is smaller than the number of agents in the system. In this case, obviously, the agents allocated zero channel uses also have zero energy consumption. This scheme has a significantly higher peak transmsission power than the \gls{dfa} scheme.
 \item A \gls{tdma} scheme with peak power normalization. It works as the one under item \ref{item:tdma_peak}, but the transmission power is normalized so that it has the same peak power as the \gls{dfa} scheme, which means that it has a significantly lower average consumption.
\end{enumerate}
We also show two baselines to make the error contribution of the compared communication schemes clearer:
\begin{enumerate}
 \item a noiseless version of the majority vote classifier, and
 \item a noiseless version of the AdaBoost-inspired classifier.
\end{enumerate}

The training set consists of $50,000$ samples and the test set of $200,000$. We have generated two different binary classification problems, one for $10$ transmitters and one for $25$ transmitters.

\paragraph{Comparison of DFA and TDMA schemes for equal majority vote and AdaBoost}
\setlength{\belowcaptionskip}{-12pt}
\begin{figure}
\begin{tikzpicture}
\begin{axis}[
  xlabel={number of complex channel uses},
  ylabel={test error},
  ymin=.04,
  ymax=.3,
  xmin=1,
  xmax=250,
  yticklabel style={
    /pgf/number format/fixed,
  },
  grid,
  legend style={at={(.5,1.1)},anchor=south},
  clip=true,
]
\addplot [
          thick,
          blue,
          solid,
          name path=equal_dfa,
         ]
         table[x=channel_uses, y=equal_dfa, col sep=comma]
         {num_results_voting_classifiers_correlated/corfading_middleton_noise_bl4.csv};
\addplot [
          thick,
          red,
          solid,
          name path=ada_dfa,
         ]
         table[x=channel_uses, y=ada_dfa, col sep=comma]
         {num_results_voting_classifiers_correlated/corfading_middleton_noise_bl4.csv};
\addplot [
          thick,
          blue,
          dashed,
          name path=equal_tdma,
         ]
         table[x=channel_uses, y=equal_tdma, col sep=comma]
         {num_results_voting_classifiers_correlated/corfading_middleton_noise_bl4.csv};
\addplot [
          thick,
          red,
          dashed,
          name path=ada_tdma,
         ]
         table[x=channel_uses, y=ada_tdma, col sep=comma]
         {num_results_voting_classifiers_correlated/corfading_middleton_noise_bl4.csv};
\addplot [thick,
          blue,
          dashdotted,
          name path=equal_tdma_peak
         ]
         table[x=channel_uses, y=equal_tdma_peak, col sep=comma]
         {num_results_voting_classifiers_correlated/corfading_middleton_noise_bl4.csv};
\addplot [thick,
          red,
          dashdotted,
          name path=ada_tdma_peak,
         ]
         table[x=channel_uses, y=ada_tdma_peak, col sep=comma]
         {num_results_voting_classifiers_correlated/corfading_middleton_noise_bl4.csv};
\addplot [very thick,red,densely dotted, domain=1:250] {0.042495};
\addplot [very thick,blue,densely dotted, domain=1:250] {0.0656};
\path [name path=gain_dfa_ada_avg] (axis cs: 1, .07) -- (axis cs: 250, .07);
\path [name intersections={of=gain_dfa_ada_avg and ada_dfa,by=intersection_chuse_ada_dfa}];
\path [name intersections={of=gain_dfa_ada_avg and ada_tdma, by=intersection_chuse_ada_tdma}];
\draw[<->, thick, red] (intersection_chuse_ada_dfa) -- (intersection_chuse_ada_tdma) coordinate[midway] (dfa_gain);
\path [name path=gain_dfa_ada_peak] (axis cs: 1, .27) -- (axis cs: 250, .27);
\path [name intersections={of=gain_dfa_ada_peak and ada_dfa,by=intersection_chuse_ada_dfa_peak}];
\path [name intersections={of=gain_dfa_ada_peak and ada_tdma_peak, by=intersection_chuse_ada_tdma_peak}];
\draw[<->, thick, red] (intersection_chuse_ada_dfa_peak) -- (intersection_chuse_ada_tdma_peak) coordinate[midway] (dfa_gain_peak);
\path [name path=gain_dfa_equal_avg] (axis cs: 1, .072) -- (axis cs: 250, .072);
\path [name intersections={of=gain_dfa_equal_avg and equal_dfa,by=intersection_chuse_equal_dfa}];
\path [name intersections={of=gain_dfa_equal_avg and equal_tdma, by=intersection_chuse_equal_tdma}];
\draw[<->, thick, blue] (intersection_chuse_equal_dfa) -- (intersection_chuse_equal_tdma) coordinate[midway] (equal_dfa_gain);
\path [name path=gain_dfa_equal_peak] (axis cs: 1, .2) -- (axis cs: 250, .2);
\path [name intersections={of=gain_dfa_equal_peak and equal_dfa,by=intersection_chuse_equal_dfa_peak}];
\path [name intersections={of=gain_dfa_equal_peak and equal_tdma_peak, by=intersection_chuse_equal_tdma_peak}];
\draw[<->, thick, blue] (intersection_chuse_equal_dfa_peak) -- (intersection_chuse_equal_tdma_peak) coordinate[midway] (equal_dfa_gain_peak);

\node[blue] (equal_dfa_gain_label) at (axis cs: 60, .15) {DFA gain};
\draw[->, thick, blue] (axis cs: 46, .14) -- (equal_dfa_gain);
\draw[->, thick, blue] (axis cs: 46, .16) -- (equal_dfa_gain_peak);

\node[red] (ada_dfa_gain_label) at (axis cs: 180, .17) {DFA gain};
\draw[->, thick, red] (axis cs: 180, .18) -- (dfa_gain_peak);
\draw[->, thick, red] (axis cs: 180, .16) -- (dfa_gain);

\legend{equal majority DFA, AdaBoost DFA, equal majority TDMA average-normalized, AdaBoost TDMA average-normalized, equal majority TDMA peak-normalized, AdaBoost TDMA peak-normalized, AdaBoost noiseless, equal majority noiseless};
\end{axis}
\end{tikzpicture}
\caption{Comparison of the classification error on the test set of DFA/TDMA and equal majority/AdaBoost schemes. $10$ transmitters, cutoff percentage $80\%$, $\middletonImpulsive = 3$, $\middletonRatio = 3$, SNR $-6$ dB.}
\label{fig:10tx-chuses}
\end{figure}
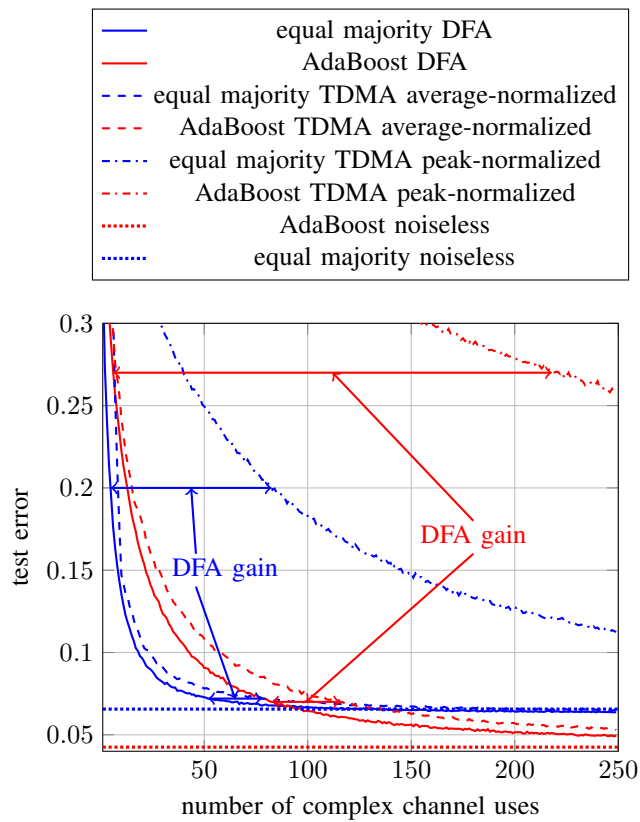

\begin{figure}
\begin{tikzpicture}
\begin{axis} [
  xlabel={SNR},
  ylabel={test error},
  ymin=.04,
  ymax=.15,
  xmin=-20,
  xmax=20,
  yticklabel style={
    /pgf/number format/fixed,
  },
  grid,
]
\addplot [
          thick,
          blue,
          solid,
          name path=equal_dfa,
         ]
         table[x=power_dB, y=equal_dfa, col sep=comma]
         {num_results_voting_classifiers_correlated/corfading_middleton_noise_bl4_snr.csv};
\addplot [
          thick,
          red,
          solid,
          name path=ada_dfa,
         ]
         table[x=power_dB, y=ada_dfa, col sep=comma]
         {num_results_voting_classifiers_correlated/corfading_middleton_noise_bl4_snr.csv};
\addplot [
          thick,
          blue,
          dashed,
          name path=equal_tdma,
         ]
         table[x=power_dB, y=equal_tdma, col sep=comma]
         {num_results_voting_classifiers_correlated/corfading_middleton_noise_bl4_snr.csv};
\addplot [
          thick,
          red,
          dashed,
          name path=ada_tdma
         ]
         table[x=power_dB, y=ada_tdma, col sep=comma]
         {num_results_voting_classifiers_correlated/corfading_middleton_noise_bl4_snr.csv};
\addplot [
          thick,
          blue,
          dashdotted,
         ]
         table[x=power_dB, y=equal_tdma_peak, col sep=comma]
         {num_results_voting_classifiers_correlated/corfading_middleton_noise_bl4_snr.csv};
\addplot [
          thick,
          red,
          dashdotted,
         ]
         table[x=power_dB, y=ada_tdma_peak, col sep=comma]
         {num_results_voting_classifiers_correlated/corfading_middleton_noise_bl4_snr.csv};
\addplot [very thick,red,densely dotted, domain=-40:40] {0.042495};
\addplot [very thick,blue,densely dotted, domain=-40:40] {0.0656};

\path [name path=gain_dfa_equal] (axis cs: 16, .05) -- (axis cs: 16, .15);
\path [name intersections={of=gain_dfa_equal and equal_dfa,by=intersection_chuse_equal_dfa}];
\path [name intersections={of=gain_dfa_equal and equal_tdma, by=intersection_chuse_equal_tdma}];
\draw[<->, thick, blue] (intersection_chuse_equal_dfa) -- (intersection_chuse_equal_tdma) coordinate[midway] (equal_dfa_gain);
\node[blue,align=center] (dfa_gain_label) at (axis cs: 10, .12) {DFA error\\floor gain};
\draw[->, thick, blue] (dfa_gain_label.south) -- (equal_dfa_gain);
\path [name path=gain_dfa_ada] (axis cs: 10, .05) -- (axis cs: 10, .15);
\path [name intersections={of=gain_dfa_ada and ada_dfa,by=intersection_chuse_ada_dfa}];
\path [name intersections={of=gain_dfa_ada and ada_tdma, by=intersection_chuse_ada_tdma}];
\draw[<->, thick, red] (intersection_chuse_ada_dfa) -- (intersection_chuse_ada_tdma) coordinate[midway] (ada_dfa_gain);
\node[red,align=center] (dfa_gain_label) at (axis cs: -10, .05) {DFA error floor gain};
\draw[->, thick, red] (dfa_gain_label.east) -- (ada_dfa_gain);

\end{axis}
\end{tikzpicture}
\caption{Comparison of the classification error on the test set of DFA/TDMA and equal majority/AdaBoost schemes. $10$ transmitters, cutoff percentage $80\%$, $\middletonImpulsive = 3$, $\middletonRatio = 3$, $50$ complex channel uses.}
\label{fig:10tx-snr}
\end{figure}
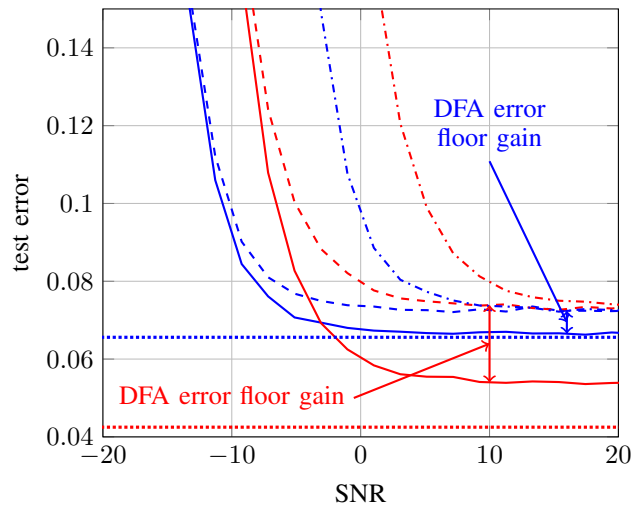

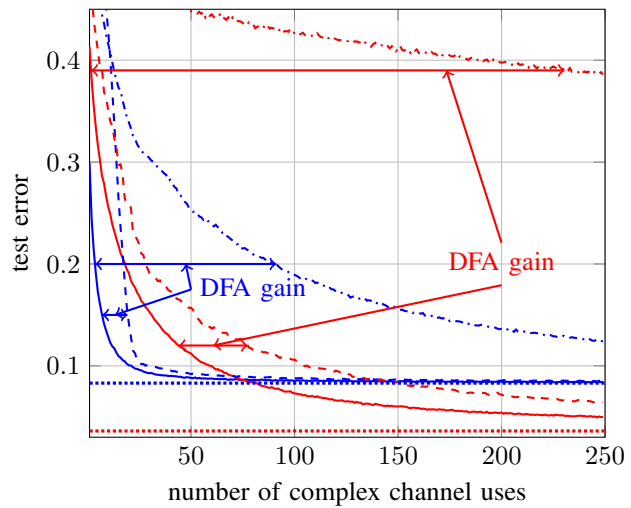
\begin{figure}
\begin{tikzpicture}
\begin{axis}[
  xlabel={number of complex channel uses},
  ylabel={test error},
  ymin=.03,
  ymax=.45,
  xmin=1,
  xmax=250,
  yticklabel style={
    /pgf/number format/fixed,
  },
  grid,
]
\addplot [
          thick,
          blue,
          solid,
          name path=equal_dfa,
         ]
         table[x=channel_uses, y=equal_dfa, col sep=comma]
         {num_results_voting_classifiers_correlated/corfading_middleton_noise_bl4_25tx.csv};
\addplot [
          thick,
          red,
          solid,
          name path=ada_dfa
         ]
         table[x=channel_uses, y=ada_dfa, col sep=comma]
         {num_results_voting_classifiers_correlated/corfading_middleton_noise_bl4_25tx.csv};
\addplot [
          thick,
          blue,
          dashed,
          name path=equal_tdma,
         ]
         table[x=channel_uses, y=equal_tdma, col sep=comma]
         {num_results_voting_classifiers_correlated/corfading_middleton_noise_bl4_25tx.csv};
\addplot [
          thick,
          red,
          dashed,
          name path=ada_tdma
         ]
         table[x=channel_uses, y=ada_tdma, col sep=comma]
         {num_results_voting_classifiers_correlated/corfading_middleton_noise_bl4_25tx.csv};
\addplot [
          thick,
          blue,
          dashdotted,
          name path=equal_tdma_peak
         ]
         table[x=channel_uses, y=equal_tdma_peak, col sep=comma]
         {num_results_voting_classifiers_correlated/corfading_middleton_noise_bl4_25tx.csv};
\addplot [
          thick,
          red,
          dashdotted,
          name path=ada_tdma_peak
         ]
         table[x=channel_uses, y=ada_tdma_peak, col sep=comma]
         {num_results_voting_classifiers_correlated/corfading_middleton_noise_bl4_25tx.csv};
\addplot [very thick,red,densely dotted, domain=1:250] {0.03612};
\addplot [very thick,blue,densely dotted, domain=1:250] {0.0831};

\path [name path=gain_dfa_ada_avg] (axis cs: 1, .12) -- (axis cs: 250, .12);
\path [name intersections={of=gain_dfa_ada_avg and ada_dfa,by=intersection_chuse_ada_dfa}];
\path [name intersections={of=gain_dfa_ada_avg and ada_tdma, by=intersection_chuse_ada_tdma}];
\draw[<->, thick, red] (intersection_chuse_ada_dfa) -- (intersection_chuse_ada_tdma) coordinate[midway] (dfa_gain);
\path [name path=gain_dfa_ada_peak] (axis cs: 1, .39) -- (axis cs: 250, .39);
\path [name intersections={of=gain_dfa_ada_peak and ada_dfa,by=intersection_chuse_ada_dfa_peak}];
\path [name intersections={of=gain_dfa_ada_peak and ada_tdma_peak, by=intersection_chuse_ada_tdma_peak}];
\draw[<->, thick, red] (intersection_chuse_ada_dfa_peak) -- (intersection_chuse_ada_tdma_peak) coordinate[near end] (dfa_gain_peak);
\path [name path=gain_dfa_equal_avg] (axis cs: 1, .15) -- (axis cs: 250, .15);
\path [name intersections={of=gain_dfa_equal_avg and equal_dfa,by=intersection_chuse_equal_dfa}];
\path [name intersections={of=gain_dfa_equal_avg and equal_tdma, by=intersection_chuse_equal_tdma}];
\draw[<->, thick, blue] (intersection_chuse_equal_dfa) -- (intersection_chuse_equal_tdma) coordinate[midway] (equal_dfa_gain);
\path [name path=gain_dfa_equal_peak] (axis cs: 1, .2) -- (axis cs: 250, .2);
\path [name intersections={of=gain_dfa_equal_peak and equal_dfa,by=intersection_chuse_equal_dfa_peak}];
\path [name intersections={of=gain_dfa_equal_peak and equal_tdma_peak, by=intersection_chuse_equal_tdma_peak}];
\draw[<->, thick, blue] (intersection_chuse_equal_dfa_peak) -- (intersection_chuse_equal_tdma_peak) coordinate[midway] (equal_dfa_gain_peak);

\node[blue,align=left] (equal_dfa_gain_label) at (axis cs: 80, .175) {DFA gain};
\draw[->, thick, blue] (equal_dfa_gain_label.west) -- (equal_dfa_gain);
\draw[->, thick, blue] (equal_dfa_gain_label.west) -- (equal_dfa_gain_peak);

\node[red,align=center] (ada_dfa_gain_label) at (axis cs: 200, .2) {DFA gain};
\draw[->, thick, red] (ada_dfa_gain_label.north) -- (dfa_gain_peak);
\draw[->, thick, red] (ada_dfa_gain_label.south) -- (dfa_gain);
\end{axis}
\end{tikzpicture}
\caption{Simulation results for $25$ transmitters at a fixed SNR of $-6$ dB, correlation parameters are the same as in Fig.~\ref{fig:10tx-chuses}.}
\label{fig:25tx-chuses}
\end{figure}

We have simulated the DFA scheme as well as the TDMA baseline comparisons for a scenario with moderate correlation and non-Gaussianity. The cutoff percentage for the lowpass filter applied to the fading was chosen at $80 \%$, and the parameters for the Middleton Class A noise were $\middletonImpulsive = 3$ and $\middletonRatio = 3$. In Fig.~\ref{fig:10tx-chuses}, we plot the classification error on the test set as a function of the number of complex channel uses for a fixed SNR of $-6$ dB and $10$ tranmitters, and in Fig.~\ref{fig:10tx-snr}, we plot the error as function of SNR for a fixed number of $50$ complex channel uses. We can see that, as the effect of the multiplicative fading dominates that of the additive noise, the schemes reach an error floor that cannot be lowered with an increase of the transmission power. When the number of complex channel uses is increased, on the other hand, the error curves approach the noiseless classification error even if the SNR is kept fixed.

For instance, to obtain a classification error of $0.07$ or better, both for the equal majority vote and AdaBoost, the average-power normalized \gls{tdma} scheme needs over $30$ channel uses more than the \gls{dfa} scheme. Since we compare with a \gls{tdma} scheme that uses the same average energy per channel use as the \gls{dfa} scheme, this means that the \gls{tdma} scheme not only consumes more wireless spectrum and/or time, but also significantly more energy. For the case of the same peak power consumption (which means that \gls{tdma} consumes less average power since transmitters are silent most of the time), the difference is huge and can be several hundred channel uses depending on the error level.

The advantage of the \gls{dfa} scheme over the \gls{tdma} alternatives is quite pronounced even at a relatively low number of only $10$ transmitters. In Fig.~\ref{fig:25tx-chuses}, we show the same plot as in Fig.~\ref{fig:10tx-chuses}, but for a different \gls{ml} problem with $25$ transmitters. It can be seen in the plot that as the number of transmitters increases, the difference in performance between the \gls{dfa} and \gls{tdma} schemes becomes even stronger. This is due to the different scaling behaviors of the schemes.

We have run these simulations for many instantiations of the randomly generated classification schemes (not depicted for lack of space) and note that while in some cases the equal majority vote scheme performs similarly as the AdaBoost scheme in the noiseless case, in many cases the error behavior of the AdaBoost scheme is much better and it is more robust, e.g. in the case that a large number of agents observes only useless features while only few agents observe the informative and repetitive features that can be used to solve the classification problem. That being said, in the case in which the equal majority vote performs similarly to AdaBoost in the noiseless case, its error behavior in the communication schemes is better since it better utilizes the available peak transmission power.

\paragraph{Synchronization errors}
\label{par:sync-errors}
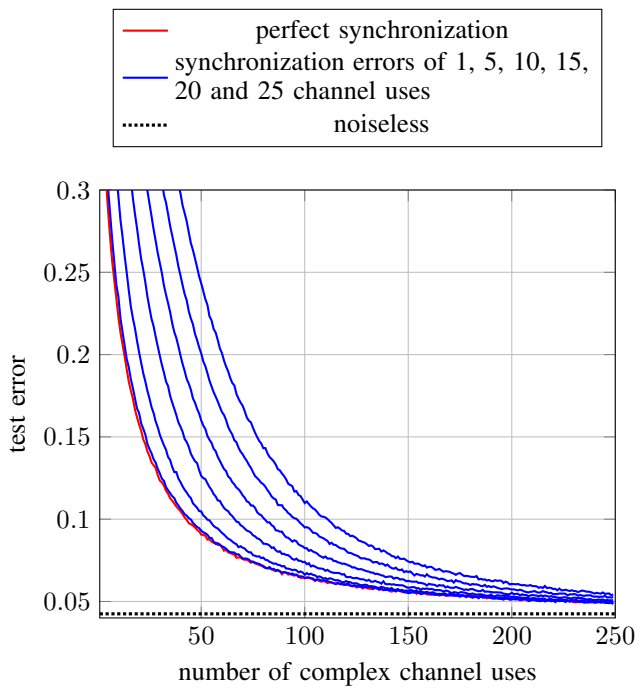
\begin{figure}
\begin{tikzpicture}
\begin{axis}[
  xlabel={number of complex channel uses},
  ylabel={test error},
  ymin=.04,
  ymax=.3,
  xmin=1,
  xmax=250,
  yticklabel style={
    /pgf/number format/fixed,
  },
  grid,
  legend style={at={(.5,1.1)},anchor=south,cells={align=left}},
  clip=true,
]
\addplot [
          thick,
          red,
          solid,
          name path=ada_dfa,
         ]
         table[x=channel_uses, y=ada_dfa, col sep=comma]
         {num_results_voting_classifiers_correlated/corfading_middleton_noise_bl4.csv};
\addplot [
          thick,
          blue,
          solid,
          name path=ada_dfa,
         ]
         table[x=channel_uses, y=ada_dfa, col sep=comma]
         {num_results_voting_classifiers_correlated/corfading_middleton_noise_bl4_unsync_1.csv};
\addplot [very thick,black,densely dotted, domain=1:600] {0.042495};
\addplot [
          thick,
          blue,
          solid,
          name path=ada_dfa,
         ]
         table[x=channel_uses, y=ada_dfa, col sep=comma]
         {num_results_voting_classifiers_correlated/corfading_middleton_noise_bl4_unsync_5.csv};
\addplot [
          thick,
          blue,
          solid,
          name path=ada_dfa,
         ]
         table[x=channel_uses, y=ada_dfa, col sep=comma]
         {num_results_voting_classifiers_correlated/corfading_middleton_noise_bl4_unsync_10.csv};
\addplot [
          thick,
          blue,
          solid,
          name path=ada_dfa,
         ]
         table[x=channel_uses, y=ada_dfa, col sep=comma]
         {num_results_voting_classifiers_correlated/corfading_middleton_noise_bl4_unsync_15.csv};
\addplot [
          thick,
          blue,
          solid,
          name path=ada_dfa,
         ]
         table[x=channel_uses, y=ada_dfa, col sep=comma]
         {num_results_voting_classifiers_correlated/corfading_middleton_noise_bl4_unsync_20.csv};
\addplot [
          thick,
          blue,
          solid,
          name path=ada_dfa,
         ]
         table[x=channel_uses, y=ada_dfa, col sep=comma]
         {num_results_voting_classifiers_correlated/corfading_middleton_noise_bl4_unsync_25.csv};
\legend{perfect synchronization, {synchronization errors of 1, 5, 10, 15, \\20 and 25 channel uses}, noiseless};
\end{axis}
\end{tikzpicture}
\caption{Impact of synchronization errors on the DFA AdaBoost scheme in the scenario of Fig.~\ref{fig:10tx-chuses}.}
\label{fig:sync-errors}
\end{figure}
Since the pre-processing described in Section~\ref{sec:pre-proc} creates a sequence of i.i.d. random variables (conditioned under $s_1, \dots, s_K$), we can expect that the scheme is quite robust to synchronization errors between the transmitters. This is important since perfect synchronization of the transmitted signals at the receiver would be a very hard task to achieve in practice. In order to substantiate this argument, we have run simulations with relatively large synchronization errors of several symbol durations: For the starting time of the transmission for each transmitter, we have added a uniformly random number of channel uses in a range of up to 1, up to 5, up to 10, up to 15, up to 20 and up to 25 channel uses. In Fig.~\ref{fig:sync-errors}, we show the impact of the synchronization errors for the AdaBoost \gls{dfa} scheme and the same choice of parameters as for Fig.~\ref{fig:10tx-chuses}. The solid red curves (representing the case of perfect synchronization) are therefore the same in both figures while the blue curves in Fig.~\ref{fig:sync-errors} depict the performance for various values of the maximum synchronization error. The performance degrades gracefully even for extremely large synchronization errors and the number of additional channel uses that needs to be expended to maintain the same classification error is about twice the value of the maximum synchronization error. We remark that we expect the number of additional channel uses that needs to be expended to scale with the synchronization error and not with the number of transmitters. Moreover, for synchronization errors of the same order of magnitude as the symbol duration, the drop in performance is barely noticable.

\paragraph{Comparison of different correlation scenarios}

\begin{figure}
\begin{tikzpicture}
\begin{axis}[
  xlabel={number of complex channel uses},
  ylabel={test error},
  ymin=.04,
  ymax=.3,
  xmin=1,
  xmax=250,
  yticklabel style={
    /pgf/number format/fixed,
  },
  grid,
  legend style={at={(.5,1.1)},anchor=south,cells={align=left}},
  clip=true,
]
\addplot [
          thick,
          green,
          solid,
          name path=ada_dfa,
         ]
         table[x=channel_uses, y=ada_dfa, col sep=comma]
         {num_results_voting_classifiers_correlated/iid.csv};
\addplot [
          thick,
          red,
          solid,
          name path=ada_dfa,
         ]
         table[x=channel_uses, y=ada_dfa, col sep=comma]
         {num_results_voting_classifiers_correlated/corfading_middleton_noise_bl4.csv};
\addplot [
          thick,
          blue,
          solid,
          name path=ada_dfa,
         ]
         table[x=channel_uses, y=ada_dfa, col sep=comma]
         {num_results_voting_classifiers_correlated/corfading0.8_middleton_noise_impulsive0.5_ratio1_bl4.csv};
\addplot [
          thick,
          brown,
          solid,
          name path=ada_dfa,
         ]
         table[x=channel_uses, y=ada_dfa, col sep=comma]
         {num_results_voting_classifiers_correlated/corfading0.5_middleton_noise_impulsive0.25_ratio1_bl4.csv};
\addplot [
          thick,
          black,
          solid,
          name path=ada_dfa,
         ]
         table[x=channel_uses, y=ada_dfa, col sep=comma]
         {num_results_voting_classifiers_correlated/corfading0.3_middleton_noise_impulsive1_ratio0.5_bl4.csv};
\addplot [very thick,black,densely dotted, domain=1:600] {0.042495};

\legend{{i.i.d. Gaussian fading and noise},{cutoff percentage $80\%$, $\middletonImpulsive = 3$, $\middletonRatio = 3$\\(same as Fig.~\ref{fig:10tx-chuses})},{cutoff percentage $80\%$, $\middletonImpulsive = 0.5$, $\middletonRatio = 1$},{cutoff percentage $50\%$, $\middletonImpulsive = 0.25$, $\middletonRatio = 1$},{cutoff percentage $30\%$, $\middletonImpulsive = 1$, $\middletonRatio = 0.5$}, noiseless};
\end{axis}
\end{tikzpicture}
\caption{Comparison of the performance of AdaBoost DFA for various choices of the non-Gaussianity and correlation parameters.}
\label{fig:correlation-comparison}
\end{figure}
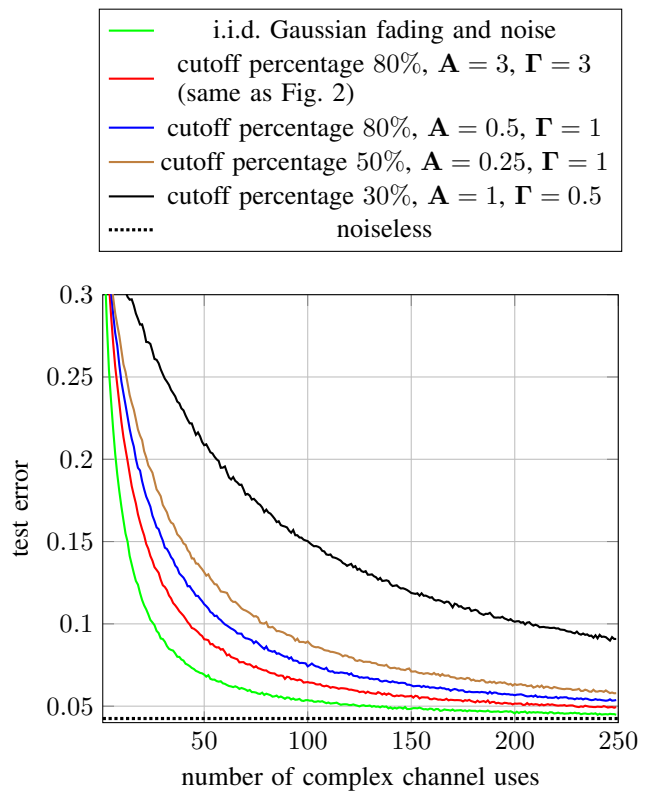
In order to get an idea of how strongly the correlation and non-Gaussianity impact on the performance of the scheme, we have compared the AdaBoost DFA scheme (the solid red curve from Fig.~\ref{fig:10tx-chuses}) for various choices of the correlation and non-Gaussianity parameters. Qualitatively, the higher the values of $\middletonImpulsive$ and $\middletonRatio$, the more closely the additive noise is to a Gaussian distribution and the lower the values, the stronger pronounced is the non-Gaussianity of the noise. For the fading, a cutoff percentage of $100 \%$ corresponds to i.i.d. Gaussian fading, while lower cutoff percentages mean that the fading changes more slowly over time. The results of our comparison in Fig.~\ref{fig:correlation-comparison} show that the scheme performs best for the Gaussian i.i.d. case but, as is expected from the theoretical analysis, the exponential decay of the error is retained even for strongly pronounced correlation and non-Gaussianity.

\section{Proof of Theorem~\ref{th:correlated}}
\label{sec:thproof}
\subsection{Pre-Processing}\label{sec:pre-proc}
In the pre-processing step we encode the function values $f_k(s_k)$, $k=1,\ldots ,K$ as transmit power:  
\begin{equation*}
X_k(m):= \sqrt{\transmitPower_k} U_k (m),
1\leq m\leq M
\end{equation*}
with $\transmitPower_k = g_k(f_k (s_k))$, where $g_k: [ \phi_{\min,k}, \phi_{\max,k}] \to [ 0, P  ] $
such that
\begin{equation}\label{eq:pre-processing}
  g_k(t):= \frac{P}{\Delta(f)} (t-\phi_{\min,k}),
  \end{equation}
 where $\Delta(f)$ is given in (\ref{eq:max-inner-spread}) and   $\phi_{\min,k}$ is defined in (\ref{eq:phi-def-spread}).\\
 $U_k (m)$, $k=1, \ldots, K$, $m=1, \ldots ,M$ are i.i.d. with the uniform distribution on $\{-1,+1\}$. We assume the random variables $U_k (m)$, $k=1,\ldots,K$, $m=1\ldots,M$, are independent of
 $H_k(m)$, $k=1,\ldots, K$, $m=1,\ldots,M$, and $N(m)$, $m=1,\ldots,M$.

We write the vector of transmitted signals at channel use $\indexChannelUses$ as
\begin{align*}
\normalizedMessages(\indexChannelUses) := \left(X_1(m),\dots,X_K(m)\right)
\end{align*}
and combine them in a matrix $\messagesMatrix :=$
\begin{align*}
\begin{pmatrix}
\normalizedMessages(1) & 0                      & 0                      & 0                      & 0      & \dots                                  & 0                                      \\
0                      & \normalizedMessages(1) & 0                      & 0                      & 0      & \dots                                  & 0                                      \\
0                      & 0                      & \normalizedMessages(2) & 0                      & 0      & \dots                                  & 0                                      \\
0                      & 0                      & 0                      & \normalizedMessages(2) & 0      & \dots                                  & 0                                      \\
                       &                        &                        &                        & \ddots &                                        &                                        \\
0                      & 0                      & 0                      & \dots                  & 0      & \normalizedMessages(\numChannelUses) & 0                                      \\
0                      & 0                      & 0                      & \dots                  & 0      & 0                                      & \normalizedMessages(\numChannelUses)
\end{pmatrix}.
\end{align*}
%
%
\subsection{Post-Processing}\label{sec:post-proc}
The vector $\rxVector$ of received signals across the $\numChannelUses$ channel uses can be written as
$
\rxVector = \messagesMatrix \cdot \channelVector + \noiseVector,
$
where $\channelVector$ and $\noiseVector$ are given in (\ref{eq:channelvector}) and (\ref{eq:noisevector}).
The post-processing is based on receive energy which has the form
\begin{align*}
\vectorNorm{\rxVector}^2
&=
\transpose{\rxVector} \rxVector
=
\transpose{(\messagesMatrix \fadingTransformMatrix \randomBaseVector + \noiseTransformMatrix \randomBaseVector)}
(\messagesMatrix \fadingTransformMatrix \randomBaseVector + \noiseTransformMatrix \randomBaseVector)
=
\transpose{\randomBaseVector}
\hwEffectiveMatrix
\randomBaseVector,
\end{align*}
where we use
\begin{align}
\hwEffectiveMatrix
&:=
\transpose{( \messagesMatrix \fadingTransformMatrix + \noiseTransformMatrix)}
( \messagesMatrix \fadingTransformMatrix + \noiseTransformMatrix)
\nonumber
\\
&=
\transpose{\fadingTransformMatrix} \transpose{\messagesMatrix}
\messagesMatrix \fadingTransformMatrix
+
\transpose{\fadingTransformMatrix} \transpose{\messagesMatrix}
\noiseTransformMatrix
+
\transpose{\noiseTransformMatrix}
\messagesMatrix \fadingTransformMatrix
+
\transpose{\noiseTransformMatrix}
\noiseTransformMatrix.
\label{eq:hwEffectiveExpression}
\end{align}

Equivalently, we can phrase this as
\begin{equation}
\label{eq:energy}  
\vectorNorm{\rxVector}^2
=
\sum_{k=1}^K  \transmitPower_k \| H_k \|_2^2
+ \bar{N}_{s^{K}},
\end{equation}
where $H_k=(H_k(1), \ldots , H_k(M))$ is a vector consisting of complex fading coefficients, and
 $\bar{N}_{s^{K}}= \sum_{m=1}^M \bar{N}_{s^K}(m)$. The random variables $\bar{N}_{s^K}(m) $, $ m=1, \ldots , M$, are given by
\begin{align}
\bar{N}_{s^K} (m) := 
  &\hphantom{+} \sum_{\substack{k ,l=1, \\ k\neq l}}^K \sqrt{\transmitPower_k \transmitPower_l} H_k(m)\overline{H_l (m)} \nonumber \\
  &\hphantom{+\sum_{\substack{k ,l=1, \\ k\neq l}}^K} \times  U_{k}(m)U_{l}(m)  \nonumber \displaybreak[0] \\ 
  & + 2 \Real \left( \overline{N(m)} \sum_{k=1}^K \sqrt{\transmitPower_k} H_k(m)U_{k}(m) \right ) \nonumber \\
  & + | N(m) |^2.
\label{eq:N-s-k}
\end{align}
The receiver computes its estimate $\bar{f}$ of $f(s_1, \dots, s_K)$ as
\begin{align*}
\bar{f}
:=
F(\bar{g}(
  \vectorNorm{\rxVector}^2
  -
  \Expectation \vectorNorm{\noiseVector}^2
)),
\end{align*}
where
\[
\bar{g}(t) := \frac{\Delta(f)}{2 \cdot M \cdot  P}t + \sum_{k=1}^K\phi_{\min,k}.
\]

\subsection{The Error Event}\label{sec:error-event}
Clearly, $\Expectation \bar{N}_{s^K} (m) = \Expectation | N(m) |^2$ (since all the other summands in (\ref{eq:N-s-k}) are centered). We can therefore conclude
\begin{align*}
\Expectation\left(
\bar{g}\left(
  \vectorNorm{\rxVector}^2 - \Expectation \vectorNorm{\noiseVector}^2
\right)
\right)
&=
\bar{g}\left(
  \Expectation \vectorNorm{\rxVector}^2 - \Expectation \vectorNorm{\noiseVector}^2
\right)
\\
&=
\sum_{\indexUsers=1}^\numUsers f_\indexUsers(s_\indexUsers).
\end{align*}
We use this to argue
\begin{align}
&\hphantom{{}={}}
\absolute{\bar{f}-f(s_1, \dots, s_\numUsers)}
\nonumber
\\
&=
\absolute{F\left(\bar{g}\left(\vectorNorm{\rxVector}^2 - \Expectation \vectorNorm{\noiseVector}^2\right)\right) - F\left(\sum_{\indexUsers=1}^\numUsers f_\indexUsers(s_\indexUsers)\right)}
\displaybreak[0]
\nonumber
\\
&\leq
\Phi\left(\absolute{\bar{g}\left(\vectorNorm{\rxVector}^2 - \Expectation \vectorNorm{\noiseVector}^2\right) - \sum_{\indexUsers=1}^\numUsers f_\indexUsers(s_\indexUsers)}\right)
\displaybreak[0]
\nonumber
\\
&=
\Phi\left(
  \absolute{\bar{g}\left(\vectorNorm{\rxVector}^2 - \Expectation \vectorNorm{\noiseVector}^2 \right)
  -
  \bar{g}\left(
    \Expectation \vectorNorm{\rxVector}^2 - \Expectation \vectorNorm{\noiseVector}^2
  \right)}
\right)
\displaybreak[0]
\nonumber
\\
&=
\Phi\left(
  \frac{\Delta(f)}{2\numChannelUses\powerconstraint}
  \absolute{\vectorNorm{\rxVector}^2 - \Expectation \vectorNorm{\rxVector}^2}
\right)
\label{eq:correlated-fY-ingredient}
\end{align}
and therefore
\begin{multline}
\label{eq:correlated-fY}
\Probability\left(
  \absolute{\bar{f}-f(s_1, \dots, s_\numUsers)}
  \geq
  \tail
\right)
\\
\leq
\Probability\left(
  \absolute{\vectorNorm{\rxVector}^2 - \Expectation \vectorNorm{\rxVector}^2}
  \geq
  \frac{2\numChannelUses\powerconstraint}{\Delta(f)}
  \Phi^{-1}(\tail)
\right).
\end{multline}

\subsection{Performance Bounds}\label{sec:performance-bounds}
Our objective is now to establish the concentration of $\vectorNorm{\rxVector}^2$ around its expectation and thus obtain an upper bound for the right hand side of (\ref{eq:correlated-fY}). To this end, we first need to establish a series of lemmas that we will use as tools.

We will split the deviation from the mean into a diagonal and an off-diagonal part. The first lemma will later help us bound the diagonal part of the error.

\begin{lemma}
\label{lemma:splitting}
Let $\generalRandomVector_1, \dots, \generalRandomVector_\rvDimension$ be independent and centered with sub-Gaussian norm at most $1$. Let $\generalTransformMatrix_1, \dots, \generalTransformMatrix_\rvDimension$ be random variables independent of $\generalRandomVector_1, \dots, \generalRandomVector_\rvDimension$ but not necessarily of each other, and assume that for all $\rvDimIndex$, $\absolute{\generalTransformMatrix_\rvDimIndex} \leq \generalUpperBound$ and $\sum_{\rvDimIndex=1}^\rvDimension \generalTransformMatrix_\rvDimIndex^2 \leq \generalSumUpperBound$ almost surely. Then we have for any $\auxVariable \in (0,1)$ and any $\mgfVariable \in (-c/(2\generalUpperBound), c/(2\generalUpperBound))$,
\begin{multline*}
\Expectation \exp\left(
  \mgfVariable
  \sum_{\rvDimIndex=1}^\rvDimension \left(
    \generalTransformMatrix_\rvDimIndex \generalRandomVector_\rvDimIndex^2
    -
    \Expectation(\generalTransformMatrix_\rvDimIndex \generalRandomVector_\rvDimIndex^2)
  \right)
\right)
\\
\leq
\exp\left(
  \frac{\mgfVariable^2}{2} \cdot \frac{8\generalSumUpperBound}{1-\auxVariable}
\right)
\Expectation\exp\left(\mgfVariable \sum_{\rvDimIndex=1}^\rvDimension \big(\generalTransformMatrix_\rvDimIndex-\Expectation(\generalTransformMatrix_\rvDimIndex)\big)\right).
\end{multline*}
\end{lemma}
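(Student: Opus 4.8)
The plan is to condition on the random variables $A_1,\dots,A_n$ and to exploit that, conditionally on them, the $X_k$ remain independent (since the $A_k$ are independent of the $X_k$). Writing $\mathbb{E}(A_kX_k^2)=\mathbb{E}(A_k)\,\mathbb{E}(X_k^2)$ by independence, I would first pull the deterministic centering out of the exponential and then use the conditional independence of the $X_k$ to factor the conditional moment generating function into a product $\prod_{k=1}^n \mathbb{E}_{X_k}\exp(\lambda A_k X_k^2)$. The idea is to bound each factor by the product of a ``mean'' contribution that reassembles, across $k$, into the $A$-dependent term on the right-hand side, and a Gaussian ``fluctuation'' factor of the form $\exp(4\lambda^2 A_k^2/(1-c))$; summing the latter over $k$ and invoking the almost-sure bound $\sum_k A_k^2\le\tilde F$ then produces exactly the claimed factor $\exp(\tfrac{\lambda^2}{2}\cdot\tfrac{8\tilde F}{1-c})$.

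The heart of the argument is a single-variable estimate for the centred square of a sub-Gaussian variable. Since $\subgaussnorm{X_k}\le1$ and $X_k$ is centred, I would use the Gaussian-comparison identity $\exp(sX_k^2)=\mathbb{E}_g\exp(\sqrt{2s}\,gX_k)$ with $g\sim\mathcal N(0,1)$ independent of $X_k$, interchange expectations, and apply $\mathbb{E}\exp(\theta X_k)\le\exp(\theta^2/2)$ to obtain
\[
\mathbb{E}\exp(sX_k^2)\le\mathbb{E}_g\exp(sg^2)=(1-2s)^{-1/2},\qquad 0\le s<\tfrac12 .
\]
The admissibility constraint $\lambda\in(-c/(2\tilde L),c/(2\tilde L))$ together with $|A_k|\le\tilde L$ guarantees $|\lambda A_k|\le|\lambda|\tilde L<c/2<\tfrac12$, so this bound applies with $s=\lambda A_k$. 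Converting $(1-2s)^{-1/2}$ into exponential form via the estimate $-\tfrac12\log(1-2s)\le s+s^2/(1-c)$ for $|s|<c/2$ -- a geometric-series bound in which the factor $1/(1-c)$ appears -- and using $\mathbb{E}X_k^2\le1$, I would extract a linear term that recombines across $k$ into $\mathbb{E}\exp(\lambda\sum_k(A_k-\mathbb{E}A_k))$ and a quadratic remainder yielding the Gaussian factor. Assembling the product, bounding $\sum_k A_k^2\le\tilde F$, and collecting the centring term then gives the stated inequality.

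I expect the main obstacle to be this single-variable step, for two reasons. First, the constants must be tracked so that the fluctuation exponent fits under $4\lambda^2A_k^2/(1-c)$ and no unspecified constants survive, which forces the logarithmic/geometric estimate to be carried out over the whole admissible range of $s$ rather than only to leading order. Second, the bound must hold for both signs of $s=\lambda A_k$: for $s\ge0$ the Gaussian-comparison bound applies directly, whereas for $s<0$ one must argue separately (using $\exp(sX_k^2)\le1$ and the centring by $\mathbb{E}X_k^2\le1$) and check that the resulting estimate still fits the common exponential envelope -- here the numerical slack in the factor $8$ is what leaves room. Keeping the centring consistent, so that what remains after peeling off the $X_k^2$ fluctuations is precisely the centred $A$-sum on the right, is the bookkeeping that requires the most care.
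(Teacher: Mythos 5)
Your high-level plan coincides with the paper's: use $\E(A_kX_k^2)=\E(A_k)\E(X_k^2)$ to separate the centering, condition on the $A_k$ and factor the conditional moment generating function over the independent $X_k$, bound each factor, and absorb $\sum_k A_k^2\le\tilde F$ into the Gaussian-type prefactor. Where you diverge is the single-variable estimate, and that is where the proposal breaks. The paper centers the square first, bounding $\E\exp\bigl(s(X_k^2-\E X_k^2)\bigr)\le\exp\bigl(\frac{s^2}{2}\cdot\frac{8}{1-c}\bigr)$ for $|s|<c/2$ via the sub-exponential norm estimate $\subexpnorm{X_k^2-\E X_k^2}\le\subexpnorm{X_k^2}\le 2\subgaussnorm{X_k}^2\le 2$ and its moment-generating-function lemma for sub-exponential variables; this is valid for \emph{both} signs of $s=\lambda A_k$, and no linear residue is created because the bounded quantity is already centered. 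You instead bound the uncentered factor, $\E\exp(sX_k^2)\le(1-2s)^{-1/2}\le\exp\bigl(s+\frac{s^2}{1-c}\bigr)$, which hard-wires the linear coefficient $1$ in place of $\E X_k^2$.

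That substitution is legitimate only one way. For $s\ge 0$ the Gaussian comparison is indeed an over-estimate, but your recombination then leaves the deterministic residue $\exp\bigl(\lambda\sum_k\E(A_k)(1-\E X_k^2)\bigr)$ (your centering subtracts $\E(A_k)\E(X_k^2)$ while your extracted linear terms add $A_k$); this residue is first order in $\lambda$, so whenever $\lambda\sum_k\E(A_k)(1-\E X_k^2)>0$ it cannot be hidden inside a quadratic prefactor for small $|\lambda|$. For $s<0$ the envelope is simply false: take $X_k\equiv 0$, which is allowed since only $\subgaussnorm{X_k}\le 1$ (hence only $\E X_k^2\le 1$) is assumed; then $\E\exp(sX_k^2)=1$, while $\exp\bigl(s+\frac{Cs^2}{1-c}\bigr)<1$ for every constant $C$ once $|s|$ is small, because the error $s(1-\E X_k^2)$ is first order in $s$. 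So neither the fallback $\E\exp(sX_k^2)\le 1$ nor the ``numerical slack in the factor $8$'' can close the negative-$s$ case: quadratic slack never repairs a wrong linear term. The repair is the paper's centering step, which makes the extracted linear coefficient $\E X_k^2$; note, however, that the exponent one then inherits is $\lambda\sum_k\E(X_k^2)(A_k-\E A_k)$, and passing from that to the claimed $\lambda\sum_k(A_k-\E A_k)$ still requires justification when $\E X_k^2<1$ --- a point the paper's own final display also passes over silently, and exactly the bookkeeping you correctly flagged as the delicate part, but did not resolve.
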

\begin{proof}
The lemma follows by a straightforward calculation
\begin{align}
&\hphantom{{}={}}
\Expectation \exp\left(
  \mgfVariable
  \sum_{\rvDimIndex=1}^\rvDimension \left(
    \generalTransformMatrix_\rvDimIndex\generalRandomVector_\rvDimIndex^2
    -
    \Expectation(\generalTransformMatrix_\rvDimIndex \generalRandomVector_\rvDimIndex^2)
  \right)
\right)
\nonumber
\displaybreak[0] \\
&=
\begin{aligned}[t]
\Expectation \Bigg(
  &\exp\left(
    \mgfVariable
    \sum_{\rvDimIndex=1}^\rvDimension \left(
      \generalTransformMatrix_\rvDimIndex
        (\generalRandomVector_\rvDimIndex^2
        -
        \Expectation(\generalRandomVector_\rvDimIndex^2)
        )
    \right)
  \right)
  \\
  &\cdot\exp\left(
    \mgfVariable
    \sum_{\rvDimIndex=1}^\rvDimension \left(
      \Expectation (\generalRandomVector_\rvDimIndex^2)
      (\generalTransformMatrix_\rvDimIndex
       -
       \Expectation(\generalTransformMatrix_\rvDimIndex
      )
    \right)
  \right)
\Bigg)
\end{aligned}
\displaybreak[0]
\\
&=
\label{eq:splitting-independence}
\begin{aligned}[t]
\Expectation_\generalTransformMatrix \Bigg(
  &\exp\left(
    \mgfVariable
    \sum_{\rvDimIndex=1}^\rvDimension \left(
      \Expectation (\generalRandomVector_\rvDimIndex^2)
      (\generalTransformMatrix_\rvDimIndex
       -
       \Expectation(\generalTransformMatrix_\rvDimIndex
      )
    \right)
  \right)
  \\
  &\cdot\prod_{\rvDimIndex=1}^\rvDimension \Expectation_\generalRandomVector \exp\left(
    (\mgfVariable \generalTransformMatrix_\rvDimIndex)
    \left(\generalRandomVector_\rvDimIndex^2
    -
    \Expectation(\generalRandomVector_\rvDimIndex^2)
    \right)
  \right)
\Bigg)
\end{aligned}
\displaybreak[0]
\\
&\leq
\label{eq:splitting-mgflemma}
\begin{aligned}[t]
\Expectation_\generalTransformMatrix \Bigg(
  &\exp\left(
    \mgfVariable
    \sum_{\rvDimIndex=1}^\rvDimension \left(
      \Expectation (\generalRandomVector_\rvDimIndex^2)
      (\generalTransformMatrix_\rvDimIndex
       -
       \Expectation(\generalTransformMatrix_\rvDimIndex
      )
    \right)
  \right)
  \\
  &\cdot\prod_{\rvDimIndex=1}^\rvDimension \exp\left(
    \frac{\mgfVariable^2}{2}
    \cdot
    \frac{8\generalTransformMatrix_\rvDimIndex^2}{1-\auxVariable}
  \right)
\Bigg)
\end{aligned}
\displaybreak[0]
\\
&\leq
\label{eq:splitting-asconditions}
\exp\left(
    \frac{\mgfVariable^2}{2}
    \cdot
    \frac{8\generalSumUpperBound}{1-\auxVariable}
  \right)
\Expectation \left(
  \exp\left(
    \mgfVariable
    \sum_{\rvDimIndex=1}^\rvDimension \left(
      \generalTransformMatrix_\rvDimIndex
      -
      \Expectation\generalTransformMatrix_\rvDimIndex
    \right)
  \right)
\right),
\end{align}
where (\ref{eq:splitting-independence}) follows by the independence assumptions, (\ref{eq:splitting-mgflemma}) is an application of Lemma~\ref{lemma:app-2} and (\ref{eq:splitting-asconditions}) holds because $\sum_{\rvDimIndex=1}^\rvDimension \generalTransformMatrix_\rvDimIndex^2 \leq \generalSumUpperBound$ almost surely.
\end{proof}

The next lemma is a slight variation of the Hanson-Wright inequality as phrased in~\cite[Theorem 6.2.1]{vershynin} and will help us bound the off-diagonal part of the error.
\begin{lemma}
\label{lemma:hw-offdiagonal}
Let $\generalRandomVector$ be an $\reals^\rvDimension$-valued random variable with independent, centered entries and assume that for all $\rvDimIndex \in \{1,\dots,\rvDimension\}$, the $\rvDimIndex$-th entry of $\generalRandomVector$ satisfies $\subgaussnorm{\generalRandomVector_\rvDimIndex} \leq \subgaussBound$. Let $\generalTransformMatrix \in \reals^{\rvDimension \times \rvDimension}$ with zeros on the diagonal and $\tail > 0$. Suppose further that $\operatornorm{\generalTransformMatrix} \leq \operatornormBound$ and $\frobeniusnorm{\generalTransformMatrix} \leq \frobeniusnormBound$. Then $\Expectation\left(\transpose{\generalRandomVector} \generalTransformMatrix \generalRandomVector\right) = 0$ and
\begin{align}
\label{eq:hw-offdiagonal-statement}
\Probability\left(
  \absolute{
    \transpose{\generalRandomVector} \generalTransformMatrix \generalRandomVector
  }
  \geq
  \tail
\right)
\leq
2\exp\left(
-\frac{\tail^2}
      {16 \subgaussBound^2 \tail \operatornormBound + 256 \subgaussBound^4 \frobeniusnormBound^2}
\right).
\end{align}
\end{lemma}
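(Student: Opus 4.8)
The plan is to reproduce the standard proof of the Hanson--Wright inequality as phrased in~\cite[Theorem~6.2.1]{vershynin}, but tracking every constant so that no unspecified factors survive. First I would dispose of the mean: since $\generalTransformMatrix$ has a zero diagonal and the entries of $\generalRandomVector$ are independent and centered, $\Expectation(\transpose{\generalRandomVector}\generalTransformMatrix\generalRandomVector) = \sum_{\generalIndexOne \neq \generalIndexTwo} \generalTransformMatrix_{\generalIndexOne\generalIndexTwo}\,\Expectation \generalRandomVector_\generalIndexOne \,\Expectation \generalRandomVector_\generalIndexTwo = 0$. The tail bound is then obtained by the exponential-moment method: for $\mgfVariable > 0$, Markov's inequality gives $\Probability(\transpose{\generalRandomVector}\generalTransformMatrix\generalRandomVector \geq \tail) \leq e^{-\mgfVariable\tail}\Expectation\exp(\mgfVariable \transpose{\generalRandomVector}\generalTransformMatrix\generalRandomVector)$, and applying the same estimate to $-\generalTransformMatrix$ (which has the same operator and Frobenius norms) followed by a union bound yields the factor $2$ and the absolute value.

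The core is a bound on the moment generating function $\Expectation\exp(\mgfVariable \transpose{\generalRandomVector}\generalTransformMatrix\generalRandomVector)$. I would first decouple: letting $\generalRandomVector'$ be an independent copy of $\generalRandomVector$, the standard decoupling inequality for convex functions gives $\Expectation\exp(\mgfVariable\transpose{\generalRandomVector}\generalTransformMatrix\generalRandomVector) \leq \Expectation\exp(4\mgfVariable\transpose{\generalRandomVector}\generalTransformMatrix\generalRandomVector')$. Conditioning on $\generalRandomVector'$ and writing $b := \generalTransformMatrix\generalRandomVector'$, the expression $\transpose{\generalRandomVector}b = \sum_\generalIndexOne b_\generalIndexOne \generalRandomVector_\generalIndexOne$ is a weighted sum of independent centered sub-Gaussian variables, so the definition~(\ref{eq:sub-gauss-norm-def-first}) of $\subgaussnorm{\cdot}$ yields $\Expectation_{\generalRandomVector}\exp(4\mgfVariable\transpose{\generalRandomVector}b) \leq \exp(8\mgfVariable^2\subgaussBound^2\euclidnorm{\generalTransformMatrix\generalRandomVector'}^2)$. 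It then remains to control $\Expectation_{\generalRandomVector'}\exp(8\mgfVariable^2\subgaussBound^2\euclidnorm{\generalTransformMatrix\generalRandomVector'}^2)$, the moment generating function of a sub-Gaussian quadratic form.

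For this last and most delicate step I would linearize with an auxiliary Gaussian: using the identity $\exp(\mu\euclidnorm{v}^2) = \Expectation_{\gaussianRandomVector}\exp(\sqrt{2\mu}\,\innerprod{\gaussianRandomVector}{v})$ for $\gaussianRandomVector\sim\mvNormal{0}{\identityMatrix_\rvDimension}$ and any fixed $v$, swapping the order of expectations, and again invoking sub-Gaussianity of the coordinates of $\generalRandomVector'$, I can reduce to $\Expectation_{\gaussianRandomVector}\exp(\mu\subgaussBound^2\,\transpose{\gaussianRandomVector}\generalTransformMatrix\transpose{\generalTransformMatrix}\gaussianRandomVector)$ with $\mu = 8\mgfVariable^2\subgaussBound^2$. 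This is now a genuine Gaussian chaos whose moment generating function equals $\prod_\rvDimIndex(1 - 2\mu\subgaussBound^2 s_\rvDimIndex^2)^{-1/2}$, where $s_\rvDimIndex$ are the singular values of $\generalTransformMatrix$; the elementary bound $-\log(1-x)\leq 2x$ on $[0,1/2]$ controls its logarithm by $2\mu\subgaussBound^2\sum_\rvDimIndex s_\rvDimIndex^2 = 2\mu\subgaussBound^2\frobeniusnorm{\generalTransformMatrix}^2$, valid as long as $2\mu\subgaussBound^2\operatornorm{\generalTransformMatrix}^2 \leq 1/2$, i.e.\ as long as $\mgfVariable$ stays below a threshold proportional to $1/(\subgaussBound^2\operatornormBound)$. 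Chaining the three estimates gives a sub-exponential bound of the form $\Expectation\exp(\mgfVariable\,\transpose{\generalRandomVector}\generalTransformMatrix\generalRandomVector)\leq\exp(16\mgfVariable^2\subgaussBound^4\frobeniusnormBound^2)$ on this range of $\mgfVariable$.

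Finally I would carry out the Bernstein optimization: minimizing $-\mgfVariable\tail + 16\mgfVariable^2\subgaussBound^4\frobeniusnormBound^2$ subject to the threshold constraint produces a bound of the form $\exp(-\tfrac12\min(\tail^2/(c_1\subgaussBound^4\frobeniusnormBound^2),\ \tail/(c_2\subgaussBound^2\operatornormBound)))$, and the elementary inequality $\frac{\tail^2}{a\tail+b}\leq\min(\tail^2/b,\,\tail/a)$ lets me absorb this minimum into the single rational exponent $\tail^2/(16\subgaussBound^2\tail\operatornormBound + 256\subgaussBound^4\frobeniusnormBound^2)$ stated in the lemma, the round constants $16$ and $256$ being deliberately loose so that the constraint is comfortably satisfied. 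The main obstacle is the third step: estimating the moment generating function of $\euclidnorm{\generalTransformMatrix\generalRandomVector'}^2$ without losing control of the constants, since a crude bound such as $\euclidnorm{\generalTransformMatrix\generalRandomVector'}^2 \leq \operatornorm{\generalTransformMatrix}^2\euclidnorm{\generalRandomVector'}^2$ is far too lossy. The Gaussian-linearization trick, which replaces the intractable sub-Gaussian chaos by an exactly computable Gaussian one, is precisely what makes the explicit constants attainable.
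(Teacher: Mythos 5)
Your proof is correct, and your constant bookkeeping does land within the stated bound, but the middle of your argument takes a genuinely different route from the paper's. Both proofs share the endpoints: the zero mean follows from the zero diagonal together with independence and centering; decoupling with the factor $4$ (via \cite[Theorem 6.1.1]{vershynin}) comes first; and at the end a constrained Chernoff optimization produces a two-case bound that is absorbed into the rational exponent via $\frac{\tail^2}{a\tail+b}\le\min\bigl(\frac{\tail^2}{b},\frac{\tail}{a}\bigr)$. The difference is in how the decoupled chaos $\transpose{\generalRandomVector}\generalTransformMatrix\generalRandomVector'$ is tamed. The paper applies its Comparison Lemma (Lemma~\ref{lemma:comparison}) to replace \emph{both} sub-Gaussian vectors by independent standard Gaussians, diagonalizes to obtain $\sum_\rvDimIndex \generalTransformMatrixSingular_\rvDimIndex \gaussianRandomVectorTransformed_\rvDimIndex \gaussianRandomVectorTransformed'_\rvDimIndex$, and then treats this as a sum of independent \emph{sub-exponential} random variables, using $\subexpnorm{XY}\le 2\subgaussnorm{X}\subgaussnorm{Y}$ from Lemma~\ref{lemma:sub-exp-properties} and the Bernstein-type machinery of Lemmas~\ref{lemma:app-1}--\ref{lemma:app-4}, with an auxiliary parameter optimized where the two tail cases meet. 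You instead integrate out $\generalRandomVector$ conditionally on $\generalRandomVector'$, linearize the resulting $\exp(\mu\euclidnorm{\generalTransformMatrix\generalRandomVector'}^2)$ through a single auxiliary Gaussian, integrate out $\generalRandomVector'$, and finish with the chi-square form $\transpose{\gaussianRandomVector}\generalTransformMatrix\transpose{\generalTransformMatrix}\gaussianRandomVector$ (distributed as $\sum_\rvDimIndex \generalTransformMatrixSingular_\rvDimIndex^2\gaussianRandomVector_\rvDimIndex^2$ by rotation invariance), whose moment generating function is known in closed form; this is essentially Rudelson and Vershynin's original proof of the Hanson--Wright inequality, whereas the paper follows the rendition in \cite{vershynin}. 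What each buys: your route needs only one auxiliary Gaussian vector and no sub-exponential norm apparatus at all, since the terminal MGF is exact and is controlled by the elementary bound $-\frac{1}{2}\log(1-x)\le x$ on $[0,\frac{1}{2}]$; the paper's route preserves the symmetric decoupled structure and reuses the sub-exponential toolbox it needs anyway for the diagonal term (Lemma~\ref{lemma:splitting} and the main proof both invoke Lemmas~\ref{lemma:app-2} and~\ref{lemma:app-4}). Quantitatively, your chain gives $\Expectation\exp\bigl(\mgfVariable\transpose{\generalRandomVector}\generalTransformMatrix\generalRandomVector\bigr)\le\exp\bigl(16\mgfVariable^2\subgaussBound^4\frobeniusnormBound^2\bigr)$ for $\abs{\mgfVariable}\le 1/(\sqrt{32}\,\subgaussBound^2\operatornormBound)$, and the two branches of the optimization yield exponents $-\tail^2/(64\subgaussBound^4\frobeniusnormBound^2)$ and $-\tail/(2\sqrt{32}\,\subgaussBound^2\operatornormBound)$; since $64\le 256$ and $2\sqrt{32}\approx 11.3\le 16$, both are dominated by $-\tail^2/(16\subgaussBound^2\tail\operatornormBound+256\subgaussBound^4\frobeniusnormBound^2)$, so your final absorption step does close the proof with the stated constants.
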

This lemma differs from \cite[Theorem 6.2.1]{vershynin} mainly in that we require the diagonal entries of $\generalTransformMatrix$ to be $0$ and that all the constants are explicit. The proof follows \cite{vershynin} closely and is given in Section~\ref{appendix-hw} of the Supplement. We remark that it is not hard to follow the proof in~\cite{vershynin} further and expand the result to matrices with non-zero diagonal elements, however, this is not relevant for the present work.

Mainly because the matrix $\hwEffectiveMatrix$ contains randomness, we need a slight modification of this lemma as well as two more lemmas exploring some specific properties of $\hwEffectiveMatrix$.
\begin{cor}
Assume a setting as in Lemma~\ref{lemma:hw-offdiagonal}, but let $\generalTransformMatrix$ be an $\reals^{\rvDimension \times \rvDimension}$-valued random variable independent of $\generalRandomVector$ such that almost surely, the diagonal entries of $\generalTransformMatrix$ are $0$, $\operatornorm{\generalTransformMatrix} \leq \operatornormBound$ and $\frobeniusnorm{\generalTransformMatrix} \leq \frobeniusnormBound$. Then $\Expectation\left(\transpose{\generalRandomVector} \generalTransformMatrix \generalRandomVector\right) = 0$ and (\ref{eq:hw-offdiagonal-statement}), considering joint expectation, respectively probability of $\generalRandomVector$ and $\generalTransformMatrix$, still hold.
\end{cor}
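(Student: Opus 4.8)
The plan is to reduce the random-matrix statement to the already-established deterministic Lemma~\ref{lemma:hw-offdiagonal} by conditioning on the realization of $\generalTransformMatrix$ and then integrating out. First I would exploit the independence of $\generalTransformMatrix$ and $\generalRandomVector$, which guarantees that the conditional law of $\generalRandomVector$ given $\generalTransformMatrix = a$ coincides with its unconditional law for (almost) every admissible realization $a$. For the expectation claim I would then write $\Expectation(\transpose{\generalRandomVector} \generalTransformMatrix \generalRandomVector) = \Expectation_\generalTransformMatrix \Expectation_\generalRandomVector(\transpose{\generalRandomVector} a \generalRandomVector)$, and for each fixed admissible $a$ the inner expectation equals $\sum_{\rvDimIndex \neq \rvDimIndexAlt} a_{\rvDimIndex \rvDimIndexAlt} \Expectation \generalRandomVector_\rvDimIndex \Expectation \generalRandomVector_\rvDimIndexAlt = 0$, since the entries of $\generalRandomVector$ are independent and centered while the diagonal of $a$ vanishes — this is exactly the first assertion of Lemma~\ref{lemma:hw-offdiagonal}. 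Taking the outer expectation over $\generalTransformMatrix$ preserves the value $0$.

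For the tail bound I would again condition on $\generalTransformMatrix$ and invoke the law of total probability, writing $\Probability(\absolute{\transpose{\generalRandomVector} \generalTransformMatrix \generalRandomVector} \geq \tail) = \Expectation_\generalTransformMatrix \Probability(\absolute{\transpose{\generalRandomVector} \generalTransformMatrix \generalRandomVector} \geq \tail \mid \generalTransformMatrix)$. By hypothesis, almost every realization of $\generalTransformMatrix$ satisfies all the assumptions of Lemma~\ref{lemma:hw-offdiagonal}: zero diagonal, $\operatornorm{\generalTransformMatrix} \leq \operatornormBound$, and $\frobeniusnorm{\generalTransformMatrix} \leq \frobeniusnormBound$. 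Applying the deterministic lemma to each such realization therefore yields the conditional estimate $2\exp(-\tail^2/(16 \subgaussBound^2 \tail \operatornormBound + 256 \subgaussBound^4 \frobeniusnormBound^2))$.

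The decisive observation — and essentially the only thing that needs checking — is that this conditional bound does \emph{not} depend on the particular realization of $\generalTransformMatrix$: the right-hand side of \eqref{eq:hw-offdiagonal-statement} involves only the uniform bounds $\operatornormBound$ and $\frobeniusnormBound$, never the individual entries of $\generalTransformMatrix$. Hence the same deterministic constant upper-bounds $\Probability(\absolute{\transpose{\generalRandomVector} \generalTransformMatrix \generalRandomVector} \geq \tail \mid \generalTransformMatrix)$ almost surely, and integrating a constant against the law of $\generalTransformMatrix$ leaves it unchanged, giving the claimed bound. I do not expect a genuine obstacle here; the only care required is to ensure measurability of the conditional probability as a function of $\generalTransformMatrix$ so that the outer expectation is well defined, together with a routine appeal to the tower property, both of which are immediate from the independence of $\generalRandomVector$ and $\generalTransformMatrix$.
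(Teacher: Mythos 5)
Your proposal is correct and follows exactly the same route as the paper's own proof: condition on a realization of $\generalTransformMatrix$ (valid since $\generalTransformMatrix$ and $\generalRandomVector$ are independent and almost every realization satisfies the hypotheses of Lemma~\ref{lemma:hw-offdiagonal}), apply the deterministic lemma conditionally, and conclude via the laws of total expectation and total probability. The paper states this in a single sentence; your write-up merely spells out the same steps, including the key point that the conditional bound is a constant independent of the realization.
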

\begin{proof}
$\Expectation\left(\transpose{\generalRandomVector} \generalTransformMatrix \generalRandomVector\right) = 0$ as well as (\ref{eq:hw-offdiagonal-statement}) hold conditional on any realization of $\generalTransformMatrix$ (except possibly in a null set) and therefore, the Corollary follows by the laws of total expectation and total probability.
\end{proof}
\begin{lemma}
\label{lemma:hwNormBound}
We have almost surely
\begin{align*}
\frobeniusnorm{\hwEffectiveMatrix}
&\leq
\begin{multlined}[t]
\left(
  \sqrt{\Delta(f || \powerconstraint)} \operatornorm{\fadingTransformMatrix} + \operatornorm{\noiseTransformMatrix}
\right)
\\
\cdot
\left(
  \sqrt{\Delta(f || \powerconstraint)} \frobeniusnorm{\fadingTransformMatrix} + \frobeniusnorm{\noiseTransformMatrix}
\right)
\end{multlined}
\displaybreak[0] \\
\operatornorm{\hwEffectiveMatrix}
&\leq
\left(
  \sqrt{\Delta(f || \powerconstraint)} \operatornorm{\fadingTransformMatrix} + \operatornorm{\noiseTransformMatrix}
\right)^2.
\end{align*}
\end{lemma}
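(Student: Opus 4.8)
The plan is to exploit the factorization $\hwEffectiveMatrix = \transpose{G}G$ with $G := \messagesMatrix\fadingTransformMatrix + \noiseTransformMatrix$, which is already exhibited in (\ref{eq:hwEffectiveExpression}), and then reduce both claims to elementary submultiplicativity properties of the operator and Frobenius norms. The single nontrivial ingredient, and the step I expect to carry the real content, is a sharp almost-sure bound on $\operatornorm{\messagesMatrix}$.

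First I would compute $\messagesMatrix\transpose{\messagesMatrix}$ directly from the block structure of $\messagesMatrix$. Each of its $2\numChannelUses$ rows carries a single nonzero block equal to some $\normalizedMessages(\indexChannelUses)$, and any two distinct rows place their nonzero blocks in disjoint column positions, so the off-diagonal entries of $\messagesMatrix\transpose{\messagesMatrix}$ vanish while every diagonal entry equals $\vectorNorm{\normalizedMessages(\indexChannelUses)}^2 = \sum_{\indexUsers=1}^\numUsers \transmitPower_\indexUsers$, using $U_\indexUsers(\indexChannelUses)^2 = 1$ and hence $X_\indexUsers(\indexChannelUses)^2 = \transmitPower_\indexUsers$. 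This gives $\messagesMatrix\transpose{\messagesMatrix} = \big(\sum_\indexUsers \transmitPower_\indexUsers\big)\identityMatrix_{2\numChannelUses}$ and therefore $\operatornorm{\messagesMatrix}^2 = \sum_\indexUsers \transmitPower_\indexUsers$. Substituting $\transmitPower_\indexUsers = \frac{\powerconstraint}{\Delta(f)}\big(f_\indexUsers(s_\indexUsers)-\phi_{\min,\indexUsers}\big)$ from (\ref{eq:pre-processing}) and bounding $f_\indexUsers(s_\indexUsers)-\phi_{\min,\indexUsers} \le \phi_{\max,\indexUsers}-\phi_{\min,\indexUsers}$ yields $\sum_\indexUsers \transmitPower_\indexUsers \le \frac{\powerconstraint}{\Delta(f)}\bar{\Delta}(f) = \Delta(f\|\powerconstraint)$, so that $\operatornorm{\messagesMatrix} \le \sqrt{\Delta(f\|\powerconstraint)}$ holds deterministically, regardless of the realization of the phase variables $U_\indexUsers(\indexChannelUses)$.

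With this in hand the operator-norm bound is immediate: $\operatornorm{\hwEffectiveMatrix} = \operatornorm{\transpose{G}G} = \operatornorm{G}^2$, and the triangle inequality together with submultiplicativity gives $\operatornorm{G} \le \operatornorm{\messagesMatrix}\operatornorm{\fadingTransformMatrix} + \operatornorm{\noiseTransformMatrix} \le \sqrt{\Delta(f\|\powerconstraint)}\operatornorm{\fadingTransformMatrix}+\operatornorm{\noiseTransformMatrix}$, which is exactly the second claim after squaring. For the Frobenius bound I would invoke the mixed inequality $\frobeniusnorm{\transpose{G}G} \le \operatornorm{\transpose{G}}\frobeniusnorm{G} = \operatornorm{G}\frobeniusnorm{G}$, reuse the same estimate for $\operatornorm{G}$, and bound $\frobeniusnorm{G} \le \frobeniusnorm{\messagesMatrix\fadingTransformMatrix}+\frobeniusnorm{\noiseTransformMatrix} \le \operatornorm{\messagesMatrix}\frobeniusnorm{\fadingTransformMatrix}+\frobeniusnorm{\noiseTransformMatrix} \le \sqrt{\Delta(f\|\powerconstraint)}\frobeniusnorm{\fadingTransformMatrix}+\frobeniusnorm{\noiseTransformMatrix}$, again using $\frobeniusnorm{AB}\le\operatornorm{A}\frobeniusnorm{B}$. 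Multiplying the two factors reproduces the stated bound verbatim.

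The main obstacle is really just confirming the exact identity $\messagesMatrix\transpose{\messagesMatrix} = \big(\sum_\indexUsers \transmitPower_\indexUsers\big)\identityMatrix_{2\numChannelUses}$; once the column-orthogonality of the row blocks of $\messagesMatrix$ and the cancellation $X_\indexUsers(\indexChannelUses)^2 = \transmitPower_\indexUsers$ are observed, the remaining steps are routine applications of standard norm inequalities, and since each of them holds for every realization of $\messagesMatrix$, the almost-sure qualifier in the statement is automatic.
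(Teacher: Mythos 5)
Your proposal is correct and follows essentially the same route as the paper's proof: both hinge on the identity $\messagesMatrix\transpose{\messagesMatrix} = \big(\sum_{\indexUsers} \transmitPower_\indexUsers\big)\identityMatrix_{2\numChannelUses}$ (the paper's (\ref{eq:correlated-QQT})), deduce $\operatornorm{\messagesMatrix}\le\sqrt{\Delta(f\|\powerconstraint)}$, and then apply the triangle inequality, submultiplicativity of the operator norm, and $\frobeniusnorm{\generalmatrixOne\generalMatrixTwo}\le\operatornorm{\generalmatrixOne}\frobeniusnorm{\generalMatrixTwo}$ to the factorization $\hwEffectiveMatrix=\transpose{(\messagesMatrix\fadingTransformMatrix+\noiseTransformMatrix)}(\messagesMatrix\fadingTransformMatrix+\noiseTransformMatrix)$. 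Your write-up is in fact slightly more explicit than the paper's, since you spell out the block-orthogonality argument behind (\ref{eq:correlated-QQT}) and the estimate $\sum_\indexUsers \transmitPower_\indexUsers \le \Delta(f\|\powerconstraint)$, both of which the paper leaves implicit.
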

\begin{proof}
In order to bound the norm of $\hwEffectiveMatrix$, we first note that
\begin{align}
\label{eq:correlated-QQT}
\messagesMatrix
\transpose{\messagesMatrix}
=
\sum_{\indexUsers=1}^\numUsers a_\indexUsers \identityMatrix_{2\numChannelUses}.
\end{align}
Therefore, we can conclude that all singular values of $\messagesMatrix$ are bounded by $\sqrt{\Delta(f || \powerconstraint)}$ and thus $\operatornorm{\messagesMatrix} \leq \sqrt{\Delta(f || \powerconstraint)}$.

Noting that $\frobeniusnorm{\generalmatrixOne\generalMatrixTwo} \leq \operatornorm{\generalmatrixOne} \frobeniusnorm{\generalMatrixTwo}$ for all matrices $\generalmatrixOne, \generalMatrixTwo$ of compatible dimensions and further noting the submultiplicativity of the operator norm and the triangle inequality of both norms, we get
\begin{align*}
\frobeniusnorm{\hwEffectiveMatrix}
&\leq
\operatornorm{\messagesMatrix \fadingTransformMatrix + \noiseTransformMatrix}\frobeniusnorm{\messagesMatrix \fadingTransformMatrix + \noiseTransformMatrix}
\\
&\leq
\left(
\operatornorm{\messagesMatrix} \operatornorm{\fadingTransformMatrix} + \operatornorm{\noiseTransformMatrix}
\right)
\left(
  \operatornorm{\messagesMatrix} \frobeniusnorm{\fadingTransformMatrix} + \frobeniusnorm{\noiseTransformMatrix}
\right)
\\
&\leq
\begin{multlined}[t]
\left(
  \sqrt{\Delta(f || \powerconstraint)} \operatornorm{\fadingTransformMatrix} + \operatornorm{\noiseTransformMatrix}
\right)
\\
\cdot
\left(
  \sqrt{\Delta(f || \powerconstraint)} \frobeniusnorm{\fadingTransformMatrix} + \frobeniusnorm{\noiseTransformMatrix}
\right)
\end{multlined}
\displaybreak[0]
\\
\operatornorm{\hwEffectiveMatrix}
&=
\operatornorm{\messagesMatrix \fadingTransformMatrix + \noiseTransformMatrix}^2
\leq
\left(
  \operatornorm{\messagesMatrix} \operatornorm{\fadingTransformMatrix} + \operatornorm{\noiseTransformMatrix}
\right)^2
\\
&\leq
\left(
  \sqrt{\Delta(f || \powerconstraint)} \operatornorm{\fadingTransformMatrix} + \operatornorm{\noiseTransformMatrix}
\right)^2
\qedhere
\end{align*}
\end{proof}
\begin{lemma}
\label{lemma:hwMGF}
We have
\begin{multline}
\label{eq:hwMGF}
\subgaussnorm{\trace \hwEffectiveMatrix}
\leq
4 \numChannelUses \Delta(f||\powerconstraint)
\operatornorm{(\fadingTransformMatrix+\uncorrelatedApprox)\transpose{(\fadingTransformMatrix-\uncorrelatedApprox)}}
\\
+ 2\sqrt{2 \powerconstraint} \frobeniusnorm{\fadingTransformMatrix\transpose{\noiseTransformMatrix}}.
\end{multline}
\end{lemma}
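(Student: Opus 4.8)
The plan is to expand $\trace\hwEffectiveMatrix$ via the representation (\ref{eq:hwEffectiveExpression}) and the cyclicity of the trace, obtaining $\trace\hwEffectiveMatrix=\trace(\messagesMatrix\fadingTransformMatrix\transpose{\fadingTransformMatrix}\transpose{\messagesMatrix})+2\trace(\messagesMatrix\fadingTransformMatrix\transpose{\noiseTransformMatrix})+\frobeniusnorm{\noiseTransformMatrix}^2$. All randomness lives in $\messagesMatrix$, whose nonzero entries are the pre-processing signs $\sqrt{\transmitPower_\indexUsers}U_\indexUsers(\indexChannelUses)$ with $U_\indexUsers(\indexChannelUses)$ i.i.d.\ uniform on $\{-1,1\}$ (so $\subgaussnorm{U_\indexUsers(\indexChannelUses)}\le1$ and $U_\indexUsers(\indexChannelUses)^2=1$). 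The term $\frobeniusnorm{\noiseTransformMatrix}^2$ is deterministic; writing the first trace entrywise, its $k=k'$ contribution is also nonrandom, while its $k\neq k'$ contribution is a centered quadratic form $T_1$ in the signs, and $2\trace(\messagesMatrix\fadingTransformMatrix\transpose{\noiseTransformMatrix})$ is a centered linear form. Since $\subgaussnorm{\cdot}$ is a semi-norm that is insensitive to additive constants, it suffices to bound $\subgaussnorm{T_1}$ and $2\subgaussnorm{\trace(\messagesMatrix\fadingTransformMatrix\transpose{\noiseTransformMatrix})}$ separately and add; these will produce the two summands of (\ref{eq:hwMGF}).

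For the linear term I would collect the coefficient of each sign, writing $\trace(\messagesMatrix\fadingTransformMatrix\transpose{\noiseTransformMatrix})=\sum_{\indexChannelUses,\indexUsers}\sqrt{\transmitPower_\indexUsers}\,U_\indexUsers(\indexChannelUses)\,w_{\indexUsers,\indexChannelUses}$, where $w_{\indexUsers,\indexChannelUses}$ sums over the two complex dimensions the corresponding entries of $\fadingTransformMatrix\transpose{\noiseTransformMatrix}$. Because the signs are independent and centered, the moment generating functions factorize, giving $\subgaussnorm{\cdot}^2\le\sum_{\indexChannelUses,\indexUsers}\transmitPower_\indexUsers w_{\indexUsers,\indexChannelUses}^2$; bounding $\transmitPower_\indexUsers\le\powerconstraint$ and applying Cauchy--Schwarz across the two complex dimensions yields $\subgaussnorm{\trace(\messagesMatrix\fadingTransformMatrix\transpose{\noiseTransformMatrix})}\le\sqrt{2\powerconstraint}\,\frobeniusnorm{\fadingTransformMatrix\transpose{\noiseTransformMatrix}}$, which after multiplication by $2$ is exactly the second summand of (\ref{eq:hwMGF}).

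The quadratic term $T_1$ is the crux. First I would use the user-uncorrelatedness of $\uncorrelatedApprox$: by Remark~\ref{remark:uncorrelatedApprox}, $\uncorrelatedApprox\transpose{\uncorrelatedApprox}$ vanishes in precisely the $k\neq k'$, same-$(\indexChannelUses,j)$ positions that define $T_1$, so I may replace $\fadingTransformMatrix\transpose{\fadingTransformMatrix}$ by $\fadingTransformMatrix\transpose{\fadingTransformMatrix}-\uncorrelatedApprox\transpose{\uncorrelatedApprox}$ inside $T_1$ without changing its value. Since signs from distinct channel uses are independent, $T_1=\sum_{\indexChannelUses=1}^{\numChannelUses}Z_\indexChannelUses$ decomposes into centered, independent per-channel-use blocks, and sub-additivity gives $\subgaussnorm{T_1}\le\sum_\indexChannelUses\subgaussnorm{Z_\indexChannelUses}$. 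Each $Z_\indexChannelUses$ is a bounded quadratic form $\transpose{v}(B^{(\indexChannelUses)}-\diag(B^{(\indexChannelUses)}))v$ in the weighted sign vector $v$ with entries $\sqrt{\transmitPower_\indexUsers}U_\indexUsers(\indexChannelUses)$, where $B^{(\indexChannelUses)}$ is the sum over the two complex dimensions of the corresponding diagonal blocks of $\fadingTransformMatrix\transpose{\fadingTransformMatrix}-\uncorrelatedApprox\transpose{\uncorrelatedApprox}$. The decisive point is that $\euclidnorm{v}^2=\sum_\indexUsers\transmitPower_\indexUsers\le\Delta(f\|\powerconstraint)$ is \emph{nonrandom}, so $\absolute{Z_\indexChannelUses}\le\euclidnorm{v}^2\,\operatornorm{B^{(\indexChannelUses)}-\diag(B^{(\indexChannelUses)})}\le 2\Delta(f\|\powerconstraint)\operatornorm{B^{(\indexChannelUses)}}$. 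As $B^{(\indexChannelUses)}$ is a sum of two principal submatrices of $\fadingTransformMatrix\transpose{\fadingTransformMatrix}-\uncorrelatedApprox\transpose{\uncorrelatedApprox}$, we get $\operatornorm{B^{(\indexChannelUses)}}\le2\operatornorm{\fadingTransformMatrix\transpose{\fadingTransformMatrix}-\uncorrelatedApprox\transpose{\uncorrelatedApprox}}$, and expressing $\fadingTransformMatrix\transpose{\fadingTransformMatrix}-\uncorrelatedApprox\transpose{\uncorrelatedApprox}$ as the symmetric part of $(\fadingTransformMatrix+\uncorrelatedApprox)\transpose{(\fadingTransformMatrix-\uncorrelatedApprox)}$ bounds this operator norm by $\operatornorm{(\fadingTransformMatrix+\uncorrelatedApprox)\transpose{(\fadingTransformMatrix-\uncorrelatedApprox)}}$ (a matrix and its transpose share their operator norm). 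Hence $Z_\indexChannelUses$ is centered in a symmetric interval of half-width $4\Delta(f\|\powerconstraint)\operatornorm{(\fadingTransformMatrix+\uncorrelatedApprox)\transpose{(\fadingTransformMatrix-\uncorrelatedApprox)}}$, so Hoeffding's lemma bounds $\subgaussnorm{Z_\indexChannelUses}$ by this half-width, and summing the $\numChannelUses$ identical bounds yields the first summand of (\ref{eq:hwMGF}).

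The step I expect to be most delicate is this treatment of $T_1$: a generic Rademacher chaos is only sub-exponential, and a naive estimate of its sub-Gaussian norm would scale with a Frobenius norm and hence grow with $\numUsers$. What saves the argument is that the signs are bounded and the weighted sign vector carries a fixed, nonrandom Euclidean mass $\sum_\indexUsers\transmitPower_\indexUsers\le\Delta(f\|\powerconstraint)$; this is exactly what converts the crude boundedness estimate into a sub-Gaussian bound governed by the \emph{operator} norm rather than the Frobenius norm, keeps the bound free of any dependence on $\numUsers$, and forces the chaos to vanish whenever $\uncorrelatedApprox=\fadingTransformMatrix$.
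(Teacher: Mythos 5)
Your proof is correct and reproduces (\ref{eq:hwMGF}) exactly, but your handling of the fading term takes a genuinely different route from the paper's. The paper uses an exact addition-of-zero identity, $\trace(\transpose{\fadingTransformMatrix}\transpose{\messagesMatrix}\messagesMatrix\fadingTransformMatrix) = \trace(\transpose{(\fadingTransformMatrix-\uncorrelatedApprox)}\transpose{\messagesMatrix}\messagesMatrix(\fadingTransformMatrix+\uncorrelatedApprox)) + \frobeniusnorm{\messagesMatrix\uncorrelatedApprox}^2$, shows that $\frobeniusnorm{\messagesMatrix\uncorrelatedApprox}^2$ is almost surely constant (exploiting the same one-nonzero-entry-per-column structure from Remark~\ref{remark:uncorrelatedApprox} that you use to kill the same-block off-diagonal entries of $\uncorrelatedApprox\transpose{\uncorrelatedApprox}$), and then bounds the remaining trace deterministically and globally by $2\numChannelUses\Delta(f\|\powerconstraint)\operatornorm{(\fadingTransformMatrix+\uncorrelatedApprox)\transpose{(\fadingTransformMatrix-\uncorrelatedApprox)}}$, via positive semidefiniteness of $\operatornorm{X}\identityMatrix - X$ and $\trace(\messagesMatrix\transpose{\messagesMatrix}) = 2\numChannelUses\sum_\indexUsers \transmitPower_\indexUsers$ from (\ref{eq:correlated-QQT}); sub-Gaussianity then follows from boundedness alone. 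You instead split the chaos by user index, subtract $\uncorrelatedApprox\transpose{\uncorrelatedApprox}$ entrywise, and treat each channel use as a separate bounded quadratic form handled by Hoeffding's lemma. These two arguments control the same random variable up to almost-sure constants (the antisymmetric part of $(\fadingTransformMatrix+\uncorrelatedApprox)\transpose{(\fadingTransformMatrix-\uncorrelatedApprox)}$ contributes nothing to the paper's trace, so its $\xi_1$ equals your $T_1$ plus a constant), rest on the same two structural facts (the deterministic value $\sum_\indexUsers \transmitPower_\indexUsers \le \Delta(f\|\powerconstraint)$ and user-uncorrelatedness of $\uncorrelatedApprox$), and produce identical constants; likewise your rotation-invariance bound for the cross term is the same estimate as the paper's bounded-differences argument in different clothing. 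Where the routes genuinely part ways is in what they can be pushed to yield: your blocks $Z_1,\dots,Z_\numChannelUses$ are independent and exactly centered, so replacing your final triangle inequality with the rotation invariance (\ref{eq:subgauss-rotinv}) of Lemma~\ref{lemma:sub-exp-properties} would sharpen the first summand of (\ref{eq:hwMGF}) to $4\sqrt{\numChannelUses}\,\Delta(f\|\powerconstraint)\operatornorm{(\fadingTransformMatrix+\uncorrelatedApprox)\transpose{(\fadingTransformMatrix-\uncorrelatedApprox)}}$, which in turn would reduce the $\numChannelUses\fadingApproxError^2$ growth of $\errorProbDependenceTerm$ discussed in Section~\ref{sec:specialcases} to $\fadingApproxError^2$ and make Theorem~\ref{th:correlated} decay exponentially even under user-correlated fading; the paper's global boundedness argument cannot see this independence across channel uses. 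Conversely, what the paper's route buys is brevity: the psd trick dispatches the boundedness in two lines, with no need for your blockwise submatrix and diagonal-removal estimates.
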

\begin{proof}
With an addition of zero, we can rewrite
\begin{align*}
&\begin{aligned}[t]
\trace
\left(
  \transpose{\fadingTransformMatrix} \transpose{\messagesMatrix}
  \messagesMatrix \fadingTransformMatrix
\right)
\end{aligned}
\\
=\,
&\begin{aligned}[t]
&\trace
\left(
  \transpose{\fadingTransformMatrix} \transpose{\messagesMatrix}
  \messagesMatrix \fadingTransformMatrix
\right)
 \\
&+
\trace
\left(
  \transpose{\fadingTransformMatrix} \transpose{\messagesMatrix}
  \messagesMatrix \uncorrelatedApprox
\right)
-
\trace
\left(
  \transpose{(\uncorrelatedApprox)} \transpose{\messagesMatrix}
  \messagesMatrix \fadingTransformMatrix
\right)
\\
&-
\trace
\left(
  \transpose{(\uncorrelatedApprox)} \transpose{\messagesMatrix}
  \messagesMatrix \uncorrelatedApprox
\right)
+
\trace
\left(
  \transpose{(\uncorrelatedApprox)} \transpose{\messagesMatrix}
  \messagesMatrix \uncorrelatedApprox
\right)
\end{aligned}
\displaybreak[0] \\
=\,
&\begin{aligned}[t]
&\trace
\left(
  \transpose{(\fadingTransformMatrix - \uncorrelatedApprox)} \transpose{\messagesMatrix}
  \messagesMatrix (\fadingTransformMatrix + \uncorrelatedApprox)
\right)
\\
&+
\trace
\left(
  \transpose{(\uncorrelatedApprox)} \transpose{\messagesMatrix}
  \messagesMatrix \uncorrelatedApprox
\right)
\end{aligned}
\end{align*}
and use this together with (\ref{eq:hwEffectiveExpression}) to conclude
\begin{align}
\nonumber
\trace \hwEffectiveMatrix
=
&\trace
\left(
  \transpose{(\fadingTransformMatrix - \uncorrelatedApprox)} \transpose{\messagesMatrix}
  \messagesMatrix (\fadingTransformMatrix + \uncorrelatedApprox)
\right)
+
2\trace
\left(
  \transpose{\noiseTransformMatrix} \messagesMatrix
  \fadingTransformMatrix
\right)
\\
&+
\trace
\left(
  \transpose{(\uncorrelatedApprox)} \transpose{\messagesMatrix}
  \messagesMatrix \uncorrelatedApprox
\right)
+
\trace
\left(
  \transpose{\noiseTransformMatrix}
  \noiseTransformMatrix
\right).
\label{eq:hwMGF-trace}
\end{align}
Next, we argue that the terms in the last line are almost surely constant. For $\trace(\transpose{\noiseTransformMatrix}\noiseTransformMatrix)$ this is immediately clear. Moreover, we have
$
\trace{\left(
  \transpose{(\uncorrelatedApprox)} \transpose{\messagesMatrix}
  \messagesMatrix \uncorrelatedApprox
\right)}
=
\frobeniusnorm{\messagesMatrix\uncorrelatedApprox}^2.
$
We note that as per Remark~\ref{remark:uncorrelatedApprox} and using corresponding notation, we have $\messagesMatrix \uncorrelatedApprox=$
\begin{align*}
\begin{pmatrix}
\normalizedMessages(1) \uncorrelatedApprox^{(1)}\hspace{-7pt} &  0                      & 0      & \dots                                  & 0                                      \\
0                      & \normalizedMessages(1) \uncorrelatedApprox^{(2)}\hspace{-7pt} & 0      & \dots                                  & 0                                      \\                       &                        & \ddots &                                        &                                        \\
0                      & \dots                  & 0      &  \normalizedMessages(\numChannelUses)\uncorrelatedApprox^{(2\numChannelUses-1)} \hspace{-10pt} & 0                                      \\
0                      & \dots                  & 0      & 0                                      & \normalizedMessages(\numChannelUses) \uncorrelatedApprox^{(2\numChannelUses)}
\end{pmatrix}
\end{align*}
and because each $\uncorrelatedApprox^{(\indexChannelUses)}$ has only one nonzero entry per column, each entry of $\messagesMatrix \uncorrelatedApprox$ is the product of $U_\indexUsers(\indexChannelUses)$ with a deterministic term for some $\indexChannelUses, \indexUsers$ and therefore, its square can take only one value almost surely, and consequently, $\frobeniusnorm{\messagesMatrix\uncorrelatedApprox}^2$ also takes only one value almost surely.

We can use this in (\ref{eq:hwMGF-trace}) and incorporate the triangle inequality to obtain
\[
\subgaussnorm{\trace \hwEffectiveMatrix}
\leq
\subgaussnorm{
\subgaussvariable_1
}
+
2
\subgaussnorm{
\subgaussvariable_2
},
\]
where
\begin{align*}
\subgaussvariable_1
&:=
\trace
\left(
  \transpose{(\fadingTransformMatrix - \uncorrelatedApprox)} \transpose{\messagesMatrix}
  \messagesMatrix (\fadingTransformMatrix + \uncorrelatedApprox)
\right)
\\
\subgaussvariable_2
&:=
\trace
\left(
  \transpose{\noiseTransformMatrix} \messagesMatrix
  \fadingTransformMatrix
\right).
\end{align*}
To the end of bounding $\subgaussnorm{\subgaussvariable_1}$, we argue
\begin{align*}
\subgaussvariable_1
&=
\trace
\left(
  \messagesMatrix
  (\fadingTransformMatrix + \uncorrelatedApprox)
  \transpose{(\fadingTransformMatrix - \uncorrelatedApprox)}
  \transpose{\messagesMatrix}
\right)
\\
&\leq
\operatornorm{
  (\fadingTransformMatrix + \uncorrelatedApprox)
  \transpose{(\fadingTransformMatrix - \uncorrelatedApprox)}
}
\trace
\left(
  \messagesMatrix
  \transpose{\messagesMatrix}
\right)
\\
&\leq
2\numChannelUses\Delta(f||P)
\operatornorm{
  (\fadingTransformMatrix + \uncorrelatedApprox)
  \transpose{(\fadingTransformMatrix - \uncorrelatedApprox)}
}.
\end{align*}
The first inequality holds because for any square matrix $\generalmatrixOne$ and compatible column vector $\generalVector$, we have
\[
\transpose{\generalVector} (\operatornorm{\generalmatrixOne} \identityMatrix - \generalmatrixOne) \generalVector
=
\euclidnorm{\generalVector}^2
\left(
  \operatornorm{\generalmatrixOne}
  -
  \transpose{\left(\frac{\generalVector}{\euclidnorm{\generalVector}}\right)} \generalmatrixOne \frac{\generalVector}{\euclidnorm{\generalVector}}
\right)
\geq
0
\]
(see, e.g., \cite[Exercise I.2.10]{bhatia}) and therefore
$
\operatornorm{
  \generalmatrixOne
}
\identityMatrix
-
\generalmatrixOne
$
is positive semidefinite. The second inequality directly follows from (\ref{eq:correlated-QQT}). It follows, e.g., from \cite[Example 1.2]{buldygin}, that $\subgaussnorm{\subgaussvariable_1}$ is upper bounded by the first summand on the right hand side in (\ref{eq:hwMGF}).

In order to bound the sub-Gaussian norm of $\subgaussvariable_2$, we view it as a function of $(U_\indexUsers(\indexChannelUses))_{\indexUsers, \indexChannelUses = 1}^{\numUsers, \numChannelUses}$ and use part of the proof of the Bounded Differences Inequality~\cite[Theorem 6.2]{boucheron2013concentration} to bound the moment generating function. To this end, we define
\[
(\matrixOneEntry{\generalIndexOne}{\generalIndexTwo})_{\generalIndexOne', \generalIndexTwo'} =
\begin{cases}
1, &\generalIndexOne' = \generalIndexOne \text{ and } \generalIndexTwo' = \generalIndexTwo \\
0, &\text{otherwise.}
\end{cases}
\]
and note that a change in the value of $U_\indexUsers(\indexChannelUses)$ changes the value of
$\subgaussvariable_2$
by
\begin{align*}
&\hphantom{{}={}}
2 \sqrt{a_\indexUsers} \trace\left(
  \transpose{\noiseTransformMatrix}
  (
    \matrixOneEntry{2\indexChannelUses-1}{\numUsers(2\indexChannelUses-2)+\indexUsers}
    +
    \matrixOneEntry{2\indexChannelUses}{\numUsers(2\indexChannelUses-1)+\indexUsers}
  )
  \fadingTransformMatrix
\right)
\\
&=
2 \sqrt{a_\indexUsers} \trace\left(
  \fadingTransformMatrix
  \transpose{\noiseTransformMatrix}
  (
    \matrixOneEntry{2\indexChannelUses-1}{\numUsers(2\indexChannelUses-2)+\indexUsers}
    +
    \matrixOneEntry{2\indexChannelUses}{\numUsers(2\indexChannelUses-1)+\indexUsers}
  )
\right)
\displaybreak[0]
\\
&=
2 \sqrt{a_\indexUsers} \left(
  (
    \fadingTransformMatrix
    \transpose{\noiseTransformMatrix}
  )_{\numUsers(2\indexChannelUses-2)+\indexUsers,2\indexChannelUses-1}
  +
  (
    \fadingTransformMatrix
    \transpose{\noiseTransformMatrix}
  )_{\numUsers(2\indexChannelUses-1)+\indexUsers,2\indexChannelUses}
\right)
\\
&\leq
2 \sqrt{\powerconstraint} \left(
  (
    \fadingTransformMatrix
    \transpose{\noiseTransformMatrix}
  )_{\numUsers(2\indexChannelUses-2)+\indexUsers,2\indexChannelUses-1}
  +
  (
    \fadingTransformMatrix
    \transpose{\noiseTransformMatrix}
  )_{\numUsers(2\indexChannelUses-1)+\indexUsers,2\indexChannelUses}
\right)
\end{align*}
Following the proof of the Bounded Differences Inequality~\cite[Theorem 6.2]{boucheron2013concentration}, we can now conclude
\begin{align*}
\subgaussnorm{\subgaussvariable_2}^2
&\leq
\begin{aligned}[t]
\frac{1}{4}
\sum_{\indexUsers=1}^\numUsers
\sum_{\indexChannelUses=1}^\numChannelUses\Big(
  &2 \sqrt{\powerconstraint}
  (\fadingTransformMatrix\transpose{\noiseTransformMatrix})_{(2\indexChannelUses-2)\numUsers + \indexUsers, 2\indexChannelUses-1}
  \\
  &+
  2 \sqrt{\powerconstraint}
  (\fadingTransformMatrix\transpose{\noiseTransformMatrix})_{(2\indexChannelUses-1)\numUsers + \indexUsers, 2\indexChannelUses}
\Big)^2
\end{aligned}
\\
&\leq
\frac{1}{4}
\cdot
2
\cdot
4
\cdot
\powerconstraint
\frobeniusnorm{\fadingTransformMatrix\transpose{\noiseTransformMatrix}}^2,
\end{align*}
concluding the proof of the lemma.
\end{proof}

\begin{proof}[Proof of Theorem~\ref{th:correlated}]
What remains to be established is the concentration of $\vectorNorm{\rxVector}^2$ around its expectation. To this end, we observe
\begin{align}
\Probability\left(
  \absolute{\vectorNorm{\rxVector}^2 - \Expectation \vectorNorm{\rxVector}^2}
  \geq
  \tail
\right)
&=
\Probability\left(
  \absolute{
    \transpose{\randomBaseVector} \hwEffectiveMatrix \randomBaseVector
    -
    \Expectation (\transpose{\randomBaseVector} \hwEffectiveMatrix \randomBaseVector)
  }
  \geq
  \tail
\right)
\nonumber
\\
&\leq
\Probability\left(
  \absolute{\hwDiagonalSum} \geq \frac{\tail}{2}
\right)
+
\Probability\left(
  \absolute{\hwNondiagonalSum} \geq \frac{\tail}{2}
\right)
\label{eq:correlated-triangle}
\end{align}
where
\begin{align*}
\hwDiagonalSum
&:=
\sum_{\generalIndexOne=1}^{2\numUsers\numChannelUses + 2\numChannelUses} \left(
  \randomBaseVector_\generalIndexOne^2 \hwEffectiveMatrix_{\generalIndexOne, \generalIndexOne}
  -
  \Expectation\left(
    \randomBaseVector_\generalIndexOne^2 \hwEffectiveMatrix_{\generalIndexOne, \generalIndexOne}
  \right)
\right)
\displaybreak[0]
\\
\hwNondiagonalSum
&:=
\sum_{\substack{\generalIndexOne,\generalIndexTwo=1 \\ \generalIndexOne \neq \generalIndexTwo}}^{2\numUsers\numChannelUses + 2\numChannelUses}
  \randomBaseVector_\generalIndexOne \randomBaseVector_\generalIndexTwo \hwEffectiveMatrix_{\generalIndexOne, \generalIndexTwo}.
\end{align*}
We use Lemma~\ref{lemma:splitting}, Lemma~\ref{lemma:hwNormBound} and Lemma~\ref{lemma:hwMGF} to bound the moment generating function of $\hwDiagonalSum$ as
\begin{align*}
\Expectation\exp(\mgfVariable\hwDiagonalSum)
&\leq
\exp\left(
  \frac{\mgfVariable^2}{2}
  \left(
    \frac{8\generalSumUpperBound_1}{1-\auxVariable}
    +
    \generalSumUpperBound_2
  \right)
\right)
\\
&\leq
\exp\left(
  \frac{\mgfVariable^2}{2}
  \cdot
  \frac{8\generalSumUpperBound_1+\generalSumUpperBound_2}{1-\auxVariable}
\right)
\end{align*}
for any $\auxVariable \in (0,1)$ and $\mgfVariable \in (-c/(2\generalUpperBound)), c/(2\generalUpperBound))$, where
\begin{align*}
\generalUpperBound
&:=
\left(
  \sqrt{\Delta(f || \powerconstraint)} \operatornorm{\fadingTransformMatrix} + \operatornorm{\noiseTransformMatrix}
\right)^2
\displaybreak[0]
\\
\generalSumUpperBound_1
&:=
\generalUpperBound
\left(
  \sqrt{\Delta(f || \powerconstraint)} \frobeniusnorm{\fadingTransformMatrix} + \frobeniusnorm{\noiseTransformMatrix}
\right)^2
\displaybreak[0]
\\
\generalSumUpperBound_2
&:=
\begin{multlined}[t]
\Big(
  4 \numChannelUses \Delta(f||\powerconstraint)
  \operatornorm{(\fadingTransformMatrix+\uncorrelatedApprox)\transpose{(\fadingTransformMatrix-\uncorrelatedApprox)}}
  \\
  + 2\sqrt{2 \powerconstraint} \frobeniusnorm{\fadingTransformMatrix\transpose{\noiseTransformMatrix}}
\Big)^2
\end{multlined}
\end{align*}
By Lemma~\ref{lemma:app-4}, this yields
\begin{align*}
\Probability\left(\absolute{\hwDiagonalSum} \geq \frac{\tail}{2}\right)
\leq
2\exp\left(
  -(1-\auxVariable) \frac{\tail^2}{64\generalSumUpperBound_1+8\generalSumUpperBound_2}
\right)
\end{align*}
in case $0 < \tail \leq \frac{\auxVariable}{1-\auxVariable} \cdot \frac{8 \generalSumUpperBound_1 + \generalSumUpperBound_2}{\generalUpperBound}$ and
\begin{align*}
\Probability\left(\absolute{\hwDiagonalSum} \geq \frac{\tail}{2}\right)
\leq
2\exp\left(
  -\frac{\auxVariable \tail}{8\generalUpperBound}
\right)
\end{align*}
otherwise. Since the first case term is increasing with $\auxVariable$ and the second case term is decreasing, the optimal value for $\auxVariable$ is where the two cases meet, which is at
\begin{align*}
\auxVariable = \frac{\generalUpperBound\tail}{\generalUpperBound\tail + 8\generalSumUpperBound_1 + \generalSumUpperBound_2}.
\end{align*}
Substituting this, we get
\begin{align*}
\Probability\left(\absolute{\hwDiagonalSum} \geq \frac{\tail}{2}\right)
\leq
2\exp\left(
  -\frac{\tail^2}{64\generalSumUpperBound_1 + 8\generalSumUpperBound_2 + 8\generalUpperBound\tail}
\right).
\end{align*}
Turning our attention to $\hwNondiagonalSum$, we note that by~\cite[Theorem 2.1]{bhatia1989comparing} the operator norm of the off-diagonal part of $\hwEffectiveMatrix$ can be upper bounded by $2\operatornorm{\hwEffectiveMatrix}$ and thus by $2\generalUpperBound$. Therefore, we can directly apply Lemma~\ref{lemma:hw-offdiagonal} and get
\begin{align*}
\Probability\left(\absolute{\hwNondiagonalSum} \geq \frac{\tail}{2}\right)
\leq
2\exp\left(
-
\frac{\tail^2}{1024 \generalSumUpperBound_1 + 64 \generalUpperBound\tail}
\right).
\end{align*}
Substituting these into (\ref{eq:correlated-triangle}) and using (\ref{eq:correlated-fY}) concludes the proof.
\end{proof}

\section{Conclusion}
In this paper, we have proposed and analyzed a scheme for analog \gls{ota} computation in a very general setting of fast-fading channels with no \gls{csi}. The derived nonasymptotical bounds are valid for large classes of fading and noise that can have non-Gaussian and correlated components. They work regardless of how the sources are distributed and in particular allow for arbitrary correlations between them. We have given examples for how the scheme can be applied to distributed \gls{ml} problems and demonstrated the suitability with extensive numerical simulations for the special case of binary classification problems.

Promising questions for future research include:
\begin{itemize}
 \item Can the error bounds of Theorem~\ref{th:correlated} be sharpened?
 \item Can the results of this work be applied to derive theoretical error guarantees for training in Horizontal \gls{fl}?
 \item Are there more efficient decentralized training algorithms for \gls{vfl} available taking advantage of \gls{ota} computation?
 \item In~\cite{amiri2019collaborative}, an analog \gls{ota} computation scheme that takes advantage of \gls{mimo} in the case of \gls{csi} at the receiver has been proposed and analyzed asymptotically. The authors of~\cite{goldenbaum2014channel} have explored the case of multiple receive antennas and (among other cases) no \gls{csi} at the transmitter or receiver and provided asymptotic error bounds. It would be interesting to consider the case of no \gls{csi} in a \gls{mimo} system and to find nonasymptotic characterizations of the performance gain that can be achieved.
 \item \cite{wang2020wireless} proposes a way of optimizing the performance of \gls{ota} computation schemes by means of channel shaping via \glspl{irs}. The error bounds derived in this work quantify the performance of \gls{ota} computation schemes in dependence of the correlation properties of the wireless channel. With this knowledge, it would be interesting to investigate how we could use \glspl{irs} in such a way that we optimize the performance of the \gls{ota} computation.
\end{itemize}

\bibliographystyle{IEEEtran}
\bibliography{references}

\begin{thebibliography}{10}
\providecommand{\url}[1]{#1}
\csname url@samestyle\endcsname
\providecommand{\newblock}{\relax}
\providecommand{\bibinfo}[2]{#2}
\providecommand{\BIBentrySTDinterwordspacing}{\spaceskip=0pt\relax}
\providecommand{\BIBentryALTinterwordstretchfactor}{4}
\providecommand{\BIBentryALTinterwordspacing}{\spaceskip=\fontdimen2\font plus
\BIBentryALTinterwordstretchfactor\fontdimen3\font minus
  \fontdimen4\font\relax}
\providecommand{\BIBforeignlanguage}[2]{{%
\expandafter\ifx\csname l@#1\endcsname\relax
\typeout{** WARNING: IEEEtran.bst: No hyphenation pattern has been}%
\typeout{** loaded for the language `#1'. Using the pattern for}%
\typeout{** the default language instead.}%
\else
\language=\csname l@#1\endcsname
\fi
#2}}
\providecommand{\BIBdecl}{\relax}
\BIBdecl

\bibitem{bjelakovic2019distributed}
I.~{Bjelaković}, M.~{Frey}, and S.~{Stańczak}, ``Distributed approximation of
  functions over fast fading channels with applications to distributed learning
  and the max-consensus problem,'' in \emph{2019 57th Annual Allerton
  Conference on Communication, Control, and Computing (Allerton)}, Sep. 2019,
  pp. 1146--1153.

\bibitem{frey2021over}
M.~Frey, I.~Bjelaković, and Sta{\'n}czak, ``Over-the-air computation in
  correlated channels,'' in \emph{2020 IEEE Information Theory Workshop
  (ITW)}.\hskip 1em plus 0.5em minus 0.4em\relax IEEE, 2021.

\bibitem{gupta2000capacity}
P.~Gupta and P.~R. Kumar, ``The capacity of wireless networks,'' \emph{IEEE
  Transactions on information theory}, vol.~46, no.~2, pp. 388--404, 2000.

\bibitem{shi2020communication}
Y.~Shi, K.~Yang, T.~Jiang, J.~Zhang, and K.~B. Letaief,
  ``Communication-efficient edge ai: Algorithms and systems,'' \emph{IEEE
  Communications Surveys \& Tutorials}, vol.~22, no.~4, pp. 2167--2191, 2020.

\bibitem{nazer2007computation}
B.~Nazer and M.~Gastpar, ``Computation over multiple-access channels,''
  \emph{IEEE Transactions on information theory}, vol.~53, no.~10, pp.
  3498--3516, 2007.

\bibitem{gastpar2003source}
M.~Gastpar and M.~Vetterli, ``Source-channel communication in sensor
  networks,'' in \emph{Information Processing in Sensor Networks}.\hskip 1em
  plus 0.5em minus 0.4em\relax Springer, 2003, pp. 162--177.

\bibitem{goldenbaum2013robust}
M.~Goldenbaum and S.~Stanczak, ``Robust analog function computation via
  wireless multiple-access channels,'' \emph{IEEE Transactions on
  Communications}, vol.~61, no.~9, pp. 3863--3877, 2013.

\bibitem{goldenbaum2013harnessing}
M.~Goldenbaum, H.~Boche, and S.~Sta{\'n}czak, ``Harnessing interference for
  analog function computation in wireless sensor networks,'' \emph{IEEE
  Transactions on Signal Processing}, vol.~61, no.~20, pp. 4893--4906, 2013.

\bibitem{goldenbaum2014nomographic}
------, ``Nomographic functions: Efficient computation in clustered gaussian
  sensor networks,'' \emph{IEEE Transactions on Wireless Communications},
  vol.~14, no.~4, pp. 2093--2105, 2014.

\bibitem{zhan2009mimo}
J.~Zhan, B.~Nazer, M.~Gastpar, and U.~Erez, ``{MIMO} compute-and-forward,'' in
  \emph{2009 IEEE International Symposium on Information Theory}.\hskip 1em
  plus 0.5em minus 0.4em\relax IEEE, 2009, pp. 2848--2852.

\bibitem{ordentlich2011practical}
O.~Ordentlich, J.~Zhan, U.~Erez, M.~Gastpar, and B.~Nazer, ``Practical code
  design for compute-and-forward,'' in \emph{2011 IEEE International Symposium
  on Information Theory Proceedings}.\hskip 1em plus 0.5em minus 0.4em\relax
  IEEE, 2011, pp. 1876--1880.

\bibitem{nazer2011compute}
B.~Nazer and M.~Gastpar, ``Compute-and-forward: Harnessing interference through
  structured codes,'' \emph{IEEE Transactions on Information Theory}, vol.~57,
  no.~10, pp. 6463--6486, 2011.

\bibitem{nazer2016expanding}
B.~Nazer, V.~R. Cadambe, V.~Ntranos, and G.~Caire, ``Expanding the
  compute-and-forward framework: Unequal powers, signal levels, and multiple
  linear combinations,'' \emph{IEEE Transactions on Information Theory},
  vol.~62, no.~9, pp. 4879--4909, 2016.

\bibitem{goldenbaum2016harnessing}
M.~Goldenbaum, P.~Jung, M.~Raceala-Motoc, J.~Schreck, S.~Sta{\'n}czak, and
  C.~Zhou, ``Harnessing channel collisions for efficient massive access in 5{G}
  networks: A step forward to practical implementation,'' in \emph{2016 9th
  International Symposium on Turbo Codes and Iterative Information Processing
  (ISTC)}.\hskip 1em plus 0.5em minus 0.4em\relax IEEE, 2016, pp. 335--339.

\bibitem{goldenbaum2014channel}
M.~Goldenbaum and S.~Stanczak, ``On the channel estimation effort for analog
  computation over wireless multiple-access channels,'' \emph{IEEE Wireless
  Communications Letters}, vol.~3, no.~3, pp. 261--264, 2014.

\bibitem{kiril}
K.~Ralinovski, M.~Goldenbaum, and S.~Sta{\'n}czak, ``Energy-efficient
  classification for anomaly detection: The wireless channel as a helper,'' in
  \emph{2016 IEEE International Conference on Communications (ICC)}, 2016, pp.
  1--6.

\bibitem{liu2020over}
W.~Liu, X.~Zang, Y.~Li, and B.~Vucetic, ``Over-the-air computation systems:
  Optimization, analysis and scaling laws,'' \emph{IEEE Transactions on
  Wireless Communications}, 2020.

\bibitem{dong2020blind}
J.~Dong, Y.~Shi, and Z.~Ding, ``Blind over-the-air computation and data fusion
  via provable wirtinger flow,'' \emph{IEEE Transactions on Signal Processing},
  vol.~68, pp. 1136--1151, 2020.

\bibitem{amiri2020machine}
M.~M. Amiri and D.~G{\"u}nd{\"u}z, ``Machine learning at the wireless edge:
  Distributed stochastic gradient descent over-the-air,'' \emph{IEEE
  Transactions on Signal Processing}, vol.~68, pp. 2155--2169, 2020.

\bibitem{ahn2019wireless}
J.-H. Ahn, O.~Simeone, and J.~Kang, ``Wireless federated distillation for
  distributed edge learning with heterogeneous data,'' in \emph{2019 IEEE 30th
  Annual International Symposium on Personal, Indoor and Mobile Radio
  Communications (PIMRC)}.\hskip 1em plus 0.5em minus 0.4em\relax IEEE, 2019,
  pp. 1--6.

\bibitem{mcmahan2017communication}
B.~McMahan, E.~Moore, D.~Ramage, S.~Hampson, and B.~A. y~Arcas,
  ``Communication-efficient learning of deep networks from decentralized
  data,'' in \emph{Proceedings of the 20 th International Conference on
  Artificial Intelligence and Statistics (AISTATS) 2017}, 2017, pp. 1273--1282.

\bibitem{konecny2016federated}
J.~Kone{\v{c}}n{\`y}, H.~B. McMahan, F.~X. Yu, P.~Richt{\'a}rik, A.~T. Suresh,
  and D.~Bacon, ``Federated learning: Strategies for improving communication
  efficiency,'' \emph{arXiv preprint arXiv:1610.05492}, 2016.

\bibitem{zhu2019broadband}
G.~Zhu, Y.~Wang, and K.~Huang, ``Broadband analog aggregation for low-latency
  federated edge learning,'' \emph{IEEE Transactions on Wireless
  Communications}, 2019.

\bibitem{amiri2020federated}
M.~M. Amiri and D.~G{\"u}nd{\"u}z, ``Federated learning over wireless fading
  channels,'' \emph{IEEE Transactions on Wireless Communications}, 2020.

\bibitem{yang2020federated}
K.~Yang, T.~Jiang, Y.~Shi, and Z.~Ding, ``Federated learning via over-the-air
  computation,'' \emph{IEEE Transactions on Wireless Communications}, vol.~19,
  no.~3, pp. 2022--2035, 2020.

\bibitem{sery2020analog}
T.~Sery and K.~Cohen, ``On analog gradient descent learning over multiple
  access fading channels,'' \emph{IEEE Transactions on Signal Processing},
  vol.~68, pp. 2897--2911, 2020.

\bibitem{sun2019energy}
Y.~Sun, S.~Zhou, and D.~G{\"u}nd{\"u}z, ``Energy-aware analog aggregation for
  federated learning with redundant data,'' \emph{arXiv preprint
  arXiv:1911.00188}, 2019.

\bibitem{seif2020wireless}
M.~Seif, R.~Tandon, and M.~Li, ``Wireless federated learning with local
  differential privacy,'' \emph{arXiv preprint arXiv:2002.05151}, 2020.

\bibitem{amiri2019collaborative}
M.~M. Amiri, T.~M. Duman, and D.~Gunduz, ``Collaborative machine learning at
  the wireless edge with blind transmitters,'' \emph{arXiv preprint
  arXiv:1907.03909}, 2019.

\bibitem{buck1976approximate}
R.~C. Buck, ``Approximate complexity and functional representation.'' Wisconsin
  University Madison Mathematics Research Center, Tech. Rep., 1976.

\bibitem{kolmogorov1957representation}
A.~N. Kolmogorov, ``On the representation of continuous functions of several
  variables by superposition of continuous functions of one variable and
  addition,'' \emph{Dokl. Akad. Nauk SSS}, vol. 114, pp. 953--956, 1957.

\bibitem{buck1982nomographic}
R.~C. Buck, ``Nomographic functions are nowhere dense,'' \emph{Proceedings of
  the American Mathematical Society}, pp. 195--199, 1982.

\bibitem{middleton1993elements}
D.~Middleton and A.~D. Spaulding, ``Elements of weak signal detection in
  non-gaussian noise environments,'' in \emph{Advances in Statistical Signal
  Processing}, V.~Poor and J.~B. Thomas, Eds.\hskip 1em plus 0.5em minus
  0.4em\relax JAI Press, 1993, vol.~2, pp. 137--215.

\bibitem{blackard1993measurements}
K.~L. Blackard, T.~S. Rappaport, and C.~W. Bostian, ``Measurements and models
  of radio frequency impulsive noise for indoor wireless communications,''
  \emph{IEEE Journal on selected areas in communications}, vol.~11, no.~7, pp.
  991--1001, 1993.

\bibitem{middleton1999non}
D.~Middleton, ``Non-gaussian noise models in signal processing for
  telecommunications: new methods an results for class a and class b noise
  models,'' \emph{IEEE Transactions on Information Theory}, vol.~45, no.~4, pp.
  1129--1149, 1999.

\bibitem{yin2016propagation}
X.~Yin and X.~Cheng, \emph{Propagation Channel Characterization, Parameter
  Estimation and Modelling for Wireless Communications}.\hskip 1em plus 0.5em
  minus 0.4em\relax Wiley, IEEE Press, 2016.

\bibitem{buldygin}
\BIBentryALTinterwordspacing
V.~Buldygin and Y.~Kozachenko, \emph{Metric Characterization of Random
  Variables and Random Processes}, ser. Cross Cultural Communication.\hskip 1em
  plus 0.5em minus 0.4em\relax American Mathematical Soc., 2000. [Online].
  Available: \url{https://books.google.de/books?id=ePDXvIhdEjoC}
\BIBentrySTDinterwordspacing

\bibitem{wainwright}
M.~J. Wainwright, \emph{High-Dimensional Statistics: A Non-Asymptotic
  Viewpoint}, ser. Cambridge Series in Statistical and Probabilistic
  Mathe\-matics.\hskip 1em plus 0.5em minus 0.4em\relax Cambridge University
  Press, 2019.

\bibitem{vershynin}
R.~Vershynin, \emph{High Dimensional Probability: An Introduction with
  Applications in Data Science}, ser. Cambridge Series in Statistical and
  Probabilistic Mathematics.\hskip 1em plus 0.5em minus 0.4em\relax Cambridge
  University Press, 2018, vol.~47.

\bibitem{zhang2018non}
A.~Zhang and Y.~Zhou, ``On the non-asymptotic and sharp lower tail bounds of
  random variables,'' \emph{arXiv preprint arXiv:1810.09006v3}, 2018.

\bibitem{yang2019federated}
Q.~Yang, Y.~Liu, T.~Chen, and Y.~Tong, ``Federated machine learning: Concept
  and applications,'' \emph{ACM Transactions on Intelligent Systems and
  Technology (TIST)}, vol.~10, no.~2, pp. 1--19, 2019.

\bibitem{mergen2006type}
G.~Mergen and L.~Tong, ``Type based estimation over multiaccess channels,''
  \emph{IEEE Transactions on Signal Processing}, vol.~54, no.~2, pp. 613--626,
  2006.

\bibitem{mergen2007asymptotic}
G.~Mergen, V.~Naware, and L.~Tong, ``Asymptotic detection performance of
  type-based multiple access over multiaccess fading channels,'' \emph{IEEE
  Transactions on Signal Processing}, vol.~55, no.~3, pp. 1081--1092, 2007.

\bibitem{steinwart}
I.~Steinwart and A.~Christmann, \emph{Support Vector Machines}, ser.
  Information Science and Statistics.\hskip 1em plus 0.5em minus 0.4em\relax
  Springer, 2008.

\bibitem{christmann2012consistency}
A.~Christmann and R.~Hable, ``Consistency of support vector machines using
  additive kernels for additive models,'' \emph{Computational Statistics \&
  Data Analysis}, vol.~56, no.~4, pp. 854--873, 2012.

\bibitem{mohri}
M.~Mohri, A.~Rostamizadeh, and A.~Talwalkar, \emph{Foundations of Machine
  Learning}, ser. Adaptive Computation and Machine Learning.\hskip 1em plus
  0.5em minus 0.4em\relax MIT Press, 2012.

\bibitem{agrawal2019scalable}
N.~{Agrawal}, M.~{Frey}, and S.~{Stańczak}, ``A scalable max-consensus
  protocol for noisy ultra-dense networks,'' in \emph{2019 IEEE 20th
  International Workshop on Signal Processing Advances in Wireless
  Communications (SPAWC)}, July 2019, pp. 1--5.

\bibitem{scikit-learn}
F.~e.~a. Pedregosa, ``Scikit-learn: Machine learning in {P}ython,''
  \emph{Journal of Machine Learning Research}, vol.~12, pp. 2825--2830, 2011.

\bibitem{ferreia2010power}
H.~Ferreira, L.~Lampe, J.~Newbury, and T.~G. Swart, Eds., \emph{Power Line
  Communications. Theory and Applications for Narrowband and Broadband
  Communications over Power Lines}.\hskip 1em plus 0.5em minus 0.4em\relax
  Wiley, 2010.

\bibitem{bhatia}
R.~Bhatia, \emph{Matrix analysis}.\hskip 1em plus 0.5em minus 0.4em\relax
  Springer Science \& Business Media, 1997, vol. 169.

\bibitem{boucheron2013concentration}
S.~Boucheron, G.~Lugosi, and P.~Massart, \emph{Concentration Inequalities. A
  Nonasymptoitic Theory of Independence.}\hskip 1em plus 0.5em minus
  0.4em\relax Oxford University Press, 2013.

\bibitem{bhatia1989comparing}
R.~Bhatia, M.-D. Choi, and C.~Davis, ``Comparing a matrix to its off-diagonal
  part,'' in \emph{The Gohberg Anniversary Collection}.\hskip 1em plus 0.5em
  minus 0.4em\relax Springer, 1989, pp. 151--164.

\bibitem{wang2020wireless}
Z.~Wang, Y.~Shi, Y.~Zhou, H.~Zhou, and N.~Zhang, ``Wireless-powered
  over-the-air computation in intelligent reflecting surface aided iot
  networks,'' \emph{IEEE Internet of Things Journal}, 2020.

\end{thebibliography}

\clearpage

%
%
\renewcommand\appendixname{Supplemental Material}
\appendix

\subsection{Preliminaries on \rev{Sub-Gaussian} and Sub-Exponential Random Variables}
\label{app:sub-exp-sub-gauss}
We begin with a definition that is adapted from~\cite[Definition 3.4.1]{vershynin}. For $\reals^\rvDimension$-valued random variables $\generalRandomVector$, we define the \rev{sub-Gaussian} norm as
\begin{multline}
\subgaussnorm{\generalRandomVector}
:=
\inf\Big\{
  \subgaussdefInfVar
  :
  \forall \subgaussdefUnitVector \in \sphere{\rvDimension}~
  \forall \mgfVariable \in \reals~
  \\
  \Expectation \exp(\mgfVariable\innerprod{\generalRandomVector}{\subgaussdefUnitVector})
  \leq
  \exp\left(\frac{\subgaussdefInfVar^2\mgfVariable^2}{2}\right)
\Big\}
\end{multline}
and we observe that if all entries of $\generalRandomVector$ have a \rev{sub-Gaussian} norm bounded by $\subgaussBound$ and are independent, we have for any $\subgaussdefUnitVector \in \sphere{\rvDimension}$
\begin{align*}
\Expectation \exp\left(\mgfVariable \innerprod{\generalRandomVector}{\subgaussdefUnitVector}\right)
&=
\Expectation \exp\left(\mgfVariable \sum_{\rvDimIndex=1}^\rvDimension \generalRandomVector_\rvDimIndex \subgaussdefUnitVector_\rvDimIndex\right)
\displaybreak[0]\\
&=
\prod_{\rvDimIndex=1}^\rvDimension \Expectation \exp\left(\mgfVariable \generalRandomVector_\rvDimIndex \subgaussdefUnitVector_\rvDimIndex\right)
\displaybreak[0]\\
&\leq
\prod_{\rvDimIndex=1}^\rvDimension \exp\left(\frac{\subgaussBound^2\subgaussdefUnitVector_\rvDimIndex^2\mgfVariable^2}{2}\right)
=
\exp\left(\frac{\subgaussBound^2 \mgfVariable^2}{2}\right)
\end{align*}
and therefore $\subgaussnorm{\generalRandomVector} \leq \subgaussBound$.

In the following, we recall some basic definitions and results from \cite[Chapter 1]{buldygin}. For a  random variable $X$ we define\footnote{Note that as with our definition of the \rev{sub-Gaussian} norm, other norms on the space of sub-exponential random variables that appear in the literature are equivalent to $\subexpnorm{\cdot}$ (see, e.g.,~\cite{buldygin}). The particular definition we choose here matters, however, because we want to derive results in which no unspecified constants appear.}
\begin{equation}\label{eq:sub-exp-norm-def}
\subexpnorm{X} := \sup\limits_{k \ge 1} \left( \frac{\mathbb{E}(|X|^k)}{k!} \right)^\frac{1}{k}
\end{equation}
If $ \subexpnorm{X}< \infty  $ then $X$ is called a sub-exponential random variable. $\subexpnorm{\cdot}$ defines a semi-norm on the vector space of sub-exponential 
random variables \cite[Remark 1.3.2]{buldygin}. Typical examples of sub-exponential random variables are bounded random variables and random variables with exponential distribution.
We collect some useful  properties of and interrelations between the sub-exponential and \rev{sub-Gaussian} norms in the following lemma.
%
%
\begin{lemma}\label{lemma:sub-exp-properties}
Let $X,Y$ be random variables. Then:
\begin{enumerate}
\item If $X$ is $\mathcal{N}(\mu, \sigma^2)$ then we have
\begin{equation}\label{eq:gaussian-bound}
\subgaussnorm{X} = \sigma.
\end{equation}
\item (Rotation Invariance) If $X_1, \dots, X_M$ are independent, \rev{sub-Gaussian} and centered, we have
\begin{equation}\label{eq:subgauss-rotinv}
\subgaussnorm{\sum\limits_{m=1}^M X_m}^2 \leq \sum\limits_{m=1}^M \subgaussnorm{X_m}^2
\end{equation}
\item If $X$ is a random variable with $| X |\le 1$ with probability $1$ and if $Y$ is independent of $X$ and \rev{sub-Gaussian} then we have
\begin{equation}\label{eq:bounded-12-bound}
\subgaussnorm{X \cdot Y}\le \subgaussnorm{Y}.
\end{equation}
\item If $X$ and $Y$ are \rev{sub-Gaussian} and centered, then $X\cdot Y$ is sub-exponential and
\begin{equation}\label{eq:c-s}
\subexpnorm{X \cdot Y} \le 2 \cdot \subgaussnorm{X} \cdot \subgaussnorm{Y}.
\end{equation}
\item (Centering) If $X$ is sub-exponential and $X \geq 0$ almost surely, then
\begin{equation}\label{eq:centering}
\subexpnorm{X- \mathbb{E}(X)} \leq \subexpnorm{X}.
\end{equation}
\end{enumerate}
\end{lemma}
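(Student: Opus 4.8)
The plan is to establish the five items separately, since they are essentially independent, and items 1--3 reduce to short moment-generating-function computations while items 4 and 5 carry the real weight. For item 1, I would note that if $X\sim\mathcal{N}(\mu,\sigma^2)$ then $X-\mathbb{E}X\sim\mathcal{N}(0,\sigma^2)$ has the exact MGF $\mathbb{E}\exp(\lambda(X-\mathbb{E}X))=\exp(\lambda^2\sigma^2/2)$, so the defining inequality in \eqref{eq:sub-gauss-norm-def-first} holds precisely for $t\ge\sigma$; the infimum $\sigma$ is attained, giving $\subgaussnorm{X}=\sigma$. For item 2, writing $S=\sum_{m}X_m$ and using independence together with centering, $\mathbb{E}\exp(\lambda S)=\prod_{m}\mathbb{E}\exp(\lambda X_m)\le\prod_{m}\exp(\lambda^2\subgaussnorm{X_m}^2/2)=\exp(\lambda^2(\sum_{m}\subgaussnorm{X_m}^2)/2)$, and reading off the definition with $t^2=\sum_{m}\subgaussnorm{X_m}^2$ yields the claim.

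For item 3 I would condition on $X$. Since $Y$ is centered and independent of $X$ we have $\mathbb{E}(XY)=\mathbb{E}X\,\mathbb{E}Y=0$, so for each realization $X=x$ with $\abs{x}\le1$ the defining bound applied to $Y$ gives $\mathbb{E}_Y\exp(\lambda x Y)\le\exp(\lambda^2 x^2\subgaussnorm{Y}^2/2)\le\exp(\lambda^2\subgaussnorm{Y}^2/2)$; averaging over $X$ preserves this inequality, so $\subgaussnorm{XY}\le\subgaussnorm{Y}$. Here the centering of $Y$ is essential (with $Y$ a nonzero constant and $X$ a random sign the bound would fail), which is exactly why this item is invoked only for the mean-zero fading and noise.

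Item 4 is the first substantive step. The natural route is Cauchy--Schwarz on the moments, $\mathbb{E}\abs{XY}^k\le(\mathbb{E}X^{2k})^{1/2}(\mathbb{E}Y^{2k})^{1/2}$, combined with a sharp bound on the even moments of a centered sub-Gaussian variable, namely the Gaussian domination $\mathbb{E}X^{2k}\le\frac{(2k)!}{2^k k!}\subgaussnorm{X}^{2k}$. Granting this, $\mathbb{E}\abs{XY}^k\le\frac{(2k)!}{2^k k!}\subgaussnorm{X}^k\subgaussnorm{Y}^k$, so $(\mathbb{E}\abs{XY}^k/k!)^{1/k}\le\big(\binom{2k}{k}2^{-k}\big)^{1/k}\subgaussnorm{X}\subgaussnorm{Y}\le2\subgaussnorm{X}\subgaussnorm{Y}$ by $\binom{2k}{k}\le4^k$, and taking the supremum over $k\ge1$ gives $\subexpnorm{XY}\le2\subgaussnorm{X}\subgaussnorm{Y}$. \textbf{The main obstacle is precisely this even-moment domination}: it does \emph{not} follow from comparing the two MGF power series coefficientwise, since pointwise domination of power series with nonnegative coefficients does not force coefficientwise domination. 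The sharp estimate therefore has to be imported from the sub-Gaussian theory of \cite{buldygin}; an elementary tail-integral argument only yields $\mathbb{E}X^{2k}\le2\cdot2^k k!\,\subgaussnorm{X}^{2k}$, which loses the sharp constant $2$.

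For item 5, since $\subexpnorm{\cdot}$ is a supremum of normalized $k$-th moment roots, it suffices to show $\mathbb{E}\abs{X-\mathbb{E}X}^k\le\subexpnorm{X}^k\,k!$ for every $k$. A symmetrization step is convenient: with $X'$ an independent copy, Jensen gives $\mathbb{E}\abs{X-\mathbb{E}X}^k\le\mathbb{E}\abs{X-X'}^k$, and for nonnegative $X,X'$ one has the pointwise inequality $\abs{X-X'}^k\le\abs{X^k-(X')^k}$, which reduces the problem to mean-absolute-difference estimates for $X^k$. \textbf{The delicate point here is extracting the sharp constant $1$}: the seemingly natural per-moment comparison $\mathbb{E}\abs{X-\mathbb{E}X}^k\le\mathbb{E}X^k$ is already false at $k=1$, so one cannot argue termwise and must instead exploit the supremum structure of the norm (the maximizing index $k$ for the centered variable need not coincide with that for $X$), as carried out in \cite{buldygin}.
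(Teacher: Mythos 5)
Your items 1--3 are correct and essentially the paper's own arguments: item 1 by computing the Gaussian MGF, item 2 by the product-of-MGFs computation (the paper just cites \cite[Lemma 1.1.7]{buldygin} for this), item 3 by conditioning on $X$. Your parenthetical on item 3 is moreover a genuine correction: (\ref{eq:bounded-12-bound}) really does need $\E Y = 0$ (take $Y\equiv 1$ and $X$ a random sign), a hypothesis the statement omits, although every use the paper makes of it has centered $Y$.

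Item 4 is where you have a genuine gap. Your route (Cauchy--Schwarz on moments) differs from the paper's, which normalizes and uses the pointwise bound $\abs{XY}\le\tfrac12 X^2+\tfrac12 Y^2$, the triangle inequality for $\subexpnorm{\cdot}$, and the diagonal estimate $\subexpnorm{X^2}\le 2\subgaussnorm{X}^2$. The problem is the even-moment bound you propose to import, $\E X^{2k}\le\frac{(2k)!}{2^k k!}\subgaussnorm{X}^{2k}$: this exact Gaussian-moment domination is \emph{false} for the MGF-defined norm, so it cannot be found in \cite{buldygin}. For a counterexample at $k=2$, perturb the standard Gaussian density $\phi$ to $(1+\eps h)\phi$, where $h$ is the bounded (Schwartz) function obtained by Gaussian deconvolution of $g(\lambda)=-(\lambda^2-\lambda^4+\lambda^6)e^{-\lambda^2/4}$; then $\E e^{\lambda X}=e^{\lambda^2/2}\bigl(1+\eps g(\lambda)\bigr)\le e^{\lambda^2/2}$ for every $\lambda$, so $\subgaussnorm{X}\le 1$, while $\E X^4 = 3+\eps\bigl(6g''(0)+g^{(4)}(0)\bigr)=3+18\eps>3$. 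What the paper actually imports from \cite[Lemma 1.1.4]{buldygin} is the weaker bound $\E X^{2k}\le 2(2k/e)^k\subgaussnorm{X}^{2k}$, and the good news is that your decomposition closes with it: Cauchy--Schwarz gives $\E\abs{XY}^k\le 2(2k/e)^k\subgaussnorm{X}^k\subgaussnorm{Y}^k$, and the elementary inequality $2k^k\le e^k k!$ (the same one the paper proves by induction) then yields $\bigl(\E\abs{XY}^k/k!\bigr)^{1/k}\le 2\subgaussnorm{X}\subgaussnorm{Y}$, i.e., (\ref{eq:c-s}). So your route is viable, but only after replacing the false lemma by Buldygin's true one --- at which point it uses exactly the same two ingredients as the paper, rearranged.

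On item 5 your proof is incomplete --- after symmetrization you defer the constant-$1$ statement to \cite{buldygin} without an argument --- but your diagnosis of the obstruction is exactly right, and it is in fact fatal: the $k=1$ failure is not absorbed by the supremum structure, it refutes the statement. Take $\Probability(X=4)=1/4$, $\Probability(X=0)=3/4$, so $\E X=1$ and $\E X^k=4^{k-1}$; then $\subexpnorm{X}=\sup_{k\ge1}(4^{k-1}/k!)^{1/k}=\sqrt{2}$ (attained at $k=2$), while $\subexpnorm{X-\E X}\ge\E\abs{X-1}=3/2>\sqrt{2}$. So (\ref{eq:centering}) is false as stated, and no completion of your sketch (or any other) exists without weakening the constant or adding hypotheses. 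Notably, the paper's own proof founders on precisely the point you flagged: its pointwise inequality $a^k-\abs{a-1}^k>a-1$ is valid for $k\ge 2$ but false for $k=1$ and $a>2$. This also means the downstream application of (\ref{eq:centering}), namely bounding $\subexpnorm{R^2-\E R^2}\le\subexpnorm{R^2}\le 2\subgaussnorm{R}^2$ inside the proof of Lemma~\ref{lemma:splitting}, has to be re-derived with a corrected constant (e.g., via the triangle inequality $\subexpnorm{R^2-\E R^2}\le\subexpnorm{R^2}+\E R^2\le 3\subgaussnorm{R}^2$).
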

\begin{proof}
(\ref{eq:gaussian-bound}) follows in a straightforward fashion by calculating the moment generating function of $X$. (\ref{eq:subgauss-rotinv}) is e.g. proven in \cite[Lemma 1.1.7]{buldygin}. (\ref{eq:bounded-12-bound}) follows directly from the definition conditioning on $X$. We show (\ref{eq:c-s}) first for $X=Y$. In this case, we have
\begin{align*}
\subexpnorm{X^2} &= \sup_{k \geq 1} \left(\frac{\mathbb{E}X^{2k}}{k!}\right)^{\frac{1}{k}} \leq \sup_{k \geq 1} \left( \frac{2^{k+1}k^k\subgaussnorm{X}^{2k}}{e^k k!} \right)^\frac{1}{k} \nonumber \\
&= 2 \subgaussnorm{X}^2 \sup_{k \geq 1} \left(\frac{2^\frac{1}{k} k}{e (k!)^\frac{1}{k}}\right) \leq 2 \subgaussnorm{X}^2,
\end{align*}
where the first inequality is by \cite[Lemma 1.1.4]{buldygin} and the second follows from $2k^k/k! \leq e^k$, which is straightforward to prove for $k\ge 1$ by induction. In the general case, we have
\pagebreak[0]
\begin{align*}
\subexpnorm{XY} &= \subgaussnorm{X}\subgaussnorm{Y}\subexpnorm{\frac{XY}{\subgaussnorm{X}\subgaussnorm{Y}}}\displaybreak[0]\\
&\leq \subgaussnorm{X}\subgaussnorm{Y} \subexpnorm{\frac{1}{2} \left(\frac{X}{\subgaussnorm{X}}\right)^2 + \frac{1}{2} \left(\frac{Y}{\subgaussnorm{Y}}\right)^2 } \\
&\leq 2 \subgaussnorm{X}\subgaussnorm{Y},
\end{align*}
where the first inequality can be verified in (\ref{eq:sub-exp-norm-def}), considering that $ab \leq a^2/2 + b^2/2$ for all $a,b \in \reals$, and the second inequality follows from the triangle inequality and the special case $X=Y$.

For (\ref{eq:centering}), we assume without loss of generality $\mathbb{E} X = 1$ (otherwise we can scale $X$), and note that for all $a \in [0,\infty)$ and $k \geq 1$, $a^k-|a-1|^k > a-1$ and thus
$
\mathbb{E}(X^k - |X-1|^k) \geq \mathbb{E}(X-1) = 0.
$
\end{proof}
\subsection{Proof of Lemma~\ref{lemma:hw-offdiagonal}}
\label{appendix-hw}
The proof closely follows the proof of the Hanson-Wright inequality in~\cite[Theorem 6.2.1]{vershynin}. We carry out the changes that are necessary to arrive at explicit constants. To this end, we begin with some slightly modified versions of lemmas used as ingredients in the proof of Bernstein's inequality in~\cite[Theorem 1.5.2]{buldygin}.
%
%
%
\begin{lemma}\label{lemma:app-1}
Let $X$ be a random variable with $\E(X)=0$ and $\subexpnorm{X}< +\infty$. For any $\lambda \in\reals$ with $| \lambda \subexpnorm{X} |<1$ we have
\begin{equation*}
\E (\exp(\lambda X))\le 1+ |\lambda|^2 \subexpnorm{X}^2 \cdot \frac{1}{1- |\lambda \subexpnorm{X}|}.
\end{equation*}
\end{lemma}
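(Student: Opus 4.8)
The plan is to expand the exponential as a power series and bound each term using the moment control that is built into the sub-exponential norm. The structure of the argument is entirely elementary once the defining inequality of $\subexpnorm{\cdot}$ is read off correctly.

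First I would write $\E(\exp(\lambda X)) = \sum_{k=0}^\infty \frac{\lambda^k \E(X^k)}{k!}$, which requires justifying the interchange of expectation and summation. This is the one place where care is needed: I would invoke Tonelli on the series of absolute values, observing that the definition (\ref{eq:sub-exp-norm-def}) yields $\E(|X|^k) \le k! \subexpnorm{X}^k$ for every $k \ge 1$, so that $\sum_{k \ge 0} \frac{|\lambda|^k \E(|X|^k)}{k!}$ is dominated term-by-term by the geometric series $\sum_{k \ge 0} |\lambda \subexpnorm{X}|^k$, which converges precisely because of the hypothesis $|\lambda \subexpnorm{X}| < 1$. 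This finiteness licenses the swap and makes the termwise manipulation below rigorous.

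Next I would isolate the low-order terms. The $k=0$ term equals $1$, and the $k=1$ term is $\lambda \E(X) = 0$ by the centering assumption $\E(X)=0$. For the remaining indices $k \ge 2$, the moment bound gives $|\E(X^k)| \le \E(|X|^k) \le k! \subexpnorm{X}^k$, hence $\left|\frac{\lambda^k \E(X^k)}{k!}\right| \le |\lambda \subexpnorm{X}|^k$, so that $\E(\exp(\lambda X)) \le 1 + \sum_{k=2}^\infty |\lambda \subexpnorm{X}|^k$. Summing the geometric tail, $\sum_{k=2}^\infty |\lambda \subexpnorm{X}|^k = \frac{|\lambda \subexpnorm{X}|^2}{1 - |\lambda \subexpnorm{X}|} = |\lambda|^2 \subexpnorm{X}^2 \cdot \frac{1}{1 - |\lambda \subexpnorm{X}|}$, which is exactly the claimed upper bound.

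The only genuine obstacle here is the Fubini/Tonelli justification for exchanging $\E$ and the infinite sum; everything downstream is a geometric-series computation. That obstacle is mild and dissolves immediately once the bound $\E(|X|^k) \le k! \subexpnorm{X}^k$ is invoked, since it simultaneously supplies the dominating summable majorant and the per-term estimate. I therefore expect the proof to be short, with the centering of the $k=1$ term and the convergence radius $|\lambda \subexpnorm{X}| < 1$ being the two hypotheses that do all the real work.
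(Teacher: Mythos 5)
Your proposal is correct and follows essentially the same route as the paper's proof: expand the moment generating function as a power series, drop the $k=1$ term by centering, bound $\E(|X|^k) \le k!\,\subexpnorm{X}^k$ via the definition of the sub-exponential norm, and sum the resulting geometric series under the hypothesis $|\lambda \subexpnorm{X}| < 1$. The only difference is that you explicitly justify the interchange of expectation and summation via Tonelli, which the paper leaves implicit -- a welcome bit of added rigor, not a different argument.
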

\begin{proof} Let $\lambda \in \reals$ satisfy $| \lambda \subexpnorm{X} |<1$.
Then
\begin{align}
&\hphantom{{}={}}
\E (\exp(\lambda X))= 1+ \sum_{k=2}^{\infty} \frac{\lambda ^k \E(X^k)}{k!}\nonumber \displaybreak[0] \\
&\le 1+ \sum_{k=2}^{\infty} \frac{|\lambda| ^k \E(|X|^k)}{k!}\nonumber 
\le 1+ \sum_{k=2}^{\infty} |\lambda|^k \subexpnorm{X}^k\nonumber \displaybreak[0] \\
&= 1+ |\lambda |^2 \subexpnorm{X}^2\left ( \sum_{k=0}^{\infty} |\lambda \subexpnorm{X}  |^k \right)\nonumber\\
&= 1+ |\lambda|^2 \subexpnorm{X}^2 \cdot \frac{1}{1- |\lambda \subexpnorm{X}|},
\end{align}
where in the last line we have used $| \lambda \subexpnorm{X} |<1$.
\end{proof}
In the next lemma we derive an exponential bound depending on $\subexpnorm{X}$ on the moment generating function of the random variable $X$.
%
%
%
\begin{lemma}\label{lemma:app-2}
Let $X$ be a random variable with $\E(X)=0$ and $\subexpnorm{X}< +\infty$. For any $c\in (0,1)$ and 
$\lambda \in \left (  -\frac{c}{\subexpnorm{X}} ,  \frac{c}{\subexpnorm{X}} \right)$ we have
\begin{equation*}
\E (\exp(\lambda X))\le \exp\Bigl(\frac{\lambda^2}{2}\frac{2 \cdot  \subexpnorm{X}^2}{1-c}\Bigr).
\end{equation*}
\end{lemma}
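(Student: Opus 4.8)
The plan is to derive this bound directly from Lemma~\ref{lemma:app-1} together with two elementary inequalities, so the proof will be short. First I would observe that the hypothesis on $\lambda$ guarantees we may apply Lemma~\ref{lemma:app-1}: since $c \in (0,1)$ and $\lambda \in \left(-c/\subexpnorm{X}, c/\subexpnorm{X}\right)$, we have $\abs{\lambda \subexpnorm{X}} < c < 1$, which is exactly the condition under which Lemma~\ref{lemma:app-1} yields
\[
\E(\exp(\lambda X)) \le 1 + \abs{\lambda}^2 \subexpnorm{X}^2 \cdot \frac{1}{1-\abs{\lambda \subexpnorm{X}}}.
\]

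Next I would bound the denominator from below using $\abs{\lambda \subexpnorm{X}} < c$, so that $1 - \abs{\lambda \subexpnorm{X}} > 1 - c$ and hence $\frac{1}{1-\abs{\lambda \subexpnorm{X}}} \le \frac{1}{1-c}$. Substituting this gives
\[
\E(\exp(\lambda X)) \le 1 + \lambda^2 \subexpnorm{X}^2 \cdot \frac{1}{1-c}.
\]
Finally I would invoke the elementary inequality $1 + t \le \exp(t)$, valid for all real $t$, with $t = \lambda^2 \subexpnorm{X}^2/(1-c) \ge 0$, to conclude
\[
\E(\exp(\lambda X)) \le \exp\left(\frac{\lambda^2 \subexpnorm{X}^2}{1-c}\right) = \exp\left(\frac{\lambda^2}{2} \cdot \frac{2\subexpnorm{X}^2}{1-c}\right),
\]
which is the claimed bound.

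There is no genuine obstacle here; the result is essentially a repackaging of Lemma~\ref{lemma:app-1} into the exponential (sub-Gaussian-type) form needed downstream. The only point requiring a moment of care is the chain of constraints on $\lambda$: the slightly stronger restriction $\abs{\lambda \subexpnorm{X}} < c$ (rather than merely $< 1$) is precisely what is needed to replace the $\lambda$-dependent factor $\frac{1}{1-\abs{\lambda\subexpnorm{X}}}$ by the $\lambda$-independent constant $\frac{1}{1-c}$, so that the bound is quadratic in $\lambda$ and therefore usable as an exponential moment bound in the later concentration arguments (e.g.\ in Lemma~\ref{lemma:splitting}).
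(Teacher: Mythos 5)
Your proof is correct and follows exactly the same route as the paper's own proof: apply Lemma~\ref{lemma:app-1} (valid since $\abs{\lambda \subexpnorm{X}} < c < 1$), bound $\frac{1}{1-\abs{\lambda\subexpnorm{X}}} \le \frac{1}{1-c}$, and finish with $1+x \le \exp(x)$ for $x \ge 0$. There is nothing to correct or add.
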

\begin{proof}
For $\lambda \in \left (  -\frac{c}{\subexpnorm{X}} ,  \frac{c}{\subexpnorm{X}} \right)$ we have
\begin{equation}\label{eq:app-1}
| \lambda \subexpnorm{X}    |< c<1,
\end{equation}
therefore by Lemma \ref{lemma:app-1}
\begin{align*}
&\hphantom{{}={}}
\E (\exp(\lambda X)) \le 1+ |\lambda|^2 \subexpnorm{X}^2 \cdot \frac{1}{1- |\lambda \subexpnorm{X}|}\nonumber\displaybreak[0]\\
&\le  1+ |\lambda|^2 \subexpnorm{X}^2 \cdot \frac{1}{1- c}
\le \exp\Bigl( \frac{\lambda^2}{2}\frac{2 \cdot \subexpnorm{X}^2}{1-c}     \Bigr),
\end{align*}
where in the second line we have used the first inequality in (\ref{eq:app-1})  and the last line is by the numerical inequality $1+x \le \exp (x)$ valid for $x\ge 0$.
\end{proof}
%
%
%
\begin{lemma}\label{lemma:app-3}
Let $X_1, \ldots , X_M$ be independent random variables with $\E(X_i)=0$ and $\subexpnorm{X_i}<+\infty$, $i=1,\ldots ,M$. 
Let $L:=\max_{1\le i\le M} \subexpnorm{X_i}$, $c\in (0,1)$, and $\lambda \in \left (  -\frac{c}{L} ,  \frac{c}{L} \right)$.
Then for $S_M:= \sum_{i=1}^M X_i$ we have
\begin{equation}\label{eq:app-2}
\E (\exp(\lambda S_M))\le \exp \Bigl(  \frac{\lambda^2}{2}\frac{2 \cdot \sum_{i=1}^M \subexpnorm{X_i}^2}{1-c}  \Bigr).
\end{equation}
\end{lemma}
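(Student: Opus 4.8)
The plan is to reduce the bound on the moment generating function of the sum $S_M$ to the single-variable bound already established in Lemma~\ref{lemma:app-2}, exploiting independence to turn the expectation of a product into a product of expectations. First I would write $\E(\exp(\lambda S_M)) = \E\left(\prod_{i=1}^M \exp(\lambda X_i)\right) = \prod_{i=1}^M \E(\exp(\lambda X_i))$, where the factorization is justified by the independence of $X_1, \dots, X_M$ (the integrands are nonnegative and, by Lemma~\ref{lemma:app-2}, each factor is finite, so Tonelli applies).

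The next step is to apply Lemma~\ref{lemma:app-2} to each factor, and this is the one point that requires care. That lemma applies to $X_i$ for $\lambda$ in the interval $(-c/\subexpnorm{X_i}, c/\subexpnorm{X_i})$, whereas we are only given $\lambda \in (-c/L, c/L)$; the subtlety is to verify that the single common constant $c$ works simultaneously for all summands. Since $L = \max_{1 \le i \le M} \subexpnorm{X_i} \ge \subexpnorm{X_i}$, we have $c/L \le c/\subexpnorm{X_i}$, so the prescribed range of $\lambda$ is contained in the admissible range for each $X_i$. Hence Lemma~\ref{lemma:app-2} yields $\E(\exp(\lambda X_i)) \le \exp\left(\frac{\lambda^2}{2}\frac{2\subexpnorm{X_i}^2}{1-c}\right)$ for every $i$. (This tacitly assumes $\subexpnorm{X_i} > 0$; if $\subexpnorm{X_i} = 0$ then $X_i = 0$ almost surely, and that factor equals $1 = \exp(0)$, so it may simply be dropped from the product without affecting the bound.)

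Finally I would multiply the $M$ per-factor bounds, so that the exponents add: $\prod_{i=1}^M \exp\left(\frac{\lambda^2}{2}\frac{2\subexpnorm{X_i}^2}{1-c}\right) = \exp\left(\frac{\lambda^2}{2}\frac{2\sum_{i=1}^M \subexpnorm{X_i}^2}{1-c}\right)$, which is exactly the claimed estimate (\ref{eq:app-2}), completing the proof.

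I do not expect any serious obstacle here: the argument is the standard tensorization of moment generating functions under independence. The only thing genuinely worth flagging is the uniformization of the admissibility interval for $\lambda$ across all summands via the common bound $L$, which is precisely why the maximum $L$ (rather than the individual norms $\subexpnorm{X_i}$) appears in the hypothesis of the lemma.
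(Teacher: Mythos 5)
Your proof is correct and is essentially the paper's own argument: factor the moment generating function by independence and apply Lemma~\ref{lemma:app-2} to each factor, using $c/L \le c/\subexpnorm{X_i}$ to see that the common $\lambda$-range is admissible for every summand. The paper states this in two lines; your additional remarks (Tonelli for the factorization, the degenerate case $\subexpnorm{X_i}=0$) are harmless refinements of the same route.
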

\begin{proof}
By independence of $X_1, \ldots , X_M$ we have
\begin{equation*}
\E (\exp(\lambda S_M))= \prod_{i=1}^M \E (\exp(\lambda X_i)).
\end{equation*}
Combining this with Lemma \ref{lemma:app-2} proves the lemma.
\end{proof}
The next lemma establishes the basic tail bound for random variables satisfying inequalities of type (\ref{eq:app-2}).
The proof can be found in \cite[Lemma 1.4.1]{buldygin}.
%
%
%
\begin{lemma}\label{lemma:app-4}
Let $X$ be a random variable with $\E(X)=0$. If there exist $\tau\ge 0$ and $\Lambda >0$ such that
\begin{equation*}
\E (\exp(\lambda X))\le \exp\Bigl(\frac{\lambda^2}{2}\tau^2\Bigr),
\end{equation*}
holds for all $\lambda\in (-\Lambda, \Lambda)$,
then for any $t\ge 0$ we have
\begin{equation*}
\mathbb{P}(|X |\ge t)\le 2 \cdot Q(t),
\end{equation*}
where
\begin{equation*}
Q(t)=  
\begin{cases}   \exp \left (- \frac{t^2}{2 \tau^2}  \right ), & 0< t\le \Lambda \tau^2 \\
         \exp\left (     - \frac{\Lambda t}{2}    \right), & \Lambda  \tau^2\le t .
\end{cases}
\end{equation*}
\end{lemma}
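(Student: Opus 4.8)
The plan is to apply the Cramér--Chernoff method, i.e., Markov's inequality to the exponential $\exp(\lambda X)$, then optimize the free parameter $\lambda$ over the admissible range, treating the two tails separately and combining them with a union bound.

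First I would bound the upper tail. For any $\lambda \in (0,\Lambda)$, Markov's inequality applied to the nonnegative random variable $\exp(\lambda X)$ together with the hypothesized moment generating function bound gives
\begin{equation*}
\mathbb{P}(X \ge t) = \mathbb{P}(\exp(\lambda X) \ge \exp(\lambda t)) \le \exp(-\lambda t)\, \E(\exp(\lambda X)) \le \exp\Bigl(-\lambda t + \tfrac{\lambda^2}{2}\tau^2\Bigr).
\end{equation*}
Since the left-hand side does not depend on $\lambda$, I may take the infimum of the right-hand side over $\lambda \in (0,\Lambda)$. The exponent $-\lambda t + \lambda^2\tau^2/2$ is a convex parabola in $\lambda$ whose unconstrained minimizer is $\lambda^\ast = t/\tau^2$ (when $\tau > 0$).

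Next I would split into the two cases that produce $Q(t)$. If $0 < t \le \Lambda\tau^2$, then $\lambda^\ast = t/\tau^2 \le \Lambda$ lies in the admissible range, so choosing $\lambda = t/\tau^2$ yields the exponent $-t^2/(2\tau^2)$ and the bound $\exp(-t^2/(2\tau^2))$. If instead $t \ge \Lambda\tau^2$, the parabola is decreasing on all of $(0,\Lambda)$, so its infimum over the open interval is approached as $\lambda \uparrow \Lambda$; by continuity of the right-hand side in $\lambda$ I obtain $\mathbb{P}(X \ge t) \le \exp(-\Lambda t + \Lambda^2\tau^2/2)$. The inequality $\Lambda\tau^2 \le t$ gives $\Lambda^2\tau^2/2 \le \Lambda t/2$, hence $-\Lambda t + \Lambda^2\tau^2/2 \le -\Lambda t/2$ and thus the bound $\exp(-\Lambda t/2)$.

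Finally I would repeat the argument verbatim for $-X$, which satisfies the same hypothesis because the admissibility interval $(-\Lambda,\Lambda)$ is symmetric, in order to control $\mathbb{P}(X \le -t) = \mathbb{P}(-X \ge t)$ by the identical quantity $Q(t)$. The union bound $\mathbb{P}(|X|\ge t) \le \mathbb{P}(X \ge t) + \mathbb{P}(-X \ge t) \le 2Q(t)$ then finishes the proof. The only point beyond a routine Chernoff computation is the use of the endpoint $\lambda = \Lambda$ despite the interval being open, which I resolve by the continuity/limit argument above; the degenerate case $\tau = 0$ (in which the first branch is vacuous and in fact $X = 0$ almost surely) is absorbed into the same second-branch estimate, since there the exponent reduces to $-\Lambda t \le -\Lambda t/2$.
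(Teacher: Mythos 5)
Your proof is correct and is essentially the same argument as the one the paper relies on: the paper does not prove this lemma itself but defers to \cite[Lemma 1.4.1]{buldygin}, whose proof is precisely this Chernoff (exponential Markov) bound with the minimizing $\lambda = t/\tau^2$ used when it lies in the admissible range and the boundary value $\lambda \uparrow \Lambda$ used otherwise, followed by the two-sided union bound. Your handling of the open-interval endpoint by continuity and of the degenerate case $\tau = 0$ is sound, so there is no gap.
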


The following lemma is a slightly modified version of~\cite[Lemma 6.2.3]{vershynin}.
\begin{lemma}[Comparison Lemma]
\label{lemma:comparison}
Let $\generalRandomVector$ and $\generalRandomVector'$ be independent, $\reals^\rvDimension$-valued, centered and \rev{sub-Gaussian} random variables, and let $\gaussianRandomVector, \gaussianRandomVector'$ be independent and distributed according to $\mvNormal{0}{\identityMatrix_\rvDimension}$. Let $\generalTransformMatrix \in \reals^{\rvDimension \times \rvDimension}$ and $\mgfVariable \in \reals$. Then
\begin{align*}
\Expectation \exp(
  \mgfVariable \transpose{\generalRandomVector} \generalTransformMatrix \generalRandomVector'
)
\leq
\Expectation \exp(
  \mgfVariable \subgaussnorm{\generalRandomVector} \subgaussnorm{\generalRandomVector'} \transpose{\gaussianRandomVector} \generalTransformMatrix \gaussianRandomVector'.
)
\end{align*}
\end{lemma}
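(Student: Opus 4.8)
The plan is to mimic the proof of~\cite[Lemma 6.2.3]{vershynin}, carrying out the Gaussian substitution in two successive steps — first replacing $\generalRandomVector'$, then $\generalRandomVector$ — but keeping all constants exact by invoking the vector sub-Gaussian norm of Section~\ref{app:sub-exp-sub-gauss} rather than a per-coordinate bound. First I would condition on $\generalRandomVector$ and view the bilinear form as a linear functional of $\generalRandomVector'$, namely $\transpose{\generalRandomVector} \generalTransformMatrix \generalRandomVector' = \innerprod{a}{\generalRandomVector'}$ with $a := \transpose{\generalTransformMatrix} \generalRandomVector$. On the event $a \neq 0$, normalizing $a/\euclidnorm{a} \in \sphere{\rvDimension}$ and applying the definition of $\subgaussnorm{\generalRandomVector'}$ to the scaled parameter $\mgfVariable \euclidnorm{a}$ yields
\[
\Expectation_{\generalRandomVector'} \exp\!\left( \mgfVariable \innerprod{a}{\generalRandomVector'} \right)
\leq
\exp\!\left( \frac{\subgaussnorm{\generalRandomVector'}^2 \mgfVariable^2 \euclidnorm{a}^2}{2} \right),
\]
with the case $a = 0$ trivial.

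The crucial observation is that this upper bound is \emph{exactly} the Gaussian moment generating function: since $\innerprod{a}{\gaussianRandomVector'} \sim \mvNormal{0}{\euclidnorm{a}^2}$, we have $\Expectation_{\gaussianRandomVector'} \exp(\mgfVariable \subgaussnorm{\generalRandomVector'} \innerprod{a}{\gaussianRandomVector'}) = \exp(\subgaussnorm{\generalRandomVector'}^2 \mgfVariable^2 \euclidnorm{a}^2 / 2)$. Hence for every realization of $\generalRandomVector$ the conditional expectation over $\generalRandomVector'$ is dominated by the conditional expectation over $\gaussianRandomVector'$ of $\exp(\mgfVariable \subgaussnorm{\generalRandomVector'} \transpose{\generalRandomVector} \generalTransformMatrix \gaussianRandomVector')$; taking expectation over $\generalRandomVector$, and using that $\generalRandomVector$ and $\generalRandomVector'$ are independent so that the joint expectation factors, replaces $\generalRandomVector'$ by $\subgaussnorm{\generalRandomVector'} \gaussianRandomVector'$. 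I would then repeat the identical argument with the roles reversed: condition on $\gaussianRandomVector'$, set $b := \generalTransformMatrix \gaussianRandomVector'$ so that $\transpose{\generalRandomVector} \generalTransformMatrix \gaussianRandomVector' = \innerprod{b}{\generalRandomVector}$, and apply the same bound — now with parameter $\mgfVariable \subgaussnorm{\generalRandomVector'}$ and the norm $\subgaussnorm{\generalRandomVector}$ — to replace $\generalRandomVector$ by $\subgaussnorm{\generalRandomVector} \gaussianRandomVector$. Composing the two substitutions gives the claimed inequality.

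I do not expect a genuine analytic obstacle here; the work is entirely in the bookkeeping. The only subtle points are that the definition of $\subgaussnorm{\cdot}$ is stated for unit directions in $\sphere{\rvDimension}$, which forces the explicit normalization of $a$ (respectively $b$) together with the separate trivial treatment of the degenerate case, and that one must recognize the exact match between the sub-Gaussian bound and the Gaussian moment generating function so that \emph{no} spurious multiplicative constant is introduced. This exactness is the point for the present work, since it is what allows the explicit constants in the Hanson--Wright bound of Lemma~\ref{lemma:hw-offdiagonal} to be tracked. Note finally that independence is used only between $\generalRandomVector$ and $\generalRandomVector'$, to factor the joint expectation; any dependence among the entries within each vector is fully absorbed into the vector sub-Gaussian norm, which is why no assumption on the internal structure of $\generalRandomVector$ or $\generalRandomVector'$ is required.
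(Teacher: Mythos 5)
Your proposal is correct and follows essentially the same route as the paper's proof: the key observation that the vector sub-Gaussian MGF bound coincides \emph{exactly} with the Gaussian MGF (so no constant is lost), combined with two successive conditioning/Fubini steps to swap each sub-Gaussian vector for its Gaussian surrogate. The only difference is cosmetic --- you replace $\generalRandomVector'$ first and then $\generalRandomVector$, while the paper does it in the opposite order --- which is immaterial by symmetry.
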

\begin{proof}
We first observe that for any $\generalDeterministicVector \in \reals^\rvDimension$,
\begin{align}
&\hphantom{{}={}}
\Expectation(
  \exp(\mgfVariable\innerprod{\generalRandomVector}{\generalDeterministicVector})
)
=
\Expectation\left(
  \exp\left(
    \mgfVariable
    \euclidnorm{\generalDeterministicVector}
    \innerprod{\generalRandomVector}{\frac{\generalDeterministicVector}{\euclidnorm{\generalDeterministicVector}}}
  \right)
\right)
\displaybreak[0]
\nonumber
\\
&\leq
\exp\Bigl(
  \frac{
         \mgfVariable^2
         \euclidnorm{\generalDeterministicVector}^2
         \subgaussnorm{\generalRandomVector}^2
       }
       {2}
\Bigr)
=
\Expectation\Bigl(
  \exp\bigl(
    \mgfVariable
    \subgaussnorm{\generalRandomVector}
    \innerprod{\gaussianRandomVector}{\generalDeterministicVector}
  \bigr)
\Bigr),
\label{eq:comparison-gaussianmgf}
\end{align}
where the inequality in (\ref{eq:comparison-gaussianmgf}) is by the
definition of vector-valued \rev{sub-Gaussian} random variables and the
equality is obtained by calculating the moment-generation function of $\innerprod{\gaussianRandomVector}{\generalDeterministicVector}$.
We can now conclude the proof from the following:
\begin{align}
&\hphantom{{}={}}
\Expectation(\exp(\mgfVariable \transpose{\generalRandomVector} \generalTransformMatrix \generalRandomVector'))
\nonumber\\
&=
\label{eq:comparison-fubini1}
\Expectation_{\generalRandomVector'} (
  \Expectation_{\generalRandomVector} (
    \exp(\mgfVariable \innerprod{\generalRandomVector}{\generalTransformMatrix \generalRandomVector'})
  )
)
\displaybreak[0]
\\
&\leq
\label{eq:comparison-obs1}
\Expectation_{\generalRandomVector'} (
  \Expectation_{\gaussianRandomVector} (
    \exp(\mgfVariable \subgaussnorm{\generalRandomVector} \innerprod{\gaussianRandomVector}{\generalTransformMatrix \generalRandomVector'})
  )
)
\displaybreak[0]
\\
&=
\label{eq:comparison-fubini2}
\Expectation_{\gaussianRandomVector} (
  \Expectation_{\generalRandomVector'} (
    \exp(\mgfVariable \subgaussnorm{\generalRandomVector} \innerprod{\generalRandomVector'}{\transpose{\generalTransformMatrix}\gaussianRandomVector})
  )
)
\displaybreak[0]
\\
&\leq
\label{eq:comparison-obs2}
\Expectation_{\gaussianRandomVector} (
  \Expectation_{\gaussianRandomVector'} (
    \exp(\mgfVariable \subgaussnorm{\generalRandomVector} \subgaussnorm{\generalRandomVector'} \innerprod{\gaussianRandomVector'}{\transpose{\generalTransformMatrix}\gaussianRandomVector})
  )
)
\\
&=
\label{eq:comparison-fubini3}
\Expectation (
  \exp(\mgfVariable \subgaussnorm{\generalRandomVector} \subgaussnorm{\generalRandomVector'} \transpose{\gaussianRandomVector} \generalTransformMatrix \gaussianRandomVector')
),
\end{align}
where (\ref{eq:comparison-fubini1}), (\ref{eq:comparison-fubini2}) and (\ref{eq:comparison-fubini3}) are due to Fubini's theorem and elementary transformations and (\ref{eq:comparison-obs1}) and (\ref{eq:comparison-obs2}) are both instances of the observation (\ref{eq:comparison-gaussianmgf}).
\end{proof}

\begin{proof}[Proof of Lemma~\ref{lemma:hw-offdiagonal}]
We can write
\begin{align}
\transpose{\generalRandomVector} \generalTransformMatrix \generalRandomVector
=
\sum_{\substack{\rvDimIndex, \rvDimIndexAlt = 1, \rvDimIndex \neq \rvDimIndexAlt}}^\rvDimension
  \generalRandomVector_\rvDimIndex
  \generalTransformMatrix_{\rvDimIndex,\rvDimIndexAlt}
  \generalRandomVector_\rvDimIndexAlt,
\end{align}
and since $\generalRandomVector$ is centered, $\Expectation\left(\transpose{\generalRandomVector} \generalTransformMatrix \generalRandomVector\right) = 0$ immediately follows. Let $\generalRandomVector'$ be an independent copy of $\generalRandomVector$, and let $\gaussianRandomVector$ and $\gaussianRandomVector'$ be independently distributed according to $\mvNormal{0}{\identityMatrix_\rvDimension}$.
We denote the singular values of $\generalTransformMatrix$ with $\generalTransformMatrixSingular_1, \dots, \generalTransformMatrixSingular_\rvDimension$. With these definitions, we bound the moment-generating function of $\transpose{\generalRandomVector} \generalTransformMatrix \generalRandomVector$ as
\begin{align}
\Expectation \exp\left(\mgfVariable \transpose{\generalRandomVector} \generalTransformMatrix \generalRandomVector \right)
&=
\Expectation \exp\left(\mgfVariable \transpose{\generalRandomVector} \generalTransformMatrix \generalRandomVector\right)
\displaybreak[0]
\\
&\leq
\label{eq:hw-decoupling}
\Expectation \exp\left(4 \mgfVariable \transpose{\generalRandomVector} \generalTransformMatrix \generalRandomVector'\right)
\displaybreak[0]
\\
&\leq
\label{eq:hw-comparison}
\Expectation \exp\left(4 \mgfVariable \subgaussnorm{\generalRandomVector}^2 \transpose{\gaussianRandomVector} \generalTransformMatrix \gaussianRandomVector'\right)
\displaybreak[0]
\\
&=
\label{eq:hw-singular-transform}
\Expectation \exp\Bigl(4 \mgfVariable \subgaussnorm{\generalRandomVector}^2 \sum_{\rvDimIndex=1}^\rvDimension \gaussianRandomVectorTransformed_\rvDimIndex \gaussianRandomVectorTransformed'_\rvDimIndex \generalTransformMatrixSingular_\rvDimIndex\Bigr)
\displaybreak[0]
\\
&\leq
\label{eq:hw-calculatemgf}
\exp\Bigl(
  \frac{\mgfVariable^2}
       {2}
  \cdot
  \frac{128 \subgaussnorm{\generalRandomVector}^4 \sum_{\rvDimIndex=1}^\rvDimension \generalTransformMatrixSingular_\rvDimIndex^2}
       {1-\auxVariable}
\Bigr),
\end{align}
where (\ref{eq:hw-decoupling}) is due to the Decoupling
Theorem~\cite[Theorem 6.1.1]{vershynin}, (\ref{eq:hw-comparison}) is
an application of Lemma~\ref{lemma:comparison},
(\ref{eq:hw-singular-transform}) holds for suitably transformed
versions $\gaussianRandomVectorTransformed,
\gaussianRandomVectorTransformed'$ of $\gaussianRandomVector,
\gaussianRandomVector'$ (note that they are still independent and
follow the same distribution) and (\ref{eq:hw-calculatemgf}) is true
if $\auxVariable \in (0,1)$ and $\abs{\mgfVariable} < \auxVariable /
(8 \subgaussnorm{\generalRandomVector}^2 \max_{1 \leq \rvDimIndex \leq
  \rvDimension} \generalTransformMatrixSingular_\rvDimIndex)$
according to Lemma~\ref{lemma:app-3}. So we can apply
Lemma~\ref{lemma:app-4} to obtain
\begin{align}
\label{eq:hw-auxtailprob}
\Probability\left(
  \absolute{
    \transpose{\generalRandomVector} \generalTransformMatrix \generalRandomVector
  }
  \geq
  \tail
\right)
\leq
2\exp\Bigl(
  -\frac{\tail^2(1-\auxVariable)}
        {256 \subgaussnorm{\generalRandomVector}^4 \sum_{\rvDimIndex=1}^\rvDimension \generalTransformMatrixSingular_\rvDimIndex^2}
\Bigr)
\end{align}
in case $\tail \leq \frac{\auxVariable}{1-\auxVariable} \cdot \frac{16\subgaussnorm{\generalRandomVector}^2 \sum_{\rvDimIndex=1}^\rvDimension \generalTransformMatrixSingular_\rvDimIndex^2}{\max_{1 \leq \rvDimIndex \leq \rvDimension} \generalTransformMatrixSingular_\rvDimIndex}$ and
\begin{align*}
\Probability\left(
  \absolute{
    \transpose{\generalRandomVector} \generalTransformMatrix \generalRandomVector
  }
  \geq
  \tail
\right)
\leq
2\exp\Bigl(
  -c \cdot \frac{\tail}{16 \subgaussnorm{\generalRandomVector}^2 \max\nolimits_{1 \leq \rvDimIndex \leq \rvDimension} \generalTransformMatrixSingular_\rvDimIndex}
\Bigr)
\end{align*}
otherwise. We next choose $\auxVariable$ so as to minimize the upper bound on the tail probability. Because the bound in the first case is increasing with $\auxVariable$ while it is decreasing in the second case, the optimal choice for $\auxVariable$ is where the two cases meet. We can therefore calculate the optimal $\auxVariable$ as
\begin{align*}
\auxVariable
=
\frac{\tail \max\nolimits_{1 \leq \rvDimIndex \leq \rvDimension} \generalTransformMatrixSingular_\rvDimIndex}
     {\tail \max\nolimits_{1 \leq \rvDimIndex \leq \rvDimension} \generalTransformMatrixSingular_\rvDimIndex + 16 \subgaussnorm{\generalRandomVector}^2 \sum_{\rvDimIndex=1}^\rvDimension \generalTransformMatrixSingular_\rvDimIndex^2}
\end{align*}
and substituting this in (\ref{eq:hw-auxtailprob}), we obtain
\begin{multline*}
\Probability\bigl(
  \absolute{
    \transpose{\generalRandomVector} \generalTransformMatrix \generalRandomVector
  }
  \geq
  \tail
\bigr)
\leq
\\
2\exp\Bigl(
  -
  \frac{\tail^2}
       {16 \tail \subgaussnorm{\generalRandomVector}^2 \max\limits_{1 \leq \rvDimIndex \leq \rvDimension} \generalTransformMatrixSingular_\rvDimIndex + 256 \subgaussnorm{\generalRandomVector}^4 \sum_{\rvDimIndex=1}^\rvDimension \generalTransformMatrixSingular_\rvDimIndex^2}
\Bigr).
\end{multline*}
The bounds $\subgaussnorm{\generalRandomVector} \leq \subgaussBound$, $\abs{\generalTransformMatrixSingular_\rvDimIndex} \leq \operatornorm{\generalTransformMatrix}$, and identity $\frobeniusnorm{\generalTransformMatrix}^2 = \sum_{\rvDimIndex=1}^\rvDimension \generalTransformMatrixSingular^2$ allow us to conclude the proof of the lemma.
\end{proof}
\end{document}